%% file: paper.tex
\pdfoutput=1
\documentclass[onecolumn,12pt]{IEEEtran}

\usepackage{amsmath,amssymb,amsthm,mathtools,dsfont}
\usepackage{graphicx}
\usepackage{booktabs}
\usepackage{tikz}
\usetikzlibrary{arrows.meta,positioning}
\usepackage[colorlinks=true,allcolors=blue]{hyperref}
\usepackage[capitalise]{cleveref}

\theoremstyle{plain}
\newtheorem{theorem}{Theorem}
\newtheorem{lemma}[theorem]{Lemma}
\newtheorem{corollary}[theorem]{Corollary}
\newtheorem{proposition}[theorem]{Proposition}
\theoremstyle{definition}
\newtheorem{definition}[theorem]{Definition}
\newtheorem{remark}[theorem]{Remark}
\newtheorem{example}[theorem]{Example}

\DeclareMathOperator*{\argmax}{arg\,max}
\DeclareMathOperator*{\argmin}{arg\,min}

\newcommand{\cA}{{\mathcal A}}\newcommand{\cB}{{\mathcal B}}\newcommand{\cC}{{\mathcal C}}
\newcommand{\cD}{{\mathcal D}}\newcommand{\cE}{{\mathcal E}}\newcommand{\cF}{{\mathcal F}}
\newcommand{\cH}{{\mathcal H}}
\newcommand{\cM}{{\mathcal M}}
\newcommand{\cN}{{\mathcal N}}\newcommand{\cP}{{\mathcal P}}
\newcommand{\cS}{{\mathcal S}}\newcommand{\cT}{{\mathcal T}}
\newcommand{\cU}{{\mathcal U}}
\newcommand{\cX}{{\mathcal X}}\newcommand{\cY}{{\mathcal Y}}

\newcommand{\mR}{{\mathbb R}}

\newcommand{\bbm}{\mathbf{m}}\newcommand{\bbW}{\mathbf{W}}\newcommand{\bbC}{\mathbf{C}}
\newcommand{\bbQ}{\mathbf{Q}}\newcommand{\bbU}{\mathbf{U}}

\newcommand{\BRA}[1]{\left( #1 \right)}

\newcommand{\BRAi}[1]{\left\langle #1 \right\rangle}
\newcommand{\BRAs}[1]{\left\{ #1 \right\}}
\newcommand{\abs}[1]{\left| #1 \right|}
\newcommand{\norm}[1]{\lVert #1 \rVert}

\newcommand{\PR}[1]{\mathbb{P}\left\{ #1 \right\}}
\newcommand{\PRs}[2]{\mathbb{P}_{#1}\left\{ #2 \right\}}
\newcommand{\E}[1]{\mathbb{E}\left( #1 \right)}
\newcommand{\Es}[2]{\mathbb{E}_{#1}\left( #2 \right)}

\newcommand{\Unif}[1]{\cU\left( #1 \right)}
\newcommand{\uU}{\Unif{[0,1]}}
\newcommand{\Ind}[1]{\mathds{1}_{\BRAs{#1}}}

\newcommand{\PEP}[2]{p_e\BRA{#1 \mid #2}}

\newcommand{\PEPU}[3]{p_e\BRA{#1 \mid #2; #3}}

\newcommand{\Fspec}[2]{F\!\BRA{#1\,;\,#2}}

\newcommand{\RCUp}{\mathrm{RCU}^{+}}

\newcommand{\MMI}{\mathrm{MMI}}

\begin{document}

\title{Universal Decoding via the Pairwise Error Probability}

\author{Nir~Elkayam~and~Meir~Feder%
\thanks{The authors are with the Department of Electrical Engineering--Systems,
Tel Aviv University, Tel Aviv, Israel (e-mail: nir.elkayam@gmail.com; meir@eng.tau.ac.il).}%
\thanks{This paper builds on the pairwise-error-probability framework
of the companion paper~\cite{elkayampep1}.}}

\maketitle

\begin{abstract}
We develop a theory of universal decoding on the pairwise-error-probability (PEP)
primitive of a companion paper. The starting point is that the PEP and its
error spectrum are defined for an \emph{arbitrary} decoding metric, and
that, while raw metric values across a family share no common scale, the PEP
supplies one. Universal
decoding --- decoding well simultaneously against a whole family of metrics or
channels --- therefore becomes the problem of \emph{merging} a family of
per-metric spectra into a single decoding rule. We construct such a rule from a
clipped inverse-PEP statistic. The construction rests on a Kraft-type
inequality for decoding, valid for any input prior: per output, every metric
canonicalizes into a conditional probability assignment on the codewords, and
the merge is the normalized-maximum-likelihood envelope of the induced family.
We prove the rule is random-coding universal (it loses
only a vanishing rate relative to the best metric in the family, against every
channel), show it is an asymptotic minimax/equalizer rule, and derandomize it:
over a subexponential channel family a single deterministic code inherits the
guarantee. Specialized
to discrete memoryless channels the construction reduces to types and recovers
the maximum-mutual-information decoder, its tilted variant for non-uniform
memoryless input, and finite-state universal decoders. It extends to a
decoding rule with an erasure option (deterministically, uniformly over erasure
margins), and, via a discretization of separable
metric families, to continuous-alphabet channels: for AWGN with deterministic
interference it attains the matched-ML exponent up to an explicit typical-set
cap, and for ISI channels it attains the best exponent in a family of
equalize-and-decode rules, under explicit assumptions on the equalizers and
the channel spectrum and the same cap. Throughout, universality is
obtained as a corollary of the PEP analysis rather than as a separate theory.
\end{abstract}

\begin{IEEEkeywords}
Universal decoding, mismatched decoding, pairwise error probability, Kraft
inequality, normalized maximum likelihood, maximum mutual information,
minimax regret, error exponents, erasure decoding, finite blocklength.
\end{IEEEkeywords}

\input{sections/01-intro}
\input{sections/02-framework}
\input{sections/03-universal-metric}
\input{sections/04-dmc}
\input{sections/05-erasure}

\input{sections/06-awgn}

\input{sections/07-conclusion}
\input{sections/08-appendix}

\bibliographystyle{IEEEtran}
\bibliography{refs}

\end{document}

%% file: sections/01-intro.tex
\section{Introduction}\label{sec:intro}

A universal decoder must decode well without knowing the channel, or without
committing to a single decoding metric: it competes, at every blocklength,
against the best decoder in a family. The landmarks of this program are the
maximum-mutual-information (MMI) decoder of
Goppa~\cite{goppa1975nonprobabilistic}, Csisz\'ar~\cite{csiszar1974extremum},
and Csisz\'ar--K\"orner~\cite{csiszar1981information} for the
family of discrete memoryless channels, the finite-grid constructions of Feder
and Lapidoth~\cite{feder1998universal} for separable channel families, the
competitive-minimax approach of Feder and
Merhav~\cite{feder2002competitive,akirav2007competitive},
and their relatives. This paper develops universal decoding directly on the
pairwise-error-probability (PEP) primitive of~\cite{elkayampep1}.

A decoder is invariant to monotone transformations of its metric: only the
order of the scores matters, and their numerical values carry no meaning of
their own. Raw metric values from \emph{different} family members are
therefore mutually incommensurable, which is exactly why the classical route
of merging by weighted likelihood mixtures is sensitive to each member's
arbitrary scale and needs weighting or grid conditions to control it. The PEP
canonicalizes each metric first --- order-preservingly, so member-by-member
decoding is unchanged --- into a probability whose value is intrinsic: the
probability that a random competitor outranks the candidate at the observed
output. The merge then compares numbers that mean the same thing for every
member, and the only price is a subexponential union term.

This program runs on the observation that the PEP and its error spectrum are
defined for an \emph{arbitrary} decoding metric (\cref{sec:recap}). A family
of metrics $\cM=\{m_\theta\}$ therefore induces a family of spectra; a
universal decoder is a single metric whose spectrum is, up to a vanishing rate
loss, as good as the best $m_\theta$ for every channel in the family. We
construct such a metric by carrying out the canonicalize-then-merge step
above: a clipped inverse-PEP statistic pools the per-metric pairwise
comparisons into one decision rule (\cref{sec:univ-metric}).
Underlying the construction is a Kraft-type inequality for decoding
(\cref{rem:uc-kraft}): relative to any input prior --- discrete or continuous
--- every metric canonicalizes, per output, into a conditional probability
assignment on the codewords. Universal decoding thereby becomes universal
prediction over the induced assignments, with the merge as their
normalized-maximum-likelihood envelope.

The merge metric is shown to be \emph{random-coding universal}: its
random-coding exponent matches that of the best metric in the family,
simultaneously for every channel. It is also an asymptotic minimax/equalizer
rule. By expurgation the guarantee derandomizes, so that over a subexponential
channel family a single deterministic code inherits it (\cref{sec:framework}).

Specialized to discrete memoryless channels the construction reduces to the
method of types and recovers the MMI decoder, its tilted variant, and
finite-state universal decoders (\cref{sec:dmc}). It extends to a decoder with an
erasure option (\cref{sec:erasure}). Through a discretization of
\emph{separable} metric families, it reaches continuous-alphabet channels ---
matched-ML universality up to a typical-set cap for AWGN with deterministic
interference, and, for ISI, the best exponent in a family of
equalize-and-decode rules, under explicit assumptions on the equalizers and
the channel spectrum and the same cap (\cref{sec:awgn}). In every case universality is a
corollary of the PEP analysis: the same spectrum that drove achievability and
converse in~\cite{elkayampep1} now drives the comparison across a family.

\section{Recap of the PEP Primitive}\label{sec:recap}

We work at blocklength $n$ with alphabets $\cX,\cY$; sequences are
$x^n\in\cX^n$, $y^n\in\cY^n$, and a (per-blocklength) decoding metric is
an extended-real-valued $m^n:\cX^n\times\cY^n\to[-\infty,\infty]$ (the
convention is fixed in \cref{subsec:uc-notation}). Codes have $M_n=e^{nR}$ messages,
with codewords drawn i.i.d.\ from a prior $Q_X^n$. For a fixed $y^n$ and candidate $x^n$, with ties
broken lexicographically by an i.i.d.\ dither $U\sim\uU$, the dithered PEP is
$\PEPU{x^n}{y^n}{u}\triangleq Q_X^n\{m^n(\cdot,y^n)>m^n(x^n,y^n)\}+u\,
Q_X^n\{m^n(\cdot,y^n)=m^n(x^n,y^n)\}$, and $\PEP{x^n}{y^n}\triangleq\PEPU{x^n}{y^n}{U}$.
We use two facts, valid for \emph{any} metric: fixed-output
uniformity of the PEP, proved in~\cite{elkayampep1} via a randomized
probability integral transform, and the spectral exponent reading of the
random-coding functional, proved by a short slicing argument in
Appendix~\ref{app:exponent-reading}.

\begin{proposition}[PEP uniformity~{\cite{elkayampep1}}]\label{prop:pep-uniform}
For every fixed $y^n$, if $(X^n,U)\sim Q_X^n\times\uU$ then
\begin{equation*}
  \PEP{X^n}{y^n}\sim\uU.
\end{equation*}
\end{proposition}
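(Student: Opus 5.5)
The plan is a randomized probability integral transform argument. Fix $y^n$ and set $S(x^n)\triangleq m^n(x^n,y^n)$. Let $\mu$ be the law of $S(X^n)$ on $[-\infty,\infty]$ when $X^n\sim Q_X^n$, and define the upper tail $\bar G(s)\triangleq\mu\{(s,\infty]\}$ and the atom mass $\mu(\{s\})$. By definition,
\begin{equation*}
\PEP{X^n}{y^n}=\bar G(S)+U\,\mu(\{S\}), \qquad S=S(X^n).
\end{equation*}
This depends on $X^n$ only through the scalar $S$. So it is enough to show that $V\triangleq\bar G(S)+U\mu(\{S\})$ is uniform when $S\sim\mu$ and $U\sim\uU$ are independent. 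Measurability is harmless: $m^n(\cdot,y^n)$ is a measurable function (the convention of \cref{subsec:uc-notation}), and $\bar G$ is monotone.

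I will show $\PR{V\le t}=t$ for each $t\in[0,1]$. Let $s^*\triangleq\inf\{s:\bar G(s)\le t\}$. Because $\bar G$ is nonincreasing and right-continuous, $\bar G(s^*)\le t$. I also claim $\bar G(s^*)+\mu(\{s^*\})=\mu[s^*,\infty]\ge t$. This holds because $\bar G(s)>t$ for every $s<s^*$, and $\mu[s^*,\infty]=\lim_{s\uparrow s^*}\bar G(s)$. The edge case $s^*=-\infty$ is handled the same way, since the extended-real convention makes $\mu[-\infty,\infty]=1$.

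Next I split the event $\{V\le t\}$ according to the value of $S$.
\begin{itemize}
\item If $S>s^*$, then $V\le\bar G(S)+\mu(\{S\})=\mu[S,\infty]\le\bar G(s^*)\le t$. So $V\le t$ always holds, and this region contributes $\mu(s^*,\infty]=\bar G(s^*)$.
\item If $S<s^*$, then $V\ge\bar G(S)>t$, so this region contributes nothing. (If $V=t$ occurs on this boundary, it is a null event.)
\item If $S=s^*$, then $V\le t$ exactly when $U\le (t-\bar G(s^*))/\mu(\{s^*\})$. The threshold lies in $[0,1]$ by the two inequalities above. Hence this region contributes exactly $t-\bar G(s^*)$, and when $\mu(\{s^*\})=0$ both quantities vanish.
\end{itemize}
Summing the three contributions gives $\PR{V\le t}=\bar G(s^*)+(t-\bar G(s^*))=t$, which proves the uniformity.

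The step needing the most care is the boundary case $S<s^*$. There I must check that $\bar G(S)>t$ holds strictly, so that no mass from below $s^*$ leaks into the event. I must also handle the infinite metric values $\pm\infty$ and the case $\mu(\{s^*\})=0$. I expect to verify these directly from monotonicity and right-continuity of $\bar G$. An alternative route avoids the case split: write $V=\mu\{S'>S\}+U\mu\{S'=S\}$ with $S'$ an independent copy of $S$. Then $V$ equals the rank of $(S,U)$ under the lexicographic order on $[-\infty,\infty]\times[0,1]$, taken with respect to the product law $\mu\times\uU$. That product law has no atoms, so the classical continuous probability integral transform applies.
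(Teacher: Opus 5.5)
Your proof is correct and is the randomized probability-integral-transform argument that the paper cites from the companion paper (no proof is given in this paper). You reduce to the scalar score $S$ and split $\{V\le t\}$ at the quantile $s^*$; the bounds $\bar G(s^*)\le t\le\mu[s^*,\infty]$, which come from right-continuity and the left limit of $\bar G$, are what is needed, and you handle the atomless and $\pm\infty$ cases.

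One small point on your optional alternative route: under the lexicographic order with $U$ increasing, the rank of $(S,U)$ is $\bar G(S)+(1-U)\,\mu(\{S\})$. So $V$ is really the rank of $(S,1-U)$, which is harmless since $1-U\sim\uU$.
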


Under the channel-induced joint law $(X^n,Y^n)\sim Q_X^n\cdot W^n$, define the
error spectrum
\begin{equation}\label{eq:recap-spectrum}
\Fspec{Q_X^n}{z}\triangleq\PR{-\tfrac1n\log\PEP{X^n}{Y^n}\le z}
\end{equation}
(the CDF of the normalized negative log-PEP). The random-coding error probability
is governed by this spectrum through an elementary identity of the companion
paper~\cite{elkayampep1}: the $\RCUp$ bound is the kernel integral
$\bar P_e\le \Fspec{Q_X^n}{R}+e^{nR}\int_R^\infty e^{-nz}\,d\Fspec{Q_X^n}{z}$ (the below-rate
spectrum mass plus the kernel integral over the tail). (\Cref{prop:uc-erc-pep} below
records the exponent-level form of the random-coding functional that this
paper uses.)

\begin{proposition}[Exponent reading]\label{prop:exponent}
For every $n$, rate $R>0$, and $\rho\ge0$,
\begin{equation}\label{eq:recap-sandwich}
  e^{-n\rho}\,\Fspec{Q_X^n}{R+\rho}
  \;\le\;
  \E{\min\BRAs{1,\,e^{nR}\,\PEP{X^n}{Y^n}}}
  \;\le\;
  e\,(n^2+1)\sup_{\rho'\ge0}e^{-n\rho'}\Fspec{Q_X^n}{R+\rho'}\;+\;e^{-n^2}.
\end{equation}
Consequently, with $E_1(z)\triangleq-\tfrac1n\log\Fspec{Q_X^n}{z}$ the
per-blocklength spectrum exponent,
\begin{equation*}
  -\tfrac1n\log\E{\min\BRAs{1,\,e^{nR}\,\PEP{X^n}{Y^n}}}
  \;=\;\inf_{\rho\ge0}\bigl\{E_1(R+\rho)+\rho\bigr\}
  \;+\;O\Bigl(\tfrac{\log n}{n}\Bigr)
\end{equation*}
whenever $\inf_{\rho\ge0}\{E_1(R+\rho)+\rho\}\le n$ --- in particular whenever
it stays bounded in $n$. Thus comparing decoders reduces to comparing their
spectra (equivalently, the exponents $E_1$) on the channel of interest.
\end{proposition}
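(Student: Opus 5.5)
Let $Z\triangleq-\tfrac1n\log\PEP{X^n}{Y^n}$ under the joint law $Q_X^n\cdot W^n\times\uU$, so that $\Fspec{Q_X^n}{z}=\PR{Z\le z}$ and the target functional is $G\triangleq\E{\min\{1,e^{n(R-Z)}\}}$. The whole argument is a slicing of the law of $Z$ against the kernel $k(z)=\min\{1,e^{n(R-z)}\}$. This kernel equals $1$ for $z\le R$, decreases in $z$, and equals $e^{-n(z-R)}$ for $z\ge R$. We never need \cref{prop:pep-uniform} for the sandwich itself: the argument holds for any random variable $Z\ge0$, and in fact $Z\in[0,\infty]$ since the PEP lies in $[0,1]$.

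\emph{Lower bound.} Since $k$ is nonincreasing and $k\ge0$, for every $\rho\ge0$ we have $G\ge \E{k(Z)\Ind{Z\le R+\rho}}\ge k(R+\rho)\,\PR{Z\le R+\rho}=e^{-n\rho}\Fspec{Q_X^n}{R+\rho}$. This is a one-line Markov-type step.

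\emph{Upper bound.} Split $G$ into three pieces:
\begin{itemize}
\item the region $Z\le R$, which contributes at most $\Fspec{Q_X^n}{R}$, the $\rho'=0$ term;
\item the region $Z>R+n$, where $k\le e^{-n^2}$, giving the additive $e^{-n^2}$;
\item the middle range $(R,R+n]$.
\end{itemize}
Cut the middle range into $n^2$ slices of width $1/n$, $\rho_j=j/n$ for $j=0,\dots,n^2$. On the slice $Z\in(R+\rho_{j},R+\rho_{j+1}]$ we have $k(Z)\le e^{-n\rho_j}=e\cdot e^{-n\rho_{j+1}}$, and the slice probability is at most $\Fspec{Q_X^n}{R+\rho_{j+1}}$. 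Each slice is therefore bounded by $e\sup_{\rho'\ge0}e^{-n\rho'}\Fspec{Q_X^n}{R+\rho'}$. Summing the $n^2$ slices together with the $Z\le R$ term, which is bounded by the same sup, gives the factor $e(n^2+1)$. The only care needed is the slice width: it must be $1/n$ so that the loss per slice is the constant $e$ and not $e^{n\cdot\text{width}}$, and the cutoff at $n$ is what makes the slice count polynomial.

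\emph{Exponent form.} Take $-\tfrac1n\log$ of both sides. Write $S\triangleq\sup_{\rho\ge0}e^{-n\rho}\Fspec{Q_X^n}{R+\rho}=e^{-n\inf_\rho\{E_1(R+\rho)+\rho\}}$.
\begin{itemize}
\item The lower bound, optimized over $\rho$, gives $G\ge S$, so $-\tfrac1n\log G\le\inf_{\rho}\{E_1+\rho\}$.
\item The upper bound gives $G\le e(n^2+1)S+e^{-n^2}$. When $\inf_{\rho}\{E_1+\rho\}\le n$ we have $S\ge e^{-n^2}$, hence $G\le (e(n^2+1)+1)S$. This yields $-\tfrac1n\log G\ge\inf_{\rho}\{E_1+\rho\}-O(\log n/n)$.
\end{itemize}

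The main obstacle is only the slicing bookkeeping: handling $Z=\infty$ (PEP $=0$), which contributes nothing since $k(\infty)=0$, and making sure the hypothesis $\inf\le n$ is exactly what lets the $e^{-n^2}$ remainder be absorbed.
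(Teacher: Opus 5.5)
Your proof is correct and follows essentially the same route as the paper's Appendix~A. Both use the restriction-to-$\{Z\le R+\rho\}$ lower bound, the same slicing of $(R,R+n]$ into $n^2$ slices of width $1/n$ (with the $Z\le R$ region as the $\rho'=0$ term and the $e^{-n^2}$ remainder beyond $R+n$), and the same use of $\inf_{\rho}\{E_1(R+\rho)+\rho\}\le n$ to absorb that remainder into the polynomial prefactor.
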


\begin{IEEEproof}[Proof sketch]
The lower bound restricts to the event $\{-\tfrac1n\log p_e\le R+\rho\}$; the
upper bound slices the log-PEP range at spacing $1/n$, each slice dominated by
the supremum with an $e\,(n^2+1)$ count and the remainder super-exponentially
negligible. The short computation is in Appendix~\ref{app:exponent-reading}.
\end{IEEEproof}

This is the exponent-scale form of the kernel-integral identity
of~\cite{elkayampep1}.

The universal-decoding problem is now stated in spectrum terms: given a family of
metrics (or channels), find one metric whose spectrum exponent is, for every
member of the family, no worse than that member's own --- up to $o(1)$ in the
rate. The rest of the paper constructs and analyzes such a metric.

%% file: sections/02-framework.tex

\providecommand{\Erc}{E_{\mathrm{rc}}}
\providecommand{\Edet}{E_{\det}}
\providecommand{\Eera}{E_{\mathrm{era}}}
\providecommand{\Eund}{E_{\mathrm{und}}}
\providecommand{\Mset}{\cM}            
\providecommand{\Cset}{\cC}            
\providecommand{\Md}{\cM_{d}}          
\providecommand{\PEPm}[3]{p_e\!\BRA{#1 \mid #2}_{#3}}  
\providecommand{\Ehat}{\widehat{E}}    
\providecommand{\Ihat}{\hat I}         

\section{Framework and Definitions}\label{sec:framework}

This section assembles the universal-decoding machinery used in the rest of
the paper. We work at blocklength $n$ with finite or measurable alphabets
$\cX,\cY$; sequences live in $\cX^n,\cY^n$, codes have rate $R>0$ and
$M_n=e^{nR}$ messages (up to subexponential factors). Throughout we reuse,
without re-derivation, the pairwise-error-probability (PEP) primitive of
the companion paper~\cite{elkayampep1} recalled in \cref{sec:recap}: in particular the
fixed-output uniformity of the PEP (\cref{prop:pep-uniform}) and the
spectrum--exponent reading (\cref{prop:exponent}). The universal-decoding
problem is to find a \emph{single} metric whose PEP spectrum is, against every
member of a given family of metrics or channels, no worse than that member's
own --- up to an $o(1)$ rate loss.

\subsection{Notation: sequences, families, and exponents}
\label{subsec:uc-notation}

Boldface denotes a sequence indexed by the blocklength:
$\bbm=\{m^n\}_{n\ge1}$ (metric), $\bbW=\{W^n\}_{n\ge1}$ (channel),
$\bbQ_X=\{Q_X^n\}_{n\ge1}$ (input prior), $\bbC=\{C^n\}_{n\ge1}$ (codebook),
and $\bbU=\{U^n\}_{n\ge1}$ (the universal/merge metric constructed in
\cref{sec:univ-metric}). Calligraphic letters $\Mset,\Cset$ denote
\emph{families} of such sequences; the \emph{section} at blocklength $n$ is
$\Mset^n\triangleq\{m^n:\bbm\in\Mset\}$, and likewise $\Cset^n$. A metric at
blocklength $n$ is a measurable function
$m^n:\cX^n\times\cY^n\to[-\infty,\infty]$, extended-real-valued throughout:
the values $\pm\infty$ are admitted --- the tilted type metrics of
\cref{sec:dmc} take the value $-\infty$ --- since maximum-metric decoding
uses only the order of the scores. The
decoding rule is maximum-metric decoding
$\hat\imath(y^n)=\argmax_i m^n(x_i^n,y^n)$ with ties counted as errors.

For a nonnegative sequence $\{a_n\}$ we write its error exponent as
\begin{equation}\label{eq:uc-exp-def}
  \Ehat(a_n)\;\triangleq\;\liminf_{n\to\infty}-\tfrac1n\log a_n\;\in\;[0,\infty],
\end{equation}
and say $a_n$ has \emph{no worse exponent than} $b_n$ if
$\Ehat(a_n)\ge\Ehat(b_n)$.

\paragraph{Tie-as-error PEP.}
The universal-decoding analysis uses the \emph{tie-as-error} variant of the
PEP of \cref{sec:recap}: for a metric $m^n$ and a pair
$(x^n,y^n)$, with a competitor $\tilde X^n\sim Q_X^n$ drawn independently,
\begin{equation}\label{eq:uc-pep-tie}
  \PEPm{x^n}{y^n}{m^n}\;\triangleq\;
  Q_X^n\BRAs{\tilde X^n:\,m^n(\tilde X^n,y^n)\ge m^n(x^n,y^n)}.
\end{equation}
This differs from the dithered PEP of \cref{sec:recap} only in the treatment
of metric ties --- here ties are charged in full rather than broken by an
auxiliary uniform dither. The two coincide whenever the PEP is atomless (the
typical case under continuous noise or generic priors); when ties carry
positive mass, \eqref{eq:uc-pep-tie} dominates the dithered PEP pointwise and
yields an everywhere-valid upper bound on the random-coding error
probability. The development below uses only this monotonicity together with
the fixed-output uniformity of \cref{prop:pep-uniform}; it does not require
any sharper property of the tie-as-error form. We record the consequence
used repeatedly (in the merge bounds and throughout the erasure analysis).

\begin{lemma}[Tie-vs-dither transfer]
  \label{lem:uc-tie-dither}
  Fix $y^n$ and a metric $m^n$, and let $\tilde X^n\sim Q_X^n$. Then for
  every $t\ge0$,
  $\PR{-\log\PEPm{\tilde X^n}{y^n}{m^n}\ge t}\le e^{-t}$, with equality for
  the dithered PEP.
\end{lemma}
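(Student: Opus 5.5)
The plan is to compare the tie-as-error PEP with the dithered PEP pointwise, and then read the tail bound off the exact uniformity of the dithered PEP from \cref{prop:pep-uniform}. Fix $y^n$ and write $p(x^n)\triangleq\PEPm{x^n}{y^n}{m^n}$ and $p(x^n;u)\triangleq\PEPU{x^n}{y^n}{u}$, where the latter is computed with the same metric $m^n$. The event $\{m^n(\cdot,y^n)\ge m^n(x^n,y^n)\}$ is the disjoint union of the strict-inequality event and the tie event. Hence
\[
p(x^n)=Q_X^n\{m^n(\cdot,y^n)>m^n(x^n,y^n)\}+Q_X^n\{m^n(\cdot,y^n)=m^n(x^n,y^n)\}\;\ge\;p(x^n;u)
\]
for every $u\in[0,1]$, with equality at $u=1$ and whenever the tie mass vanishes. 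The extended-real values $\pm\infty$ cause no difficulty here, because the sets $\{m^n(\cdot,y^n)\ge c\}$ remain measurable and the decomposition is purely order-theoretic.

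Next, let $\tilde X^n\sim Q_X^n$ and $U\sim\uU$ be independent. By \cref{prop:pep-uniform} with $X^n=\tilde X^n$, the quantity $p(\tilde X^n;U)$ is uniform on $[0,1]$. Therefore, for every $t\ge0$,
\[
\PR{-\log p(\tilde X^n;U)\ge t}=\PR{p(\tilde X^n;U)\le e^{-t}}=e^{-t},
\]
which is the claimed equality for the dithered PEP. The pointwise domination gives the event inclusion $\{p(\tilde X^n)\le e^{-t}\}\subseteq\{p(\tilde X^n;U)\le e^{-t}\}$ for every realization of $U$. Taking probabilities under the product law and noting that the left-hand event does not depend on $U$ yields $\PR{-\log p(\tilde X^n)\ge t}\le e^{-t}$.

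The argument has no real obstacle. The one point needing care is that the dithered PEP must be formed with the same metric $m^n$, with the competitor and the dither independent of each other, so that \cref{prop:pep-uniform} applies as stated. If one preferred not to invoke the proposition, there is a self-contained route for the inequality alone. Let $c^*$ be the smallest threshold $c$ at which $G(c)\triangleq Q_X^n\{m^n(\cdot,y^n)\ge c\}$ is at most $e^{-t}$. The event $\{p\le e^{-t}\}$ is contained in $\{m^n(\tilde X^n,y^n)\ge c^*\}$, up to checking whether the infimum is attained. The only delicate point there is handling the atom at $c^*$ via right-continuity, which is exactly what the dither argument avoids.
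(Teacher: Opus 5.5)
Your proof is correct and matches the paper's argument. Charging ties in full dominates the dithered PEP pointwise. That gives the event inclusion $\{\PEPm{\tilde X^n}{y^n}{m^n}\le e^{-t}\}\subseteq\{\PEP{\tilde X^n}{y^n}\le e^{-t}\}$, and \cref{prop:pep-uniform} makes the latter event have probability exactly $e^{-t}$.
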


\begin{IEEEproof}
  Charging ties in full only increases the PEP, so
  $\{\PEPm{\cdot}{y^n}{m^n}\le e^{-t}\}$ is contained in the corresponding
  dithered event, whose probability is exactly $e^{-t}$ by
  \cref{prop:pep-uniform}.
\end{IEEEproof}

\subsection{Random-coding and deterministic error exponents}
\label{subsec:uc-error-exponents}

Let $P_e(C^n,W^n,m^n)$ be the average error probability of code $C^n$ under
channel $W^n$ and metric $m^n$, with a uniform message and tie-as-error
decoding.

\begin{definition}[Deterministic and random-coding exponents]
  \label{def:uc-error-exponents}
  For a rate $R>0$, channel sequence $\bbW$, prior $\bbQ_X$, and metric
  sequence $\bbm$,
  \begin{align}
    \Edet(R;\bbW,\bbC,\bbm) &\triangleq \Ehat\bigl(P_e(C^n,W^n,m^n)\bigr),
      \label{eq:uc-e-det}\\
    \Erc(R;\bbW,\bbQ_X,\bbm) &\triangleq \Ehat\bigl(\E{P_e(C^n,W^n,m^n)}\bigr),
      \label{eq:uc-e-rc}
  \end{align}
  where in \eqref{eq:uc-e-rc} the codebook $C^n$ has $M_n=e^{nR}$ codewords
  drawn i.i.d.\ $\sim Q_X^n$ and the expectation is over the codebook, the
  uniform message, and the channel output.
\end{definition}

For the i.i.d.\ random-coding ensemble, $\Erc$ has a representation purely
through the PEP --- the single functional that drives the entire analysis.

\begin{proposition}[Random-coding exponent via the PEP]
  \label{prop:uc-erc-pep}
  For any $\bbW$, $\bbQ_X$, $\bbm$, and rate $R>0$,
  \begin{equation}\label{eq:uc-erc-pep}
    \Erc(R;\bbW,\bbQ_X,\bbm)
    \;=\;
    \Ehat\Bigl(\E{\min\BRAs{1,\,e^{nR}\,\PEPm{X^n}{Y^n}{m^n}}}\Bigr),
  \end{equation}
  with $(X^n,Y^n)\sim Q_X^n\cdot W^n$.
\end{proposition}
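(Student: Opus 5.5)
We prove \eqref{eq:uc-erc-pep} by sandwiching the ensemble error probability between constant multiples of the PEP functional. Since constant factors vanish under $\Ehat$, matching upper and lower bounds give equality of exponents. Fix $n$ and condition on the transmitted message (by symmetry, message $1$), its codeword $X^n$, and the output $Y^n$. The other $M_n-1$ codewords $\tilde X_2^n,\dots,\tilde X_{M_n}^n$ are i.i.d.\ $\sim Q_X^n$ and independent of $(X^n,Y^n)$. Under tie-as-error maximum-metric decoding, an error occurs exactly when some competitor satisfies $m^n(\tilde X_j^n,Y^n)\ge m^n(X^n,Y^n)$. Each such event has conditional probability $p\triangleq\PEPm{X^n}{Y^n}{m^n}$ by \eqref{eq:uc-pep-tie}. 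The conditional error probability is therefore exactly
\begin{equation*}
  P_{e\mid X^n,Y^n}=1-(1-p)^{M_n-1}.
\end{equation*}

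\textbf{Upper bound.} The union bound together with the trivial bound gives $1-(1-p)^{M-1}\le\min\{1,(M-1)p\}\le\min\{1,e^{nR}p\}$. Averaging over $(X^n,Y^n)\sim Q_X^n\cdot W^n$ yields $\E{P_e}\le\E{\min\{1,e^{nR}p\}}$, hence $\Erc\ge\Ehat(\cdot)$ of the right-hand side.

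\textbf{Lower bound.} We use the elementary inequality $1-(1-p)^{N}\ge(1-e^{-1})\min\{1,Np\}$ for $N\ge1$ and $p\in[0,1]$. If $Np\ge1$, then $(1-p)^N\le e^{-Np}\le e^{-1}$. If $Np<1$, the function $s\mapsto1-e^{-s}$ is concave with value $0$ at $s=0$, so $1-e^{-Np}\ge(1-e^{-1})Np$.

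Apply this with $N=M_n-1$. Assuming $M_n\ge2$, which holds for large $n$ since $R>0$, we have $N\ge M_n/2$, so $\min\{1,Np\}\ge\tfrac12\min\{1,e^{nR}p\}$. If $M_n=\lceil e^{nR}\rceil$ or is only defined up to subexponential factors, the ratio $N/e^{nR}$ is bounded below by $c\,e^{-o(n)}$. This factor still disappears in $\Ehat$. Averaging gives
\begin{equation*}
  \E{P_e}\ge\tfrac{1-e^{-1}}{2}\,e^{-o(n)}\,\E{\min\{1,e^{nR}p\}},
\end{equation*}
so $\Erc\le\Ehat(\cdot)$.

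Combining the two directions gives \eqref{eq:uc-erc-pep}. The only delicate point is the lower bound. It relies on the competitors being independent of each other and of $(X^n,Y^n)$, which lets the conditional error probability factor exactly. It also relies on the ratio $(M_n-1)/e^{nR}$ staying at least subexponential, which is where the phrase "up to subexponential factors" in the definition of $M_n$ matters. The tie-as-error convention in \eqref{eq:uc-pep-tie} is exactly what makes the conditional error event coincide with the PEP event, with no dither correction needed. No use of \cref{prop:exponent} is required; that result then converts the right-hand side into spectrum terms.
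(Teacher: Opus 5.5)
Your proof is correct and follows the paper's argument essentially line for line. You derive the exact conditional error $1-(1-p)^{M_n-1}$ from the tie-as-error PEP and the independence of the competitors, then sandwich it between constant multiples of $\min\{1,e^{nR}p\}$, and those constants vanish under $\Ehat$. The differences are cosmetic: you use the sharper constant $1-e^{-1}$ where the paper uses $\tfrac12$, and you spell out the elementary inequality and the subexponential slack in $M_n$, which the paper leaves implicit.
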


\begin{IEEEproof}
  Conditioned on $(X^n,Y^n)=(x,y)$, each of the $M_n-1$ i.i.d.\ competitors
  ties-or-beats $x$ under $m^n$ with probability exactly $\PEPm{x}{y}{m^n}$
  by \eqref{eq:uc-pep-tie}, so the conditional error probability is
  $1-(1-\PEPm{x}{y}{m^n})^{M_n-1}$. The elementary inequality
  $\tfrac12\min\{1,Np\}\le1-(1-p)^N\le\min\{1,Np\}$, with
  $N=M_n-1\in[\tfrac12e^{nR},e^{nR}]$ for $n$ large, sandwiches this by
  $\min\{1,e^{nR}\PEPm{x}{y}{m^n}\}$ up to constant factors. The constants
  vanish under $-\tfrac1n\log$ and the $\liminf$, giving
  \eqref{eq:uc-erc-pep}.
\end{IEEEproof}

All universality statements below are phrased through this functional.

\subsection{Two notions of universality}
\label{subsec:uc-universality}

We compare a candidate metric simultaneously against a family of metrics
$\Mset$ and across a family of channels $\Cset$. The two families are
\emph{independent} inputs: $\Mset$ says \emph{which decoders we compete
against}, $\Cset$ says \emph{for which channels universality must hold}.

\begin{definition}[Subexponential growth, (H-sub)]
  \label{def:uc-hsub}
  A family $\Mset$ (resp.\ $\Cset$) has \emph{subexponential growth} if its
  sections are finite and
  $\tfrac1n\log|\Mset^n|\to0$ (resp.\ $\tfrac1n\log|\Cset^n|\to0$). We write
  $|\Mset^n|=e^{o(n)}$.
\end{definition}

\begin{definition}[Universality, random-coding and deterministic]
  \label{def:uc-universality}
  Fix a prior $\bbQ_X$, families $\Mset,\Cset$, and a rate $R>0$.
  \begin{enumerate}
  \item[\textup{(rc)}] A metric sequence $\bbU$ is \emph{random-coding
    universal at rate $R$ w.r.t.\ $(\Mset,\Cset)$} if
    \begin{equation}\label{eq:uc-huniv}
      \Erc(R;\bbW,\bbQ_X,\bbU)\;\ge\;\Erc(R;\bbW,\bbQ_X,\bbm)
      \quad\text{for every }\bbW\in\Cset,\ \bbm\in\Mset .
    \end{equation}
    The \emph{fixed-channel} case is $\Cset=\{\bbW\}$. We abbreviate
    \eqref{eq:uc-huniv} as (H-univ($R$)).
  \item[\textup{(det)}] A pair $(\bbC,\bbU)$ is \emph{deterministically
    universal at rate $R$ w.r.t.\ $(\Mset,\Cset)$} if
    $\Edet(R;\bbW,\bbC,\bbU)\ge\Erc(R;\bbW,\bbQ_X,\bbm)$ for every
    $\bbW\in\Cset$ and $\bbm\in\Mset$.
  \end{enumerate}
\end{definition}

\begin{remark}[Role of the prior]
  \label{rem:uc-role-prior}
  The universal metric $\bbU$ is \emph{allowed to depend on $\bbQ_X$} --- and
  in our construction does, since $U^n$ is built from the PEPs
  $\PEPm{\cdot}{\cdot}{m^n}$, which are functionals of $Q_X^n$. The
  \emph{same} prior is used across the whole channel family $\Cset$: this is
  the natural universal setting, as the encoder does not know $\bbW$, and a
  per-channel prior would require channel knowledge at the encoder. No
  structural assumption (product, symmetry, finite support) is placed on
  $Q_X^n$; the clipping device of \cref{sec:univ-metric} removes the
  ``normal prior'' condition of the earlier merge~\cite{elkayam2014universal}.
\end{remark}

\subsection{From random-coding to deterministic universality}
\label{subsec:uc-derand}

The deterministic notion follows from the random-coding one by expurgation,
at no exponent cost, whenever the channel family is subexponential.

\begin{theorem}[Derandomization]
  \label{thm:uc-rc-to-det}
  Suppose $\Cset$ satisfies (H-sub) and $\bbU$ satisfies (H-univ($R$)).
  Then there is a code sequence $\bbC$ of rate $R$ such that $(\bbC,\bbU)$ is
  deterministically universal at rate $R$ w.r.t.\ $(\Mset,\Cset)$.
\end{theorem}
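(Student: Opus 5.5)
Fix $R$ and, for each $\bbW\in\Cset$, set $E_{\bbW}\triangleq\sup_{\bbm\in\Mset}\Erc(R;\bbW,\bbQ_X,\bbm)$. By (H-univ($R$)), $\Erc(R;\bbW,\bbQ_X,\bbU)\ge E_{\bbW}$ for every $\bbW\in\Cset$, so it suffices to build one code sequence $\bbC$ with $\Edet(R;\bbW,\bbC,\bbU)\ge\Erc(R;\bbW,\bbQ_X,\bbU)$ for every $\bbW\in\Cset$. Write $\bar P_n(W^n)\triangleq\E{P_e(C^n,W^n,U^n)}$ for the ensemble average under i.i.d.\ $Q_X^n$ codebooks with $M_n=e^{nR}$ codewords. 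The plan is a Markov/union expurgation over the finite section $\Cset^n$, in the standard Csisz\'ar--K\"orner style. Only the average error probability is needed, since \cref{def:uc-error-exponents} uses average error, so no codeword removal is required and the rate is exactly preserved.

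\emph{Step 1 (a single code good for all channels at blocklength $n$).} For each $W^n\in\Cset^n$, Markov's inequality gives $\PR{P_e(C^n,W^n,U^n)>2|\Cset^n|\,\bar P_n(W^n)}\le 1/(2|\Cset^n|)$. A union bound over the $|\Cset^n|$ members shows that with probability at least $1/2$ a random codebook satisfies
\begin{equation*}
  P_e(C^n,W^n,U^n)\le 2|\Cset^n|\,\bar P_n(W^n)\qquad\text{for all }W^n\in\Cset^n .
\end{equation*}
Fix such a codebook $C^n$ for each $n$; this defines $\bbC$. If $\bar P_n(W^n)=0$ the bound forces $P_e=0$ for that channel, which is consistent with $\Ehat=\infty$.

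\emph{Step 2 (exponents).} Since $\tfrac1n\log(2|\Cset^n|)\to0$ by (H-sub), taking $-\tfrac1n\log$ and $\liminf$ gives $\Edet(R;\bbW,\bbC,\bbU)\ge\Ehat(\bar P_n(W^n))=\Erc(R;\bbW,\bbQ_X,\bbU)\ge\Erc(R;\bbW,\bbQ_X,\bbm)$ for every $\bbm\in\Mset$. Here the equality is the definition \eqref{eq:uc-e-rc} and the last inequality is (H-univ($R$)).

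\emph{Main obstacle: bookkeeping of the indexing.} A channel sequence $\bbW\in\Cset$ contributes $W^n\in\Cset^n$ at every $n$, so the blockwise guarantee applies to its whole sequence. Different sequences may share sections, but this is harmless because the bound at each $n$ holds for all elements of the finite set $\Cset^n$ simultaneously. The one subtlety is that $\Ehat$ is a $\liminf$, so a subexponential factor that holds for every $n$ suffices and no uniformity over $\bbW$ is needed. If the exponent is infinite, the same inequality still applies termwise.
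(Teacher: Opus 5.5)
Your proof is correct and is essentially the paper's argument. You apply Markov's inequality per channel in the finite section $\Cset^n$, take a union bound, use (H-sub) to make the expurgation penalty subexponential, and invoke (H-univ($R$)) only after the exponent passage.

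The only difference is cosmetic: your threshold is $2|\Cset^n|$, which leaves at least half the codebooks good, whereas the paper uses $|\Cset^n|^2$. Both are $e^{o(n)}$, so nothing changes at the exponent level.
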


\begin{IEEEproof}
  For each $W^n\in\Cset^n$, Markov's inequality bounds the fraction of
  i.i.d.\ codebooks whose error probability exceeds $t$ times the
  random-coding mean by $1/t$. Take $t=|\Cset^n|^2$ and union-bound over
  $\Cset^n$; (H-sub) makes the $|\Cset^n|^2$ penalty subexponential, so a
  positive fraction of codebooks meets every channel-specific bound
  simultaneously. Picking one such $C^n$ for each $n$ and assembling over $n$
  gives $\bbC$; only the codebook is derandomized, the metric $\bbU$ is
  unchanged.
\end{IEEEproof}

Note that no growth condition on $\Mset$ is needed in
\cref{thm:uc-rc-to-det}: the expurgated decoder is $\bbU$ and does not
depend on the metric family, so $\Mset$ enters no expurgation constraint.

When the channel family of interest is a continuum, the expurgation above
certifies the code only on a subexponential representative test family
$\Cset_d$. For the Gaussian families of \cref{sec:awgn} the guarantee extends
from $\Cset_d$ to the full parameter continuum through a change-of-measure
device of Feder and Lapidoth, which we use by citation and do not reprove.

\begin{lemma}[Change of measure for a fixed code, Feder--Lapidoth
  {\cite[Lem.~8]{feder1998universal}}]
  \label{lem:uc-transfer}
  Let $W^n,W'^n$ be channels and $C^n$ a codebook such that for every codeword
  $x^n\in C^n$ there is a measurable $B_{x^n}\subset\cY^n$ with
  $W^n(y^n\mid x^n)\le e^{n\delta}\,W'^n(y^n\mid x^n)$ for all
  $y^n\notin B_{x^n}$, and $W^n(B_{x^n}\mid x^n)\le e^{-nE'}$. Then for
  \emph{every} decoder $\varphi$,
  \[
    P_e(C^n,W^n,\varphi)\;\le\;e^{n\delta}\,P_e(C^n,W'^n,\varphi)+e^{-nE'}.
  \]
\end{lemma}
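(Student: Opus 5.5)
The proof is a direct change of measure applied one codeword at a time. It uses no structure of the decoder beyond the fact that it induces, for each message, a conditional error probability that is the integral of an indicator (or, for randomized or tie-as-error rules, of a $[0,1]$-valued function) against the channel law.

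Fix the decoder $\varphi$ and write the code as $C^n=\{x_1^n,\dots,x_{M_n}^n\}$. For message $i$, let $e_i(y^n)\in[0,1]$ be the probability that $\varphi$ does not output $i$ on output $y^n$. For a deterministic rule this is $\Ind{\varphi(y^n)\ne i}$; ties counted as errors are covered by the same definition. Then
\[
P_e(C^n,W^n,\varphi)=\frac1{M_n}\sum_{i}\int e_i(y^n)\,W^n(dy^n\mid x_i^n),
\]
and the same formula holds with $W'^n$ in place of $W^n$. I will read the hypothesis $W^n(y^n\mid x^n)\le e^{n\delta}W'^n(y^n\mid x^n)$ as an inequality between densities with respect to a common dominating measure $\mu$ (for instance $\mu=W^n(\cdot\mid x_i^n)+W'^n(\cdot\mid x_i^n)$), holding $\mu$-a.e.\ off $B_{x_i^n}$. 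This is the only point where measure-theoretic care is needed, and it is the main, though mild, obstacle. If a chosen version of the densities violates the inequality on a $\mu$-null set, that set can be absorbed into $B_{x_i^n}$ at no cost, since it has zero $W^n$-mass.

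For each $i$, split the error integral over $B_{x_i^n}$ and its complement. On the complement, multiply the density inequality by $e_i\ge0$ and integrate:
\[
\int_{B_{x_i^n}^c} e_i\,dW^n(\cdot\mid x_i^n)\le e^{n\delta}\int_{B_{x_i^n}^c} e_i\,dW'^n(\cdot\mid x_i^n)\le e^{n\delta}\int e_i\,dW'^n(\cdot\mid x_i^n).
\]
On $B_{x_i^n}$, use $e_i\le1$ to get
\[
\int_{B_{x_i^n}} e_i\,dW^n(\cdot\mid x_i^n)\le W^n(B_{x_i^n}\mid x_i^n)\le e^{-nE'}.
\]
Adding the two pieces gives the per-message bound: the error probability of message $i$ under $W^n$ is at most $e^{n\delta}$ times its error probability under $W'^n$, plus $e^{-nE'}$.

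Finally, average over the uniform message. The factor $e^{n\delta}$ and the additive term $e^{-nE'}$ do not depend on $i$, so averaging yields $P_e(C^n,W^n,\varphi)\le e^{n\delta}P_e(C^n,W'^n,\varphi)+e^{-nE'}$. Because $\varphi$ entered only through the functions $e_i\in[0,1]$, the bound holds for every decoder. This includes the merge decoder $\bbU$ and randomized tie-breaking rules, which is what the extension from $\Cset_d$ to the continuum in \cref{sec:awgn} requires.
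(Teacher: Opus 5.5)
Your proof is correct. The paper does not reprove this lemma; it imports it by citation from Feder--Lapidoth. Your argument is the standard one for that cited result: split each codeword's output space into $B_{x^n}$ and its complement, integrate the likelihood-ratio inequality on the complement, bound the error on $B_{x^n}$ by its $W^n$-mass, and average over messages. Because you phrase the decoder through $[0,1]$-valued functions $e_i$, the same argument also covers the paper's later use of the lemma for the erasure and undetected-error events separately (\cref{sec:erasure}).
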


The finite-alphabet \cite[Lem.~5]{feder1998universal} is the special case
$B_{x^n}=\BRAs{y^n:W^n(y^n\mid x^n)\le e^{-n(E'+\log|\cY|)}}$.

\begin{remark}[From the test family to the continuum]
  \label{rem:uc-cd-to-c}
  The hypothesis of \cref{lem:uc-transfer} compares channel \emph{laws}
  codeword by codeword, which is why the conclusion holds uniformly over
  decoders --- exactly what pointwise PEP domination cannot supply, since the
  latter compares decoders under a fixed law and says nothing about a fixed
  codebook's error under a different law. Applied twice per channel, the
  lemma upgrades the deterministic guarantee from the test family $\Cset_d$
  to the full continuum, at the cost of an arbitrarily small $\delta$ and an
  exponent-negligible rare term per application, provided $E'$ exceeds the
  target exponent. The two applications run in opposite orientations: one
  carries the expurgated code's achieved error from $\bbW$ to its
  representative $\bbW_d$; the other, with the roles exchanged, carries the
  benchmark exponent from $\bbW_d$ back to $\bbW$. Each orientation needs
  its own rare set under its own dominated law (the two-sided form of
  Feder--Lapidoth's strong separability); both Gaussian verifications are
  two-sided because the underlying log-likelihood-ratio bounds are absolute
  values. For the
  Gaussian ISI family the law-closeness holds on the $O(\sqrt n)$ output
  shell, off the shell-escape event; this is Feder--Lapidoth's own
  verification~\cite{feder1998universal}, which we do not restate. For the
  finite-dimensional interference family the verification is a shift bound: a $1/n$-net
  keeps the log-likelihood gap $O(1)$ off a shell-escape event whose
  probability is $e^{-nE'}$ with $E'$ arbitrarily large in the shell
  constant (Appendix~\ref{app:rem-interf-shift}). \emph{Absent this lemma, the
  deterministic claims of this paper for continuous channel families are read
  over the subexponential test family $\Cset_d$}, as the hypotheses of
  \cref{thm:uc-rc-to-det} already require.
\end{remark}

\subsection{Two orthogonal axes}
\label{subsec:uc-axes}

Two largely independent reductions organize the development.

\paragraph{Axis 1 (derandomization).}
The random-coding notion delivers one metric $\bbU$ that, against \emph{any}
channel, attains the random-coding exponent of every $\bbm\in\Mset$
simultaneously. ``Random'' refers to the codebook, not the channel:
$\bbW$ is arbitrary, possibly adversarial. \Cref{thm:uc-rc-to-det} replaces
the random codebook by a single deterministic code $\bbC$, paying only the
subexponential test list $\Cset^n$.

\paragraph{Axis 2 (separability).}
Both the merge construction (which builds $\bbU$ from $\Mset$) and the
expurgation step (which builds $\bbC$ from $\Cset$) require their input family
to be subexponential. Many natural families --- continuous parameter spaces,
ISI tap vectors --- are not. \emph{Separability} replaces a continuous family
by a subexponential representative subfamily, in two flavors: a
\emph{type-based} reduction for discrete-structure families
(\cref{sec:dmc}), and a Lipschitz $\eta$-net for continuous parametric
families. The representative subfamily then feeds the merge and the
expurgation, both exponent-tightly. On the metric side the feed is
unconditional: the merge dominates every member of $\Mset$. On the channel
side it certifies $\Cset_d$ directly, extending to the full family where
the law-closeness hypothesis of \cref{lem:uc-transfer} is verified
(\cref{rem:uc-cd-to-c}; both Gaussian families of \cref{sec:awgn}). We
record the hypothesis used in \cref{sec:dmc,sec:erasure}.

\begin{definition}[Separability, (H-sep)]
  \label{def:uc-hsep}
  $\Mset$ has a subfamily $\Md$ satisfying (H-sub) such that, for every
  $\bbm\in\Mset$ and every $n$,
  \[
    \min_{m_d^n\in\Md^n}\PEPm{x^n}{y^n}{m_d^n}\;\le\;s_n\,\PEPm{x^n}{y^n}{m^n}
    \qquad\text{at every }(x^n,y^n),
  \]
  with slack $s_n=e^{o(n)}$ uniform over $\Mset$. (The type-based families of
  \cref{sec:dmc} satisfy this with $s_n$ polynomial or even $s_n=1$.)
\end{definition}

\begin{definition}[Approximate separability, (H-asep)]
  \label{def:uc-hasep}
  As (H-sep), but the pointwise domination is required only off a per-channel
  exceptional set $P^n\subset\cX^n\times\cY^n$ with
  $\PRs{Q_X^n\cdot W^n}{P^n}\le e^{-n(E^\star+\delta)}$ for some $\delta>0$,
  where $E^\star=\sup_{\bbm\in\Mset}\Erc(R;\bbW,\bbQ_X,\bbm)$ is the target
  exponent at the operating rate $R$. The exceptional set then contributes
  an exponent-negligible term to every random-coding bound.
\end{definition}

The two axes are orthogonal: separability is a property of the input family
alone, derandomization of how the codebook is drawn. \Cref{sec:awgn} works
with continuous-alphabet families for which even (H-asep) holds only in a
typical-set-restricted form; there the continuum is handled directly, with a
capped guarantee.

With these notions fixed, \cref{sec:univ-metric} constructs the merge metric
and proves it random-coding universal and asymptotically minimax;
\cref{sec:dmc} instantiates the construction for DMC and finite-state
families; and \cref{sec:erasure} carries it to a decoder with an erasure
option.

%% file: sections/03-universal-metric.tex

\section{The Universal Merge Metric}\label{sec:univ-metric}

This is the core of the paper. Given a subexponential family of decoding
metrics, we build a single metric --- the \emph{merge} --- from their
pairwise error probabilities, and show it is random-coding universal
(\cref{thm:univ-rc}) and asymptotically minimax (\cref{thm:univ-minimax}).
The construction canonicalizes before it merges: each metric is first
replaced by its own PEP --- order-preservingly, so member-by-member decoding
is unchanged --- and the family is then pooled on the common probability
scale (\cref{fig:uc-merge-pipeline}).
The only technical input beyond \cref{sec:framework} is a clipping lemma for
the inverse PEP, which removes the ``normal prior'' condition of the earlier
merge~\cite{elkayam2014universal}: clipping the inverse PEP makes its mean
subexponential for \emph{every} prior, channel, and metric.

Throughout, fix a prior $\bbQ_X$, a channel $\bbW$, and a rate $R>0$; unless
stated otherwise, expectations are under $(X^n,Y^n)\sim Q_X^n\cdot W^n$. A few
lemmas are stated in the \emph{fixed-$y^n$ / product} law (competitor
$\tilde X^n\sim Q_X^n$ drawn independently of $y^n$); the uniformity of
\cref{prop:pep-uniform} lives in that law only --- under the channel-joint law
$Q_X^n\cdot W^n$ the PEP is generally \emph{not} uniform but distributed as
the error spectrum \eqref{eq:recap-spectrum}. Each downstream use of these
lemmas instantiates them at the $y^n$ produced by the outer channel-joint
expectation, so no statement changes truth value.

\subsection{Clipping the inverse PEP}
\label{subsec:uc-clipping}

By the fixed-output uniformity of the PEP (\cref{prop:pep-uniform}), the
randomized PEP is uniform on $[0,1]$ at fixed $y^n$. Hence the reciprocal
$1/p$ at $p\sim\Unif{[0,1]}$ is heavy-tailed with infinite mean. Clipping the
reciprocal at any level $\alpha_n>1$ collapses the divergent tail to a finite
expectation that grows only logarithmically: the harmonic estimate
$\int_{1/\alpha_n}^{1}\mathrm dw/w=\log\alpha_n$ controls it. This is the only
fact about the inverse PEP that the merge construction uses.

\begin{lemma}[Clipped inverse PEP is subexponential]
  \label{lem:uc-clipped-inv-pep}
  For any metric $m^n$, any fixed $y^n\in\cY^n$, and any clip level
  $\alpha_n>1$, with $\tilde X^n\sim Q_X^n$,
  \begin{equation}\label{eq:uc-clipped-inv-pep}
    \Es{\tilde X^n}{\min\BRAs{\bigl(\PEPm{\tilde X^n}{y^n}{m^n}\bigr)^{-1},\,
      \alpha_n}}\;\le\;1+\log\alpha_n .
  \end{equation}
  In particular $\alpha_n=e^{n^2}$ gives a bound of $1+n^2$, subexponential
  in $n$.
\end{lemma}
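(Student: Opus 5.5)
The plan is to write the clipped expectation as a tail integral and control the tail with \cref{lem:uc-tie-dither}. Set $P\triangleq\PEPm{\tilde X^n}{y^n}{m^n}\in(0,1]$; the case $P=0$ cannot occur with positive probability, and the clip makes $\min\{P^{-1},\alpha_n\}$ well defined in any case. Write $Z\triangleq\min\{P^{-1},\alpha_n\}$. This is a nonnegative random variable, bounded by $\alpha_n$, with $Z\ge1$ because $P\le1$. The layer-cake formula then gives
\[
  \mathbb{E}(Z)=\int_0^{\alpha_n}\PR{Z>s}\,ds
  \;\le\;1+\int_1^{\alpha_n}\PR{P^{-1}>s}\,ds ,
\]
where the integral over $[0,1]$ is bounded by $1$, and for $s<\alpha_n$ we have $\{Z>s\}=\{P^{-1}>s\}$.

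For the tail, I would invoke \cref{lem:uc-tie-dither} with $t=\log s\ge0$. It gives $\PR{P\le 1/s}=\PR{-\log P\ge\log s}\le 1/s$, and hence $\PR{P^{-1}>s}\le 1/s$. Substituting this into the integral yields $\int_1^{\alpha_n}ds/s=\log\alpha_n$, which is exactly \eqref{eq:uc-clipped-inv-pep}. Taking $\alpha_n=e^{n^2}$ then gives the bound $1+n^2$.

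The only subtle point is that uniformity is needed only in its one-sided, super-uniform form, $\PR{P\le u}\le u$. Tie-charging keeps this inequality valid, since it only increases the PEP, and \cref{lem:uc-tie-dither} supplies it for an arbitrary prior and metric. So no atomlessness assumption is required. I expect no real obstacle: the statement is essentially the harmonic estimate, and the only care needed is to handle the boundary $s=\alpha_n$, where the strict versus non-strict inequality affects only a null set.
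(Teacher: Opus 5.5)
Your proof is correct and is essentially the paper's argument: both reduce to the harmonic integral $\int_1^{\alpha_n}ds/s=\log\alpha_n$, driven by the (super-)uniformity of the PEP at fixed $y^n$. The difference is only in packaging. The paper computes $\E{\min\{P^{-1},\alpha_n\}}=1+\log\alpha_n$ exactly for the dithered, uniform PEP and then passes to the tie-as-error PEP by pointwise domination of the reciprocals. You instead apply the layer-cake formula to the tie-as-error PEP directly, using the one-sided tail bound of \cref{lem:uc-tie-dither}, i.e.\ stochastic domination rather than a pointwise coupling.
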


\begin{IEEEproof}
  For the randomized (dithered) PEP of \cref{sec:recap} computed with the
  metric $m^n$, $\PEP{\tilde X^n}{y^n}\sim\Unif{[0,1]}$ at fixed $y^n$ by
  \cref{prop:pep-uniform}, so writing $P\sim\Unif{[0,1]}$,
  \[
    \E{\min\{P^{-1},\alpha_n\}}
    =\int_0^1\min\{p^{-1},\alpha_n\}\,\mathrm dp
    =\int_0^{1/\alpha_n}\!\alpha_n\,\mathrm dp+\int_{1/\alpha_n}^1\!\frac{\mathrm dp}{p}
    =1+\log\alpha_n .
  \]
  The tie-as-error PEP \eqref{eq:uc-pep-tie} dominates the randomized PEP
  pointwise, so its reciprocal is no larger and the same expectation is
  $\le 1+\log\alpha_n$.
\end{IEEEproof}

\begin{remark}[Removal of the normal-prior condition]
  \label{rem:uc-normal-relaxed}
  \Cref{lem:uc-clipped-inv-pep} holds for \emph{any} prior $Q_X^n$ ---
  product, symmetric, finitely supported, or continuous (e.g.\ uniform on a
  power shell) --- and is the only nontrivial input the merge draws from
  PEP uniformity.
\end{remark}

\begin{remark}[A Kraft inequality for decoding]
  \label{rem:uc-kraft}
  \Cref{lem:uc-clipped-inv-pep} is best read as the decoding analogue of
  the Kraft inequality, in an \emph{information} form. The classical
  inequality lives on a discrete alphabet: integer codeword lengths
  $\ell$ satisfy $\sum_x2^{-\ell(x)}\le1$ precisely when they are, in
  essence, a (sub-)probability assignment --- to compress is to assign a
  probability to each element.

  \emph{To decode is to assign a conditional probability to each codeword
  given the output, relative to the prior.} The budget here is relative to
  an arbitrary
  reference prior $Q_X^n$ --- discrete or continuous --- and conditional on
  the output: writing $u(x)=\min\{-\log\PEPm{x}{y^n}{m^n},\,n^2\}$ for the
  clipped score of \emph{any} metric at a fixed $y^n$,
  \eqref{eq:uc-clipped-inv-pep} reads
  $\Es{\tilde X^n}{e^{u(\tilde X^n)}}\le1+n^2$: up to a subexponential
  normalization, the score is the log-density $\log(dP/dQ_X^n)$ of a
  conditional probability assignment $P(\cdot\mid y^n)$ on the codewords.
  The PEP transform plays the
  role of the Kraft normalization $\ell\mapsto2^{-\ell}/Z$, converting an
  arbitrary metric into its assignment --- order-preservingly, and
  idempotently (the fixed-point property of
  \cref{subsec:uc-merge-def}). For the matched-ML metric the induced
  assignment agrees at exponent scale with the true posterior-to-prior
  ratio $\log\bigl(P(x^n\mid y^n)/Q_X^n(x^n)\bigr)$; the tilt identity
  \eqref{eq:uc-tilt-identity} of \cref{sec:dmc} is its empirical form.
  Universal decoding over a family of metrics thereby becomes universal
  prediction over the induced conditional
  assignments~\cite{merhav1998universal}, and the
  constructions below inherit their coding-theory counterparts: the merge
  is the normalized-maximum-likelihood (envelope) assignment of the
  family, and the minimax normalizer $\gamma_n$ of
  \cref{subsec:uc-minimax} is its Kraft budget.
\end{remark}

\subsection{The merge metric}
\label{subsec:uc-merge-def}

Let $\Md^n=\{m_1^n,\dots,m_{K_n}^n\}$ be a finite family of metrics at
blocklength $n$, with $\Md=\{\Md^n\}_{n\ge1}$ of subexponential growth
($\tfrac1n\log K_n\to0$).

\begin{definition}[PEP-based universal merge metric]
  \label{def:merge-metric}
  For the finite family $\Md^n=\{m_1^n,\dots,m_{K_n}^n\}$ define
  \begin{equation}\label{eq:uc-U-def}
    U^n(x^n,y^n)\;\triangleq\;
    \min\Bigl\{-\log\min_{1\le k\le K_n}\PEPm{x^n}{y^n}{m_k^n},\;\;n^2\Bigr\}.
  \end{equation}
  The sequence $\bbU=\{U^n\}_{n\ge1}$ is the \emph{merge metric} for $\Md$.
\end{definition}

Operationally, $U^n$ is the negative log of the \emph{best-in-class} PEP: a
codeword ranked well by \emph{any} single $m_k$ is ranked well by $U^n$. The
$n^2$ clip floors the merged PEP at $e^{-n^2}$, capping the inverse PEP at
$e^{n^2}$; its role is to keep
the moment $\E{e^{U^n}}$ finite --- the quantity on which the Markov proof
below and, more importantly, the minimax normalizer of
\cref{subsec:uc-minimax} both run. The level $n^2$ is a canonical
representative, not a tuned choice: any clip at $e^{c_n}$ with $c_n=\omega(n)$
and $\tfrac1n\log(1+c_n)\to0$ serves identically
(\cref{rem:uc-clip-calibration} records the calibration). Note $U^n$
depends on $(x^n,y^n)$ only through the PEP values of the family, and in
particular is independent of the rate $R$.

\begin{figure}[tbp]
\centering
\begin{tikzpicture}[
  align=center,
  box/.style={draw, rounded corners=2pt, inner sep=4pt, font=\scriptsize},
  arr/.style={draw, -{Stealth[length=2mm]}, thick}
]
\node[box] (m1) at (-4.4,0) {metric $m^n_1$\\ scale: nats};
\node[box] (m2) at (-1.6,0) {metric $m^n_2$\\ scale: correlation};
\node at (0.6,0) {$\cdots$};
\node[box] (mK) at (2.9,0) {metric $m^n_{K_n}$\\ arbitrary scale};
\node[box, text width=9.2cm] (pep) at (-0.75,-2.0)
  {rank-canonicalize each metric by its own PEP:\quad $\PEPm{x^n}{y^n}{m^n_k}\in[0,1]$\\
   the same meaning for every $k$ --- the probability that a random competitor outranks $x^n$};
\node[box, text width=9.2cm] (mrg) at (-0.75,-3.7)
  {merge on the common scale: pointwise best\quad $\min_{k\le K_n}\PEPm{x^n}{y^n}{m^n_k}$};
\node[box, text width=9.2cm] (uu) at (-0.75,-5.1)
  {$U^n=\min\bigl\{-\log(\cdot),\,n^2\bigr\}$: the universal metric \eqref{eq:uc-U-def}};
\draw[arr] (m1.south) -- (pep.north -| m1.south);
\draw[arr] (m2.south) -- (pep.north -| m2.south);
\draw[arr] (mK.south) -- (pep.north -| mK.south);
\draw[arr] (pep.south) -- (mrg.north);
\draw[arr] (mrg.south) -- (uu.north);
\end{tikzpicture}
\caption{\textbf{Canonicalize, then merge.} The PEP canonicalizes each metric
first (order-preserving, so member-by-member decoding is unchanged), and only
then does the merge compare numbers that mean the same thing for every member.
The construction is scale-free; the clip at $n^2$ is a technical floor
(it keeps the merge moments finite) and costs nothing at exponent level.}
\label{fig:uc-merge-pipeline}
\end{figure}

The merge metric is, up to a subexponential factor, a fixed point of the map
\begin{equation*}
  m^n\;\mapsto\;-\log\PEPm{\cdot}{\cdot}{m^n}:
\end{equation*}
a single $-\log$PEP metric already
equals its own negative-log-PEP (its values are unit-exponential under
$\tilde X^n\sim Q_X^n$, by \cref{prop:pep-uniform}), and the merge is the
clipped pointwise maximum of such fixed points.

In the Kraft reading of
\cref{rem:uc-kraft}, the merge is the upper \emph{envelope} of the family's
induced sub-densities, renormalized --- the normalized-maximum-likelihood
(Shtarkov) assignment of the canonicalized family; its Kraft budget, the
envelope mass $\Es{\tilde X^n}{e^{U^n(\tilde X^n,y^n)}}$, is the quantity
the minimax analysis names $e^{n\gamma_n}$.

The clip is what keeps the
moment finite: without it, $\Es{\tilde X^n}{e^{U^n(\tilde X^n,y^n)}}$ can
diverge (for continuous priors it does) and the Markov step below is vacuous;
\cref{rem:uc-clip-free} records precisely what the clip does and does not
enable.

\begin{lemma}[Pointwise PEP bound for the merge]
  \label{lem:uc-pep-bound-u}
  With $U^n$ from \eqref{eq:uc-U-def}, for every $(x^n,y^n)$ and every
  $j\in\{1,\dots,K_n\}$,
  \begin{equation}\label{eq:uc-pep-u-pointwise}
    \PEPm{x^n}{y^n}{U^n}\;\le\;
    K_n(1+n^2)\,\max\BRAs{\PEPm{x^n}{y^n}{m_j^n},\,e^{-n^2}},
  \end{equation}
  and consequently
  \begin{equation}\label{eq:uc-pep-u-min}
    \PEPm{x^n}{y^n}{U^n}\;\le\;
    K_n(1+n^2)\,\min_k\max\BRAs{\PEPm{x^n}{y^n}{m_k^n},\,e^{-n^2}}.
  \end{equation}
\end{lemma}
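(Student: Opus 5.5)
The plan is to combine Markov's inequality at fixed $y^n$ with \cref{lem:uc-clipped-inv-pep} applied to each family member.

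\emph{Step 1: the merged moment.} Fix $y^n$ and write $\tilde X^n\sim Q_X^n$. By \eqref{eq:uc-U-def},
\[
e^{U^n(x,y^n)}=\min\Bigl\{\max_k\bigl(\PEPm{x}{y^n}{m_k^n}\bigr)^{-1},\,e^{n^2}\Bigr\}
=\max_k\min\Bigl\{\bigl(\PEPm{x}{y^n}{m_k^n}\bigr)^{-1},\,e^{n^2}\Bigr\}.
\]
Bounding the maximum by the sum over $k$ and applying \cref{lem:uc-clipped-inv-pep} with $\alpha_n=e^{n^2}$ to each term gives
\[
\Es{\tilde X^n}{e^{U^n(\tilde X^n,y^n)}}\le K_n(1+n^2).
\]

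\emph{Step 2: Markov.} By \eqref{eq:uc-pep-tie},
\[
\PEPm{x^n}{y^n}{U^n}=Q_X^n\{U^n(\tilde X^n,y^n)\ge U^n(x^n,y^n)\}
\le e^{-U^n(x^n,y^n)}\,\Es{\tilde X^n}{e^{U^n(\tilde X^n,y^n)}}
\le K_n(1+n^2)\,e^{-U^n(x^n,y^n)} .
\]

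\emph{Step 3: identify $e^{-U^n}$.} Directly from the definition,
\[
e^{-U^n(x^n,y^n)}=\max\Bigl\{\min_k\PEPm{x^n}{y^n}{m_k^n},\,e^{-n^2}\Bigr\}
=\min_k\max\BRAs{\PEPm{x^n}{y^n}{m_k^n},\,e^{-n^2}},
\]
which is exactly \eqref{eq:uc-pep-u-min}. Since the minimum over $k$ is at most the $j$th term for any fixed $j$, \eqref{eq:uc-pep-u-pointwise} follows as well.

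I expect no real obstacle. The only points needing care are the interchange of the clip with the maximum in Step 1, which holds because $t\mapsto\min\{t,e^{n^2}\}$ is monotone, and measurability of $U^n$ so that Markov applies. If some PEP value is zero, the clip makes $U^n=n^2$ and the bound still holds, since the right-hand side is at least $K_n(1+n^2)e^{-n^2}$.
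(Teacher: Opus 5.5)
Your proposal is correct and is essentially the paper's proof. It uses the same steps: Markov's inequality on $e^{U^n(\cdot,y^n)}$, the clip--max interchange followed by a sum (union) bound, and \cref{lem:uc-clipped-inv-pep} at $\alpha_n=e^{n^2}$ to bound the normalizer by $K_n(1+n^2)$. The one cosmetic difference is the order of the two conclusions: you get \eqref{eq:uc-pep-u-min} directly from $\max\{\min_k p_k,e^{-n^2}\}=\min_k\max\{p_k,e^{-n^2}\}$ and then deduce \eqref{eq:uc-pep-u-pointwise}, whereas the paper proves the fixed-$j$ bound first and minimizes over $j$.
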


\begin{IEEEproof}
  Fix $(x^n,y^n)$. By Markov's inequality applied to $e^{U^n(\cdot,y^n)}$,
  with $\tilde X^n\sim Q_X^n$ the fixed-$y^n$ competitor,
  \[
    \PEPm{x^n}{y^n}{U^n}
    =\PR{e^{U^n(\tilde X^n,y^n)}\ge e^{U^n(x^n,y^n)}}
    \le\Es{\tilde X^n}{e^{U^n(\tilde X^n,y^n)}}\,e^{-U^n(x^n,y^n)} .
  \]
  By \eqref{eq:uc-U-def},
  $e^{-U^n(x^n,y^n)}=\max\{\min_k\PEPm{x^n}{y^n}{m_k^n},\,e^{-n^2}\}$. For the
  normalizer, using $\min(\max_k a_k,b)=\max_k\min(a_k,b)$ for nonnegative
  $a_k,b$ and then the union bound,
  \begin{align*}
    \Es{\tilde X^n}{e^{U^n(\tilde X^n,y^n)}}
    &=\Es{\tilde X^n}{\max_k\min\BRAs{\PEPm{\tilde X^n}{y^n}{m_k^n}^{-1},\,e^{n^2}}}\\
    &\le\sum_{k=1}^{K_n}\Es{\tilde X^n}{\min\BRAs{\PEPm{\tilde X^n}{y^n}{m_k^n}^{-1},\,e^{n^2}}}\\
    &\le K_n(1+n^2),
  \end{align*}
  the last step by \cref{lem:uc-clipped-inv-pep} with $\alpha_n=e^{n^2}$.
  Combining the two displays gives \eqref{eq:uc-pep-u-pointwise};
  \eqref{eq:uc-pep-u-min} follows by minimizing over $j$.
\end{IEEEproof}

Up to the polynomial factor $K_n(1+n^2)$ and the negligible floor $e^{-n^2}$,
the merge is never worse, in pointwise PEP, than any member of $\Md^n$.
Equivalently,
\begin{equation}\label{eq:uc-U-vs-logpep}
  -\log\PEPm{x^n}{y^n}{U^n}\;\ge\;
  U^n(x^n,y^n)-\bigl(\log K_n+\log(1+n^2)\bigr),
\end{equation}
so the merge agrees with its own negative-log-PEP up to the same
subexponential factor that governs the universality cost.

\begin{remark}[Calibration of the clip level]
  \label{rem:uc-clip-calibration}
  $c_n=\omega(n)$ drives the merge proof's residual $K_n(1+c_n)e^{nR-c_n}$
  to zero at every rate (a merely linear clip $c_n=cn$ fails at $R\ge c$)
  and keeps the ceded mass $K_ne^{-c_n}$ of the minimax converse below
  $\tfrac12$; $\tfrac1n\log(1+c_n)\to0$ keeps the moment price $1+c_n$ of
  \cref{lem:uc-clipped-inv-pep} subexponential.
\end{remark}

\begin{remark}[A clip-free proof, and what the clip actually buys]
  \label{rem:uc-clip-free}
  The moment route above is not the only proof of
  \cref{lem:uc-pep-bound-u}, and the clip is not what makes the merge bound
  true. Direct union bound: fix $(x^n,y^n)$, write
  $p^\star\triangleq\min_k\PEPm{x^n}{y^n}{m_k^n}$ and
  $t\triangleq\max\{p^\star,e^{-n^2}\}$. A competitor $\tilde x^n$ satisfies
  $U^n(\tilde x^n,y^n)\ge U^n(x^n,y^n)$ iff
  $\max\{\min_k\PEPm{\tilde x^n}{y^n}{m_k^n},\,e^{-n^2}\}\le t$, which ---
  since $t\ge e^{-n^2}$ --- holds iff
  $\PEPm{\tilde x^n}{y^n}{m_k^n}\le t$ for \emph{some} $k$. For each fixed
  $k$, the tie-vs-dither transfer (\cref{lem:uc-tie-dither}, with $t\le1$)
  gives
  $Q_X^n\{\PEPm{\cdot}{y^n}{m_k^n}\le t\}\le t$; a union bound over the $K_n$
  members yields
  $\PEPm{x^n}{y^n}{U^n}\le K_n\max\{p^\star,e^{-n^2}\}$ ---
  \eqref{eq:uc-pep-u-min} with the strictly better prefactor $K_n$, no
  clipping lemma, no moment control. This is precisely the PEP/prior-mass
  generalization of the merging-of-rankings lemma of Feder and
  Lapidoth~\cite{feder1998universal}: under a uniform prior the PEP \emph{is}
  the normalized rank, and the display above is their union bound over merged
  ranking lists. The moment route is kept because it doubles as the minimax
  machinery of \cref{subsec:uc-minimax}, and \emph{there} the clip is
  load-bearing: $\gamma_n$ is finite only because of it, and the minimax
  converse applies the clipped-moment bound to an \emph{arbitrary} competing
  metric --- a step with no union-bound analogue.
\end{remark}

\subsection{Random-coding universality}
\label{subsec:uc-rc}

\begin{theorem}[Random-coding universality of the merge]
  \label{thm:univ-rc}
  Fix $\bbW$, $\bbQ_X$, $R>0$, and a finite metric family
  $\Md=\{\Md^n\}_{n\ge1}$ of subexponential growth. Let $\bbU=\{U^n\}$ be
  the merge metric
  $U^n(x^n,y^n)=\min\bigl\{-\log\min_{1\le k\le K_n}\PEPm{x^n}{y^n}{m_k^n},\,n^2\bigr\}$
  of \cref{def:merge-metric}. Then
  \begin{equation}\label{eq:uc-rc-univ}
    \Erc(R;\bbW,\bbQ_X,\bbU)\;\ge\;\sup_{\bbm\in\Md}\Erc(R;\bbW,\bbQ_X,\bbm).
  \end{equation}
  Consequently, if $(\Mset,\Cset)$ satisfies (H-sep) or (H-asep) with
  representative $\Md$, the merge metric $\bbU$ satisfies (H-univ($R$)) w.r.t.\
  $(\Mset,\Cset)$; combined with \cref{thm:uc-rc-to-det} (under (H-sub) on
  $\Cset$) it yields a deterministically universal pair $(\bbC,\bbU)$.
\end{theorem}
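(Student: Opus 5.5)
The plan is to go through the PEP functional of \cref{prop:uc-erc-pep}: it suffices to show that for every fixed $j$ (indeed uniformly in $j$),
\[
\E{\min\BRAs{1,e^{nR}\PEPm{X^n}{Y^n}{U^n}}}\;\le\;K_n(1+n^2)\,\E{\min\BRAs{1,e^{nR}\PEPm{X^n}{Y^n}{m_j^n}}}\;+\;K_n(1+n^2)e^{nR-n^2}.
\]
Both extra terms are exponent-negligible: $\tfrac1n\log\bigl(K_n(1+n^2)\bigr)\to0$ by subexponential growth, and $e^{nR-n^2}$ decays superexponentially.

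The first step applies \cref{lem:uc-pep-bound-u} pointwise at $(X^n,Y^n)$. Using $\max\{a,b\}\le a+b$ and the elementary bound $\min\{1,c(a+b)\}\le c\min\{1,a\}+c\,b$ for $c\ge1$, we obtain
\[
\min\BRAs{1,e^{nR}\PEPm{x^n}{y^n}{U^n}}\le K_n(1+n^2)\bigl[\min\BRAs{1,e^{nR}\PEPm{x^n}{y^n}{m_j^n}}+e^{nR-n^2}\bigr].
\]
Taking the expectation under $Q_X^n\cdot W^n$ gives the display. This is legitimate because the lemma is a deterministic inequality valid at every $(x^n,y^n)$; the fixed-$y^n$ uniformity is used only inside its proof. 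Alternatively, the clip-free prefactor $K_n$ of \cref{rem:uc-clip-free} may be used.

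The second step passes to exponents. Denote the right-hand expectation by $a_n$ and the residual by $b_n$. Then $\Ehat(c_n(a_n+b_n))\ge\min\{\Ehat(a_n),\Ehat(b_n)\}$ when $\tfrac1n\log c_n\to0$, and $\Ehat(b_n)=\infty$, so $\Erc(R;\bbW,\bbQ_X,\bbU)\ge\Erc(R;\bbW,\bbQ_X,\bbm_j)$. The delicate point is that the index $j$ indexes a member whose sequence $\bbm\in\Md$ is fixed across $n$; since the bound holds for every member of each section with a prefactor independent of $j$, it holds for each $\bbm\in\Md$. Taking the supremum gives \eqref{eq:uc-rc-univ}.

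For the consequences, under (H-sep) I would combine \eqref{eq:uc-pep-u-min} with the separability bound $\min_k\PEPm{x^n}{y^n}{m_k^n}\le s_n\PEPm{x^n}{y^n}{m^n}$, applied to an arbitrary $\bbm\in\Mset$. This yields the same display with an extra factor $s_n=e^{o(n)}$, hence $\Erc(\bbU)\ge\Erc(\bbm)$ for every $\bbW\in\Cset$, which is (H-univ($R$)). Under (H-asep), the expectation is split on the exceptional set $P^n$: off $P^n$ the pointwise argument applies, and on $P^n$ the integrand is at most $1$, so that part contributes $\le e^{-n(E^\star+\delta)}$, which does not lower the exponent below $E^\star\ge\Erc(\bbm)$. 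The main care point is this $E^\star$ bookkeeping, including the case $E^\star=\infty$ being handled by the hypothesis as stated. The deterministic statement is then immediate from \cref{thm:uc-rc-to-det}.
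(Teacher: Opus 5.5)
Your proposal is correct and follows the paper's proof essentially step for step. You apply the pointwise bound of \cref{lem:uc-pep-bound-u} inside the functional of \cref{prop:uc-erc-pep}, split off the super-exponentially small residual $K_n(1+n^2)e^{nR-n^2}$, pass to exponents, handle (H-sep) by enlarging the prefactor by $s_n$ and (H-asep) by discarding the exceptional set, and delegate the deterministic claim to \cref{thm:uc-rc-to-det}; your explicit check of $\min\{1,c(a+b)\}\le c\min\{1,a\}+cb$ for $c\ge1$ simply spells out the inequality the paper uses implicitly.
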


\begin{IEEEproof}
  Fix $\bbm\in\Md$. By \cref{lem:uc-pep-bound-u} and
  \cref{prop:uc-erc-pep},
  \begin{align*}
    &\E{\min\BRAs{1,e^{nR}\PEPm{X^n}{Y^n}{U^n}}}\\
    &\quad\le\E{\min\BRAs{1,e^{nR}K_n(1+n^2)\max\{\PEPm{X^n}{Y^n}{m^n},e^{-n^2}\}}}\\
    &\quad\le K_n(1+n^2)\,\E{\min\BRAs{1,e^{nR}\PEPm{X^n}{Y^n}{m^n}}}
      +K_n(1+n^2)e^{nR-n^2}.
  \end{align*}
  The second term is super-exponentially small and $K_n(1+n^2)=e^{o(n)}$.
  Taking $-\tfrac1n\log$, the $\liminf$, and using \cref{prop:uc-erc-pep}
  gives $\Erc(R;\bbU)\ge\Erc(R;\bbm)$. Since $\bbm\in\Md$ was arbitrary,
  \eqref{eq:uc-rc-univ} follows. Under (H-sep), chaining
  \eqref{eq:uc-pep-u-min} with the per-$n$ domination
  $\min_{m_d^n}\PEPm{x^n}{y^n}{m_d^n}\le s_n\PEPm{x^n}{y^n}{m^n}$ of
  \cref{def:uc-hsep} gives, at every $(x^n,y^n)$,
  \begin{equation*}
    \PEPm{x^n}{y^n}{U^n}\le K_n(1+n^2)\max\{s_n\PEPm{x^n}{y^n}{m^n},e^{-n^2}\},
  \end{equation*}
  and the same computation with the enlarged --- still subexponential ---
  prefactor $K_n(1+n^2)s_n$ yields (H-univ($R$)). Under (H-asep) the
  domination is available off the exceptional set $P^n$, whose contribution to
  the random-coding error probability is at most
  $\PRs{Q_X^n\cdot W^n}{P^n}\le e^{-n(E^\star+\delta)}$, exponent-negligible
  against every target $\Erc\le E^\star$. The deterministic claim is
  \cref{thm:uc-rc-to-det}.
\end{IEEEproof}

The cost of universality is the factor $K_n(1+n^2)$, i.e.\
$\tfrac1n\log K_n+O(\log n/n)$ in rate --- vanishing for any subexponential
family. This matches the cost in the type-class universal-decoding line of
Goppa~\cite{goppa1975nonprobabilistic}, Csisz\'ar~\cite{csiszar1974extremum},
and Feder--Lapidoth~\cite{feder1998universal} when specialized to the
polynomial DMC family (\cref{sec:dmc}); but the construction here is
channel-structure-free and applies directly to any subexponential family.

The decoder of \cref{thm:univ-rc} is built on the representative set $\Md$.
Separability in fact certifies more: the decoder that merges over the
\emph{entire} family --- holding no representative set at all --- is universal
too, with the representatives appearing only inside the proof.

\begin{proposition}[Merging over the full family]
  \label{prop:uc-continuum-merge}
  Let $\Mset$ satisfy (H-sep) with representative $\Md$, slack
  $s_n\ge1$, and $K_n=|\Md^n|$ ($s_n\ge1$ is a harmless normalization:
  enlarging the slack only weakens the hypothesis). Assume further that a
  \emph{single} countable subfamily $\Mset_0\subset\Mset$ satisfies
  $\inf_{\bbm\in\Mset}\PEPm{x^n}{y^n}{m^n}
   =\inf_{\bbm\in\Mset_0}\PEPm{x^n}{y^n}{m^n}$ at every $(x^n,y^n)$ --- as
  holds for type- and ranking-based families, where the PEP takes finitely
  many values, and for compactly parametrized families with
  parameter-continuous PEP. Define the \emph{continuum merge metric}
  \begin{equation}\label{eq:uc-continuum-merge}
    U_\infty^n(x^n,y^n)\;\triangleq\;
    \min\Bigl\{-\log\inf_{\bbm\in\Mset}\PEPm{x^n}{y^n}{m^n},\;n^2\Bigr\}.
  \end{equation}
  Then $\bbU_\infty$ satisfies (H-univ($R$)) w.r.t.\ $(\Mset,\Cset)$ at every
  rate, with the same subexponential prefactor $s_nK_n(1+n^2)$ as the merge
  over $\Md$.
\end{proposition}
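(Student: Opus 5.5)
The plan is to reduce to a pointwise PEP bound for $U_\infty^n$ of the same shape as \eqref{eq:uc-pep-u-pointwise}. Concretely, for every $(x^n,y^n)$ and every $\bbm\in\Mset$ I aim to show
\begin{equation*}
  \PEPm{x^n}{y^n}{U_\infty^n}\;\le\;s_nK_n\max\BRAs{\PEPm{x^n}{y^n}{m^n},\,e^{-n^2}}.
\end{equation*}
Once this holds, the computation in the proof of \cref{thm:univ-rc} carries over verbatim. One inserts the bound inside $\min\{1,e^{nR}\cdot\}$, splits the maximum into its two terms, and discards $s_nK_ne^{nR-n^2}$ as super-exponentially small. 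One then applies $-\tfrac1n\log$ and the $\liminf$ and invokes \cref{prop:uc-erc-pep}. Since $s_nK_n\le s_nK_n(1+n^2)=e^{o(n)}$, this gives (H-univ($R$)) at every rate and for every $\bbW\in\Cset$, with the advertised prefactor. No growth condition on $\Mset$ itself is needed.

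\emph{Preliminary: measurability.} The role of the countable subfamily $\Mset_0$ is to make $U_\infty^n$ a legitimate metric. The infimum over $\Mset$ equals a countable infimum of measurable functions $\PEPm{\cdot}{\cdot}{m^n}$, so $U_\infty^n$ is measurable. Consequently the events $\{U_\infty^n(\tilde X^n,y^n)\ge U_\infty^n(x^n,y^n)\}$ below have well-defined $Q_X^n$-probability. I expect this to be the only place the extra hypothesis enters.

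\emph{Main step: the clip-free union bound of \cref{rem:uc-clip-free}, routed through the representatives.} Fix $(x^n,y^n)$ and set $t\triangleq\max\{\inf_{\Mset}\PEPm{x^n}{y^n}{m^n},e^{-n^2}\}$, so that $t\le\max\{\PEPm{x^n}{y^n}{m^n},e^{-n^2}\}$ for each $\bbm\in\Mset$.

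\begin{enumerate}
\item Since $t\ge e^{-n^2}$, a competitor $\tilde x^n$ ties or beats $x^n$ under $U_\infty^n$ exactly when $\inf_{\Mset}\PEPm{\tilde x^n}{y^n}{m^n}\le t$.
\item Apply (H-sep) to each member and pass to the infimum. This gives $\min_{m_d^n\in\Md^n}\PEPm{\tilde x^n}{y^n}{m_d^n}\le s_n\inf_{\Mset}\PEPm{\tilde x^n}{y^n}{m^n}$. The step from each member to the infimum is legitimate because $\Md^n$ is finite: take members $\epsilon$-close to the infimum and let $\epsilon\to0$.
\item Hence the beating event is contained in $\bigcup_{d\le K_n}\{\PEPm{\tilde x^n}{y^n}{m_d^n}\le s_nt\}$.
\item If $s_nt\ge1$, the target bound is trivial, since the right-hand side is at least $1$.
\item Otherwise, \cref{lem:uc-tie-dither} bounds each piece by $s_nt$, and a union bound over the $K_n$ representatives gives $\PEPm{x^n}{y^n}{U_\infty^n}\le K_ns_nt$, which is the claimed pointwise bound.
\end{enumerate}

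\emph{Alternative and main obstacle.} As a fallback one could use the moment route of \cref{lem:uc-pep-bound-u}. This requires bounding $\Es{\tilde X^n}{e^{U_\infty^n}}$ by $s_n\sum_d\Es{}{\min\{\PEPm{\cdot}{y^n}{m_d^n}^{-1},e^{n^2}\}}\le s_nK_n(1+n^2)$, via $e^{U_\infty^n}\le s_n\max_d\min\{\PEPm{\cdot}{y^n}{m_d^n}^{-1},e^{n^2}\}$, which is what produces the factor $(1+n^2)$ in the statement. The main obstacle in either route is the same delicate point: transferring the (H-sep) domination from individual members to the infimum over an uncountable family while keeping every event measurable. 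The countable-subfamily assumption together with the finiteness of $\Md^n$ resolves this.
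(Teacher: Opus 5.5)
Your proof is correct, but it takes a different route from the paper's.

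\textbf{The paper's route.} It argues by moments. Domination of every member is automatic, since the infimum runs over a superset. It then reads (H-sep) with the infimum on the right to get $e^{U_\infty^n}\le s_n\,e^{U^n}$ pointwise, where $U^n$ is the grid merge over $\Md$. The expectation display in \cref{lem:uc-pep-bound-u} then bounds the envelope mass $\Es{X^n\sim Q_X^n}{e^{U_\infty^n(X^n,y^n)}}\le s_nK_n(1+n^2)$, and Markov's inequality finishes. This is exactly your ``fallback''. It is where the factor $1+n^2$ in the statement comes from, and where the normalization $s_n\ge1$ is consumed: it is what lets $s_n$ pass through the clip.

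\textbf{Your route.} You transplant the clip-free union bound of \cref{rem:uc-clip-free} to the continuum. You use the same consequence of (H-sep), $\min_{m_d^n\in\Md^n}\PEPm{\tilde x^n}{y^n}{m_d^n}\le s_n\inf_{\bbm\in\Mset}\PEPm{\tilde x^n}{y^n}{m^n}$, but apply it at the competitor rather than inside a moment. Together with \cref{lem:uc-tie-dither} and a union over $\Md^n$, this closes the argument.

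\textbf{What each route buys.} Yours gives the sharper prefactor $s_nK_n$, with no $1+n^2$, and never needs to push $s_n$ through the clip. The paper's gives a subexponential bound on $\Es{}{e^{U_\infty^n}}$ itself, that is, a finite Kraft budget (normalizer $\gamma_n$) for the continuum merge. That is the quantity the envelope reading of \cref{rem:uc-continuum-merge} and the minimax machinery of \cref{subsec:uc-minimax} rely on; your route does not produce it.

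\textbf{A small correction.} In your Step~2, passing from each member to the infimum needs neither the finiteness of $\Md^n$ nor an $\epsilon$-approximation. The left side of the (H-sep) inequality does not depend on $\bbm$, so it is a lower bound for $s_n\PEPm{\tilde x^n}{y^n}{m^n}$ over every $\bbm$, and hence for the infimum. Finiteness of $\Md^n$ is used only in the union bound (or the moment sum), which is where $K_n$ enters.
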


\begin{IEEEproof}
  Measurability of $U_\infty^n$ holds by the countable-subfamily hypothesis.
  \emph{Domination, with zero slack:} for every $\bbm\in\Mset$,
  $e^{-U_\infty^n(x^n,y^n)}
   =\max\{\inf_{\bbm'\in\Mset}\PEPm{x^n}{y^n}{m'^n},e^{-n^2}\}
   \le\max\{\PEPm{x^n}{y^n}{m^n},e^{-n^2}\}$, since the infimum runs over a
  superset of $\{\bbm\}$. \emph{Moment transfer:} the left-hand side of the
  (H-sep) display does not depend on $\bbm$, so taking the infimum over
  $\bbm\in\Mset$ on its right side gives
  $\min_{m_d^n\in\Md^n}\PEPm{x^n}{y^n}{m_d^n}
   \le s_n\inf_{\bbm\in\Mset}\PEPm{x^n}{y^n}{m^n}$ at every $(x^n,y^n)$; with
  $s_n\ge1$ this yields $e^{U_\infty^n}\le s_n\,e^{U^n}$ pointwise for the
  merge metric $U^n$ of \cref{def:merge-metric}, whence, by the expectation
  display in the proof of \cref{lem:uc-pep-bound-u},
  $\Es{X^n\sim Q_X^n}{e^{U_\infty^n(X^n,y^n)}}\le s_nK_n(1+n^2)$ at every
  fixed $y^n$. Markov's inequality then bounds the PEP of $U_\infty^n$ exactly
  as in \cref{lem:uc-pep-bound-u}, and the exponent computation of
  \cref{thm:univ-rc} goes through verbatim with the prefactor
  $s_nK_n(1+n^2)$.
\end{IEEEproof}

\begin{remark}[The representative set is proof scaffolding]
  \label{rem:uc-continuum-merge}
  For the continuum merge, domination over every member is automatic --- the
  infimum runs over a superset --- so separability is consumed \emph{only}
  through the moment bound: the single (H-sep) inequality, read in one
  direction, makes the grid merge dominate the family, and read in the other,
  caps how far the continuum infimum can fall below the grid minimum. In
  the Kraft reading (\cref{rem:uc-kraft}), separability is exactly a
  certificate that the family's per-$y$ envelope mass is subexponential:
  the envelope sees, at each $x$, only the members achieving the maximum
  --- the \emph{extremal rankings} --- and (H-sep) says that
  subexponentially many of them realize it, up to slack, uniformly in $y$.
  The instantiations count precisely these extreme points: joint types
  (\cref{sec:dmc}), faces of a hyperplane arrangement
  (\cref{lem:uc-fsm-polycount}), an $\eta$-net (\cref{sec:awgn}). In the
  instantiations: for finite-state families the slack is $s_n=1$ and
  $\Md\subset\Mset$, so the two decoders coincide; for the DMC families the
  infimum is attained and, under a memoryless prior, the continuum merge
  \emph{is} the tilted-MMI decoder up to $O(\log n)$ (\cref{sec:dmc}); for the
  AWGN families the domination holds only on a typical event whose
  product-law slice (H-asep) does not control, and the continuum merge
  requires a decoder-known typical event and a floor ---
  \cref{sec:awgn}'s GLRT theorem is the typical-set analogue of
  \cref{prop:uc-continuum-merge}, with a capped guarantee as the price of the
  carve-out.
\end{remark}

\subsection{Asymptotic minimax interpretation}
\label{subsec:uc-minimax}

\Cref{thm:univ-rc} is one-sided: the merge is at least as good as any family
member. We now show that, in a minimax sense, no universal decoder does
significantly better. Following the source-coding framework of Xie and
Barron~\cite{xie2000asymptotic}, we measure performance by \emph{regret}
--- the per-letter log-ratio of the universal PEP to the best-in-class PEP.

Work at fixed $n$ and fixed $y\in\cY^n$, with merge metric $u_n=U^n$. For each
$(x,y)$ let $\hat\theta(x,y)\triangleq\argmin_k\PEPm{x}{y}{m_k^n}$ denote the
in-hindsight best metric (ties broken arbitrarily), and write
$m_{\hat\theta}=m_{\hat\theta(x,y)}^n$.

\begin{definition}[Regret, equalizer, minimax]
  \label{def:uc-regret}
  The \emph{regret} of a universal metric $u_n$ at $(x,y)$ is
  \begin{equation}\label{eq:uc-regret}
    r_{y,n}(u_n,x)\;\triangleq\;
    \tfrac1n\log\!\BRA{\frac{\PEPm{x}{y}{u_n}}{\PEPm{x}{y}{m_{\hat\theta}}}} .
  \end{equation}
  The \emph{minimax regret} is
  $\bar r_{y,n}\triangleq\min_{u_n}\max_x r_{y,n}(u_n,x)$. A decoder is
  \emph{minimax} if $\max_x r_{y,n}(u_n,x)=\bar r_{y,n}$, an \emph{equalizer}
  if $r_{y,n}(u_n,x)=\bar r_{y,n}$ for all $x$, and \emph{asymptotically}
  so if these hold up to $o(1)$.
\end{definition}

The regret is non-negative pointwise for any decoder built from $\Md^n$, since
$m_{\hat\theta(x,y)}$ is the best-in-class at $(x,y)$. Define the normalizer
\begin{equation}\label{eq:uc-gamma-n}
  \gamma_n\;\triangleq\;\tfrac1n\log\Es{\tilde X^n\sim Q_X^n}{e^{U^n(\tilde X^n,y)}},
\end{equation}
which is finite for \emph{every} prior thanks to the clip, and which
\cref{lem:uc-clipped-inv-pep} bounds by
$\gamma_n\le\tfrac1n\log K_n+O(\log n/n)=o(1)$. In the Kraft reading
(\cref{rem:uc-kraft}), $e^{n\gamma_n}$ is the \emph{envelope mass}: the
probability budget the merge over-reserves to dominate every member's
assignment simultaneously. It is maximal ($\approx K_n$) when the members
place disjoint bets --- each concentrating its mass on different codewords
--- and collapses to $\approx1$ when they rank alike; the cost of
universality is a disagreement measure of the family's rankings at $y$.

\begin{proposition}[Equalizer bound for the merge]
  \label{prop:uc-equalizer}
  Let $\cF_n\triangleq\{x\in\cX^n:\PEPm{x}{y}{m_{\hat\theta(x,y)}}\ge e^{-n^2}\}$
  be the finite-exponent set. The merge $u_n=U^n$ satisfies
  $r_{y,n}(u_n,x)\le\gamma_n$ for every $x\in\cF_n$, and
  $Q_X^n(\cX^n\setminus\cF_n)\le K_n e^{-n^2}$.
\end{proposition}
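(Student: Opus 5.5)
The plan has two parts: a pointwise regret bound on $\cF_n$, and a prior-mass bound on the complement. Both reduce to results already established, namely the Markov step of \cref{lem:uc-pep-bound-u} and the tie-vs-dither transfer of \cref{lem:uc-tie-dither}.

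\emph{Regret bound on $\cF_n$.} Fix $y$ and $x\in\cF_n$. I will repeat the Markov argument from the proof of \cref{lem:uc-pep-bound-u}, but keep the exact normalizer rather than its upper bound:
\[
  \PEPm{x}{y}{U^n}\;\le\;\Es{\tilde X^n}{e^{U^n(\tilde X^n,y)}}\,e^{-U^n(x,y)}\;=\;e^{n\gamma_n}\,e^{-U^n(x,y)},
\]
where the equality is the definition \eqref{eq:uc-gamma-n}. By \eqref{eq:uc-U-def},
\[
  e^{-U^n(x,y)}=\max\bigl\{\min_k\PEPm{x}{y}{m_k^n},\,e^{-n^2}\bigr\}=\max\bigl\{\PEPm{x}{y}{m_{\hat\theta}},\,e^{-n^2}\bigr\}.
\]
On $\cF_n$ the best-in-class PEP is at least $e^{-n^2}$, so the maximum equals $\PEPm{x}{y}{m_{\hat\theta}}$. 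Dividing through and applying $\tfrac1n\log$ gives $r_{y,n}(U^n,x)\le\gamma_n$ directly from \eqref{eq:uc-regret}.

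\emph{Mass of the complement.} A point $x\notin\cF_n$ satisfies $\min_k\PEPm{x}{y}{m_k^n}<e^{-n^2}$, so $\cX^n\setminus\cF_n\subseteq\bigcup_k\{x:\PEPm{x}{y}{m_k^n}\le e^{-n^2}\}$. For each fixed $k$, \cref{lem:uc-tie-dither} with $t=n^2$ (in the fixed-$y$ product law, $\tilde X^n\sim Q_X^n$) bounds the $Q_X^n$-mass of that set by $e^{-n^2}$. A union bound over the $K_n$ members then gives $Q_X^n(\cX^n\setminus\cF_n)\le K_ne^{-n^2}$.

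\emph{Expected obstacles.} Neither step is deep; the main risks are bookkeeping. First, the argmin defining $\hat\theta$ must match the inner minimum in $U^n$. It does, because the family is finite, so the minimum is attained. Second, I must check that $\PEPm{x}{y}{m_{\hat\theta}}>0$ on $\cF_n$, so that the regret ratio is well defined; this holds since the value is at least $e^{-n^2}$. Third, the Markov step has to be applied to the nonnegative variable $e^{U^n(\tilde X^n,y)}$ with the non-strict threshold matching the tie-as-error convention \eqref{eq:uc-pep-tie}, that is, the event $\{U^n(\tilde X^n,y)\ge U^n(x,y)\}$. Markov's inequality is stated for exactly that event, so it covers the tie-as-error case.
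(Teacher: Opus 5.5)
Your proof is correct and follows the paper's own argument essentially verbatim. You apply Markov to $e^{U^n}$ with the exact normalizer $e^{n\gamma_n}$, use that the $n^2$ clip is inactive on $\cF_n$, and handle the complement with \cref{lem:uc-tie-dither} at $t=n^2$ plus a union bound over the $K_n$ members; your bookkeeping checks (attained minimum, positive denominator on $\cF_n$, non-strict Markov event matching the tie-as-error convention) are all sound.
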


\begin{IEEEproof}
  By Markov and \eqref{eq:uc-gamma-n},
  $\PEPm{x}{y}{u_n}\le e^{n\gamma_n}e^{-U^n(x,y)}$. For $x\in\cF_n$ the clip
  is inactive, so $e^{-U^n(x,y)}=\PEPm{x}{y}{m_{\hat\theta(x,y)}}$ exactly;
  dividing and taking $\tfrac1n\log$ gives $r_{y,n}(u_n,x)\le\gamma_n$. For
  the complement: the tie-vs-dither transfer (\cref{lem:uc-tie-dither}) gives
  $Q_X^n\{\PEPm{\cdot}{y}{m_k}<e^{-n^2}\}\le e^{-n^2}$ for each $k$; a union
  bound over $\Md^n$ gives $Q_X^n(\cX^n\setminus\cF_n)\le K_n e^{-n^2}$.
\end{IEEEproof}

The converse requires an averaged regret non-negativity that holds for
\emph{every} metric, in or out of the family. Pointwise regret may be negative
for an out-of-family metric (which can rank a chosen $x$ first), but the
$Q_X^n$-averaged \emph{randomized} PEP at fixed $y$ is $1/2$ (a consequence
of \cref{prop:pep-uniform}), so no metric beats the family's best in the
average.

\begin{lemma}[Averaged regret non-negativity]
  \label{lem:uc-avg-regret}
  Fix $y\in\cY^n$, a finite family $\Md^n$ with $K_n$ members, and a constant
  $\beta\ge0$. If a metric $u$ satisfies
  $\PEPm{x}{y}{u}\le e^{-n\beta}\PEPm{x}{y}{m_{\hat\theta(x,y)}}$ for all
  $x\in\cF_n$, then
  \begin{equation}\label{eq:uc-avg-regret}
    \beta\;\le\;\tfrac1n\log\frac{1}{1/2-K_n e^{-n^2}}\;=\;O(1/n)
  \end{equation}
  whenever $1/2-K_n e^{-n^2}>0$; the subexponential growth (H-sub) ensures
  this for all large $n$ and drives the bound to $\tfrac{\log2}{n}+o(1/n)$.
  In particular, no metric can have strictly negative regret of constant
  order simultaneously at every $x\in\cF_n$.
\end{lemma}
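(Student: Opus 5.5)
The plan is to average the hypothesized pointwise inequality over a random candidate $\tilde X^n\sim Q_X^n$ at the fixed output $y$. The left side then has mean at least $1/2$. The right side has mean at most $e^{-n\beta}$ plus the mass of the non-finite-exponent set, and the stated bound follows by rearranging.

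\emph{Lower bound on the average PEP of $u$.} For the dithered PEP computed with the metric $u$, \cref{prop:pep-uniform} gives $\PEP{\tilde X^n}{y}\sim\uU$ at fixed $y$, so its mean is exactly $1/2$. The tie-as-error PEP \eqref{eq:uc-pep-tie} charges ties in full, so it dominates the dithered PEP pointwise; this is the same monotonicity used in \cref{lem:uc-tie-dither}. Hence $\Es{\tilde X^n}{\PEPm{\tilde X^n}{y}{u}}\ge 1/2$. This holds for an arbitrary metric $u$, in or out of the family, which is exactly what a converse needs.

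\emph{Upper bound on the same average.} I would split the expectation over $\cF_n$ and its complement. On $\cF_n$ the hypothesis gives $\PEPm{x}{y}{u}\le e^{-n\beta}\PEPm{x}{y}{m_{\hat\theta(x,y)}}\le e^{-n\beta}$, since every PEP is at most $1$. On $\cX^n\setminus\cF_n$ I bound the PEP trivially by $1$ and use the mass estimate $Q_X^n(\cX^n\setminus\cF_n)\le K_ne^{-n^2}$ from \cref{prop:uc-equalizer}; that estimate came from \cref{lem:uc-tie-dither} plus a union bound over $\Md^n$. Together these give
\[
\tfrac12\le \Es{\tilde X^n}{\PEPm{\tilde X^n}{y}{u}}\le e^{-n\beta}+K_ne^{-n^2}.
\]
When $1/2-K_ne^{-n^2}>0$, this rearranges to $e^{-n\beta}\ge 1/2-K_ne^{-n^2}$, which is \eqref{eq:uc-avg-regret}.

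\emph{Asymptotics.} Under (H-sub) we have $\log K_n=o(n)$, so $K_ne^{-n^2}\to0$ super-exponentially. The positivity condition therefore holds for large $n$, and $\tfrac1n\log\frac{1}{1/2-K_ne^{-n^2}}=\tfrac{\log2}{n}+o(1/n)$. The final sentence of the lemma follows directly: a regret uniformly at most $-\beta$ with $\beta>0$ fixed would contradict $\beta=O(1/n)$.

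The main point needing care is the direction of the tie-versus-dither comparison. For the lower bound we need the averaged \emph{tie-as-error} PEP of $u$ to be at least $1/2$, and pointwise domination of the dithered PEP supplies exactly that. The crude bound $\PEPm{x}{y}{m_{\hat\theta}}\le1$ is enough, so no finer information about the family is needed.
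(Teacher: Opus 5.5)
Your proof is correct and follows the paper's argument. The averaged tie-as-error PEP of $u$ is at least $\tfrac12$ by \cref{prop:pep-uniform} and pointwise domination of the dithered PEP. The hypothesis together with $\PEPm{x}{y}{m_{\hat\theta(x,y)}}\le1$ caps the $\cF_n$ part by $e^{-n\beta}$, and the complement mass $K_ne^{-n^2}$ from \cref{prop:uc-equalizer} closes the pinch. The only cosmetic difference is where that mass goes: the paper subtracts it on the lower-bound side of the $\cF_n$-restricted average, while you add it on the upper-bound side of the full average, giving the same inequality $\tfrac12-K_ne^{-n^2}\le e^{-n\beta}$.
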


\begin{IEEEproof}[Proof sketch]
  By \cref{prop:pep-uniform} the averaged randomized PEP at fixed $y$ is
  $1/2$; the tie-as-error PEP dominates it, giving a lower bound
  $\tfrac12-K_n e^{-n^2}$ on the $\cF_n$-restricted averaged PEP. The
  hypothesis caps the same average by $e^{-n\beta}$. Pinching the two yields
  \eqref{eq:uc-avg-regret}. The full proof is in
  Appendix~\ref{app:avg-regret}.
\end{IEEEproof}

\begin{theorem}[Asymptotic minimaxity of the merge]
  \label{thm:univ-minimax}
  Assume $\Md$ has subexponential growth (H-sub), and let
  $\gamma_n^{\cF_n}\triangleq\tfrac1n\log\Es{\tilde X^n}{e^{U^n(\tilde X^n,y)}\Ind{\tilde X^n\in\cF_n}}$.
  Then $\gamma_n^{\cF_n}\le\gamma_n\in[0,\tfrac1n\log K_n+O(\log n/n)]$, and on
  the finite-exponent set $\cF_n$ the merge $u_n=U^n$ is asymptotically
  minimax:
  \begin{itemize}
  \item \emph{(achievability)} $\max_{x\in\cF_n}r_{y,n}(u_n,x)\le\gamma_n$;
  \item \emph{(converse)} for every $\epsilon>0$ and all large $n$, no metric
    $u_n^\star$ satisfies $r_{y,n}(u_n^\star,x)\le\gamma_n^{\cF_n}-\epsilon$
    for all $x\in\cF_n$.
  \end{itemize}
\end{theorem}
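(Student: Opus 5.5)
The plan is to read everything off the clipped-moment machinery already in place: the envelope mass $e^{n\gamma_n}$ for the bounds and for achievability, and \cref{lem:uc-clipped-inv-pep} applied to an arbitrary competitor for the converse.

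\emph{Bounds on $\gamma_n$.} The inequality $\gamma_n^{\cF_n}\le\gamma_n$ holds because the integrand $e^{U^n}\Ind{\tilde X^n\in\cF_n}$ is dominated by $e^{U^n}$. Next, $\gamma_n\ge0$: every tie-as-error PEP lies in $[0,1]$, so $U^n\ge0$, hence $e^{U^n}\ge1$ and the expectation is at least $1$. The upper bound $\gamma_n\le\tfrac1n\log K_n+\tfrac1n\log(1+n^2)$ is exactly the normalizer display in the proof of \cref{lem:uc-pep-bound-u}, namely the union bound followed by \cref{lem:uc-clipped-inv-pep} with $\alpha_n=e^{n^2}$.

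\emph{Achievability.} This is \cref{prop:uc-equalizer}, which already gives $r_{y,n}(u_n,x)\le\gamma_n$ for every $x\in\cF_n$. Taking the maximum over $x\in\cF_n$ gives the claim.

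\emph{Converse.} Argue by contradiction. Suppose some $u_n^\star$ satisfies $r_{y,n}(u_n^\star,x)\le\gamma_n^{\cF_n}-\epsilon$ on $\cF_n$. By the definition of regret this means
\[
\PEPm{x}{y}{u_n^\star}\le e^{n(\gamma_n^{\cF_n}-\epsilon)}\,\PEPm{x}{y}{m_{\hat\theta(x,y)}} \quad\text{for } x\in\cF_n .
\]
On $\cF_n$ the clip in \eqref{eq:uc-U-def} is inactive, so $\PEPm{x}{y}{m_{\hat\theta(x,y)}}=e^{-U^n(x,y)}$.

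First dispose of the degenerate case $\gamma_n^{\cF_n}<\epsilon$. There the hypothesis says $u_n^\star$ has regret at most $-(\epsilon-\gamma_n^{\cF_n})<0$ uniformly on $\cF_n$. To obtain a constant-order $\beta$, I would first pass to a subsequence along which $\epsilon-\gamma_n^{\cF_n}$ stays bounded below; this is available since $\gamma_n^{\cF_n}\to0$ by the bound above. \cref{lem:uc-avg-regret} then forces $\beta=O(1/n)$, a contradiction for large $n$.

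In the main case $\gamma_n^{\cF_n}\ge\epsilon$, set $a\triangleq e^{-n(\gamma_n^{\cF_n}-\epsilon)}\le1$. For $x\in\cF_n$,
\[
\min\bigl\{\PEPm{x}{y}{u_n^\star}^{-1},e^{n^2}\bigr\}\ge\min\bigl\{a\,e^{U^n(x,y)},e^{n^2}\bigr\}=a\,e^{U^n(x,y)},
\]
where the equality uses $U^n\le n^2$ together with $a\le1$.

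Now apply \cref{lem:uc-clipped-inv-pep} to the \emph{arbitrary} metric $u_n^\star$; this is the step where the clip is load-bearing, as noted in \cref{rem:uc-clip-free}. Restricting the expectation to $\cF_n$ gives
\[
1+n^2\ \ge\ a\,\Es{\tilde X^n}{e^{U^n(\tilde X^n,y)}\Ind{\tilde X^n\in\cF_n}}=a\,e^{n\gamma_n^{\cF_n}}=e^{n\epsilon}.
\]
This fails for all large $n$.

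The main obstacle I expect is bookkeeping rather than substance. One point is matching the clip levels: the competitor must be clipped at the same $e^{n^2}$ that bounds $U^n$, which is exactly why the converse is phrased with $\gamma_n^{\cF_n}$ rather than $\gamma_n$. The other is handling the small-$\gamma$ regime cleanly through \cref{lem:uc-avg-regret}.
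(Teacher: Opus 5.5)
Your proposal is correct and follows the paper's proof essentially step for step. It uses the same normalizer bracket, achievability via \cref{prop:uc-equalizer}, and the same two-case converse, with \cref{lem:uc-avg-regret} handling $\gamma_n^{\cF_n}<\epsilon$ and the reciprocate-and-average argument against \cref{lem:uc-clipped-inv-pep} handling $\gamma_n^{\cF_n}\ge\epsilon$. The subsequence step in the first case is unnecessary but harmless: the lemma applies at each fixed $n$ to any $\beta\ge0$, and $\gamma_n^{\cF_n}\le\gamma_n\to0$ already gives $\epsilon-\gamma_n^{\cF_n}\ge\epsilon/2$ for all large $n$.
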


The restriction to $\cF_n$ is intrinsic: the $n^2$-clip deliberately cedes
the super-exponentially-rare set $\cX^n\setminus\cF_n$ where the
best-in-class metric already exceeds every finite exponent.

\begin{IEEEproof}[Proof sketch]
  Achievability is \cref{prop:uc-equalizer}. Converse: a two-case dichotomy on
  $\gamma_n^{\cF_n}$ versus $\epsilon$. If $\gamma_n^{\cF_n}<\epsilon$, the
  regret hypothesis matches \cref{lem:uc-avg-regret}, which pinches $\beta$ to
  $O(1/n)$ and contradicts a fixed $\epsilon$. If $\gamma_n^{\cF_n}\ge\epsilon$,
  Markov on $e^{U^n}$ plus reciprocation reduces the hypothesis to a
  clipped-inverse-PEP expectation that \cref{lem:uc-clipped-inv-pep} caps at
  $1+n^2$, while the hypothesis forces it to be $\ge e^{n\epsilon}$ ---
  contradiction for large $n$. The full proof is in
  Appendix~\ref{app:univ-minimax}. In the Kraft reading both cases are a
  mass budget: any competing metric is \emph{itself} Kraft-admissible
  (\cref{lem:uc-clipped-inv-pep} holds for every metric), so beating the
  envelope by $e^{n\epsilon}$ throughout $\cF_n$ would require total mass
  $e^{n\epsilon}$ against a budget of $1+n^2$ (Case B); Case A is the
  mean-$\tfrac12$ budget of the PEP account --- one cannot assign more
  probability than one has.
\end{IEEEproof}

\begin{remark}[Where the two normalizers live]
  \label{rem:uc-minimax-normalizer}
  The unrestricted normalizer satisfies
  $\gamma_n\in[0,\allowbreak\,\tfrac1n\log K_n+O(\log n/n)]$: the upper end is the
  clipped-inverse-PEP bound (\cref{lem:uc-clipped-inv-pep}), the lower end
  follows from $e^{U^n}\ge1$. The restricted normalizer $\gamma_n^{\cF_n}$
  lies in the same interval up to a super-exponentially small slack at the
  lower endpoint: the indicator truncates an $e^{U^n}\ge1$ integrand, so the
  lower bound is $\tfrac1n\log Q_X^n(\cF_n)\ge
  \tfrac1n\log(1-K_ne^{-n^2})=-O(e^{-n^2}/n)$ rather than strictly $0$.
\end{remark}

\begin{remark}[The scale of the minimax claim, and a fine-scale converse]
  \label{rem:uc-minimax-scale}
  \emph{What $o(1)$-minimaxity asserts.}
  At the $o(1)$ scale of \cref{def:uc-regret} the $\cF_n$-restricted minimax
  regret is itself degenerate: it tends to $0$, sandwiched between
  $\gamma_n=o(1)$ from above (\cref{rem:uc-minimax-gap}) and $-O(1/n)$ from
  below (\cref{lem:uc-avg-regret}). Every decoder with vanishing worst-case
  regret on $\cF_n$ is thus ``asymptotically minimax'' in the $o(1)$ sense;
  the content of \cref{thm:univ-minimax} lies in its converse clause, which
  pins the value at $\gamma_n^{\cF_n}$ against constant margins --- weaker
  than the source-coding template of Xie and
  Barron~\cite{xie2000asymptotic}, where the minimax value is matched at its
  natural $(\log n)/n$ scale, sharp constant included.

  \emph{The fine-scale converse.}
  The proof, however, delivers a converse at that finer scale, which we
  record: \emph{for every $\delta>0$ and all $n$ large, every metric $u$
  satisfies}
  \[
    \max_{x\in\cF_n}r_{y,n}(u,x)\;\ge\;\gamma_n^{\cF_n}\,-\,(2+\delta)\,
    \frac{\log n}{n}.
  \]
  The derivation is margin bookkeeping over the machinery already in place ---
  the Case-B reciprocate-and-average argument with a vanishing margin
  $\epsilon_n$ in place of the constant $\epsilon$, plus a
  clipped-moment/averaged-regret argument covering negative worst-case
  regret --- and is given self-contained in
  Appendix~\ref{app:minimax-scale}. Paired with the achievability
  $\max_{x\in\cF_n}r_{y,n}(U^n,x)\le\gamma_n$, this sandwiches the minimax
  regret on $\cF_n$ between $\gamma_n^{\cF_n}-O(\log n/n)$ and $\gamma_n$: on
  the finite-exponent set, no universal decoder --- the merge included --- can
  push its worst-case regret more than $O(\log n/n)$ below
  $\gamma_n^{\cF_n}$.

  \emph{What is not claimed.} The sandwich matches at the $(\log n)/n$ scale
  only up to the gap between the two normalizers
  ($\gamma_n^{\cF_n}\le\gamma_n$, bracketed individually in
  \cref{rem:uc-minimax-normalizer} but with their difference unbounded below
  the trivial $\tfrac1n\log K_n+O(\log n/n)$); pinning the minimax regret at
  the $(\log n)/n$ scale with sharp constants, Xie--Barron style, remains
  open.

  \emph{Game vs.\ operational.}
  A further limitation is directional rather than scalar: both clauses of
  \cref{thm:univ-minimax} are statements about the fixed-$y$ regret
  \emph{game}, not about the channel-averaged, operational cost of
  universality. The asymmetry is elementary. Achievability is a uniform
  upper bound, and uniform bounds survive averaging:
  $r_{y,n}(U^n,x)\le\gamma_n$ at every $(x,y)$ with $x\in\cF_n$ integrates,
  under any channel-joint law, into the global guarantee of
  \cref{thm:univ-rc} (the ceded set carries mass at most $K_ne^{-n^2}$,
  exponent-neutral). The
  converse lower-bounds a \emph{maximum} over $x$ at each $y$, and a lower
  bound on a maximum does not survive averaging: a channel routes
  probability mass and need not visit the adversarial $(x,y)$ pairs, so no
  operational statement follows. The operational minimax price of
  universality --- the $\inf_u\sup_{\bbW\in\Cset}$ gap between achieved and
  benchmark performance --- therefore lies somewhere in $[0,\gamma_n]$, and
  the minimax--maximin gap of \cref{rem:uc-minimax-gap} suggests it can sit
  strictly below $\gamma_n$. In the language of universal
  prediction~\cite{merhav1998universal},
  \cref{thm:univ-minimax} is the Shtarkov, individual-sequence half of the
  theory~\cite{shtarkov1987universal}; the redundancy--capacity half ---
  the operational converse --- is open (\cref{sec:conc}).
\end{remark}

\begin{remark}[Minimax--maximin gap]
  \label{rem:uc-minimax-gap}
  Restricting the regret game to reverse channels $W(\cdot\mid y)$ supported
  on $\cF_n$, weak duality and the equalizer bound give
  $\sup_W\inf_{u_n}\sum_xW(x\mid y)\,r_{y,n}(u_n,x)
  \le\inf_{u_n}\sup_W\sum_xW(x\mid y)\,r_{y,n}(u_n,x)\le\gamma_n$
  (substitute the merge and apply \cref{prop:uc-equalizer}). In the
  source-coding setting~\cite{xie2000asymptotic} the minimax and
  maximin regrets coincide. In channel decoding they need not: the merge is an
  asymptotic equalizer only as an \emph{upper} bound on the regret, and on
  specific inputs it can outperform $\gamma_n$ substantially. The worst-case
  regret saturates at $\gamma_n$ while the average under a specific reverse
  channel may fall strictly below it --- a structural asymmetry reflecting the
  directional nature of decoding.
\end{remark}

\begin{example}[DMC type-based metrics]
  \label{ex:uc-dmc-regret}
  For a DMC family with metrics depending only on joint types, the family size
  is polynomial, $K_n=O(n^{|\cX||\cY|-1})$, so
  $\gamma_n\le\tfrac{|\cX||\cY|-1}{n}\log n+O(\log n/n)$, vanishing at rate
  $\log n/n$. The merge metric --- which \cref{sec:dmc} shows reduces to MMI
  --- has worst-case regret on $\cF_n$ bounded by the same rate.
\end{example}

\begin{remark}[Competitive-minimax reading, and a Shtarkov identity]
  \label{rem:uc-fm-competitive}
  In the competitive-minimax program of Feder and
  Merhav~\cite{feder2002competitive}, universality over a channel family
  $\{W_\theta\}$ is measured by the largest $\xi\in[0,1]$ --- the
  universally achievable fraction of the matched benchmark exponent ---
  keeping the minimax ratio
  $\min_{\varphi}\max_\theta P_e(\varphi;\theta)/[P_e^\star(\theta)]^{\xi}$
  subexponential; $\xi^\star$ can be strictly smaller than $1$, and Akirav
  and Merhav~\cite{akirav2007competitive} establish $\xi=1$ at the
  \emph{random-coding} benchmark for several ensembles. The present
  results give the general form of that phenomenon: for any decoupled pair
  $(\Mset,\Cset)$ under (H-sep) (Feder--Merhav's pairing is the tied
  special case $\Mset=\{\log W_\theta\}$, $\Cset=\{W_\theta\}$), with
  \[
    \Theta_n(R)\;\triangleq\;\inf_{u^n}\;
    \sup_{\bbW\in\Cset,\;\bbm\in\Mset}
    \frac{\E{P_e(C^n,W^n,u^n)}}{\E{P_e(C^n,W^n,m^n)}}
  \]
  (supremum over benchmarks with $\E{P_e(C^n,W^n,m^n)}\ge e^{-n^2/2}$,
  say), \cref{lem:uc-pep-bound-u} gives
  $\Theta_n(R)\le s_nK_n(1+n^2)(1+o(1))=e^{o(n)}$: $\xi=1$ by a single
  rate-free decoder, for every prior, with no benchmark exponents in the
  rule --- the merge normalizes each member by the exact conditional law of
  its own score at the observed $y$ (\cref{prop:pep-uniform}), where the
  Feder--Merhav weight $e^{n\xi E^\star(\theta)}$ is a global,
  rate-dependent constant. The $\xi^\star<1$ phenomenon lives precisely
  where (H-sep) fails. Finally, in the Kraft reading of \cref{rem:uc-kraft}
  this minimax analysis is Shtarkov's theory transplanted: the merge is the
  normalized-maximum-likelihood rule of the canonicalized family,
  $\gamma_n$ its complexity (the log envelope mass), and
  \cref{thm:univ-minimax} a decoding analogue of Shtarkov's minimax-regret
  theorem~\cite{shtarkov1987universal}, cf.~\cite{xie2000asymptotic}.
\end{remark}

%% file: sections/04-dmc.tex

\section{Discrete Memoryless and Finite-State Channels}\label{sec:dmc}

We instantiate the merge construction of \cref{sec:univ-metric} on two
families with discrete structure: the discrete memoryless channels (DMCs) and
the finite-state metrics. Both admit a direct polynomial-cardinality
reduction through the method of types~\cite{csiszar1998method} --- no
Lipschitz/covering argument is needed. As specializations the merge recovers the classical maximum-mutual-information
(MMI) decoder of Goppa~\cite{goppa1975nonprobabilistic},
Csisz\'ar~\cite{csiszar1974extremum}, and
Csisz\'ar--K\"orner~\cite{csiszar1981information}
(\cref{thm:uc-mmi}), the channel-universal Feder--Lapidoth decoder~\cite{feder1998universal}
(\cref{cor:uc-fl}), and the finite-state/Markov generalizations
of~\cite{elkayam2014universal} (cf.\ \cref{rem:uc-normal-relaxed}).

\paragraph{Setup.}
Fix finite alphabets $\cX,\cY$. The DMC family is
$\{\bbW:W^n=W^{\otimes n},\,W\in\cP(\cY\mid\cX)\}$, with matched-ML metric
\begin{equation}\label{eq:uc-dmc-metric}
  m_W(x^n,y^n)=\log W^n(y^n\mid x^n)
  =n\sum_{a\in\cX,\,b\in\cY}T(a,b)\,\log W(b\mid a),
\end{equation}
which depends on $(x^n,y^n)$ only through the joint type $T=T_{x^n y^n}$. We
write $T_X,T_Y$ for its marginals and recall the empirical mutual information
\begin{equation}\label{eq:uc-emp-mi}
  \Ihat(T)\;\triangleq\;H(T_X)+H(T_Y)-H(T)
  \;=\;\sum_{a,b}T(a,b)\log\frac{T(a,b)}{T_X(a)T_Y(b)} .
\end{equation}

\subsection{Reduction to types}
\label{subsec:uc-dmc-types}

The matched-ML family $\{m_W:W\in\cP(\cY\mid\cX)\}$ is a continuum, but its
value \eqref{eq:uc-dmc-metric} is a linear functional of $\log W$ applied to
$T$. Since length-$n$ joint types form a polynomial-cardinality set, the family
has only polynomially many distinct rankings of joint types --- and the merge
sees a metric only through such rankings.

\begin{lemma}[Polynomial type-based family]
  \label{lem:uc-dmc-polytypes}
  The number of joint types over $\cX\times\cY$ is
  $\binom{n+|\cX||\cY|-1}{|\cX||\cY|-1}=O(n^{|\cX||\cY|-1})$. For each joint
  type $T$, let $W_T\in\cP(\cY\mid\cX)$ be its empirical channel,
  $W_T(b\mid a)=T(a,b)/T_X(a)$ when $T_X(a)>0$ (arbitrary otherwise). The
  finite family $\Md^n\triangleq\{m_{W_T}:T\in\cT_n(\cX\times\cY)\}$ has
  cardinality $K_n=O(n^{|\cX||\cY|-1})$.
\end{lemma}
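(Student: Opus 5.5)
The count of joint types is a stars-and-bars computation, and the bound on $K_n$ follows from it by injectivity of the indexing map; no earlier result of the paper is needed.

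\emph{Counting joint types.} A joint type of a length-$n$ pair sequence is determined by the count vector $(N(a,b))_{(a,b)\in\cX\times\cY}$ of nonnegative integers summing to $n$, via $T(a,b)=N(a,b)/n$. Conversely, every such vector is realized by some pair $(x^n,y^n)$: list each symbol pair $(a,b)$ exactly $N(a,b)$ times. So joint types are in bijection with the nonnegative integer solutions of $\sum_{i=1}^{d}N_i=n$, where $d=|\cX||\cY|$. Stars and bars gives exactly $\binom{n+d-1}{d-1}$ solutions. For the asymptotic form, bound
\[
  \binom{n+d-1}{d-1}\le\frac{(n+d-1)^{d-1}}{(d-1)!}=O(n^{d-1}),
\]
or use the cruder $(n+1)^{d}$ if only polynomial growth matters.

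\emph{Bounding $K_n$.} $\Md^n$ is the image of the map $T\mapsto m_{W_T}$ on $\cT_n(\cX\times\cY)$. Hence $K_n=|\Md^n|\le|\cT_n(\cX\times\cY)|=O(n^{|\cX||\cY|-1})$. Collisions between distinct types only shrink the family; they arise, for example, when two types share the same empirical channel, or from the arbitrary choice of $W_T(\cdot\mid a)$ when $T_X(a)=0$. The lemma states $K_n$ as the cardinality, so I read it as an upper bound.

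Two routine checks remain. First, $W_T$ is a valid channel: each row $T(a,\cdot)/T_X(a)$ sums to one, and unused rows are assigned any fixed distribution. Second, $m_{W_T}$ may take the value $-\infty$ wherever $W_T(b\mid a)=0$. This is permitted by the extended-real convention of \cref{subsec:uc-notation}.

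There is no real obstacle here. The only point needing care is that types are counted, not metrics, which gives an upper bound on $K_n$. That is all that (H-sub) requires, since $\tfrac1n\log K_n=O(\log n/n)\to0$.
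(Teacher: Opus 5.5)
Your proof is correct and follows the paper's approach: the paper's proof is just ``standard type counting'' with a citation to Csisz\'ar--K\"orner, and your stars-and-bars bijection plus the image bound $K_n\le|\cT_n(\cX\times\cY)|$ carry out that argument explicitly. Reading $K_n=O(n^{|\cX||\cY|-1})$ as an upper bound, with collisions only shrinking the family, is the right interpretation and is all that (H-sub) requires.
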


\begin{IEEEproof}
  Standard type counting~\cite{csiszar1981information}.
\end{IEEEproof}

Applying \cref{def:merge-metric} to $\Md^n$ gives a universal metric with merge
penalty $\tfrac1n\log K_n=O(\log n/n)$. Different $W$ inducing the same ranking
of joint types are equivalent for the merge; the polynomial cardinality is a
feature of the type structure, not of any continuity of the family.

\subsection{Uniform prior on a type class: recovery of MMI}
\label{subsec:uc-dmc-mmi}

Fix an input type $T_X^\star\in\cT_n(\cX)$ and take the prior
$Q_X^n=\Unif{\cT(T_X^\star)}$, so every codeword has type exactly $T_X^\star$.

\begin{theorem}[The merge metric reduces to MMI]
  \label{thm:uc-mmi}
  For $Q_X^n=\Unif{\cT(T_X^\star)}$ and the type-based family $\Md^n$ of
  \cref{lem:uc-dmc-polytypes}, the merge metric $U^n$ of
  \cref{def:merge-metric} satisfies
  \begin{equation}\label{eq:uc-mmi}
    U^n(x^n,y^n)\;=\;n\,\Ihat(T_{x^n y^n})+o(n)
  \end{equation}
  uniformly in $(x^n,y^n)\in\cT(T_X^\star)\times\cY^n$. Hence the merge decoder
  $\hat\imath(y^n)=\argmax_i U^n(x_i^n,y^n)$ is exponent-equivalent to the MMI
  decoder $\hat\imath_{\MMI}(y^n)=\argmax_i\Ihat(T_{x_i^n y^n})$: an additive
  $o(n)$ shift of a joint-type-based metric moves the type-counting threshold
  by $o(n)$, so the two PEPs agree within $e^{\pm o(n)}$ at every $(x^n,y^n)$
  and every random-coding exponent coincides.
\end{theorem}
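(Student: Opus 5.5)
The plan is to compute each member's PEP at exponent scale by the method of types, and to show that the minimum over the family is attained, up to polynomial factors, by the member indexed by the \emph{observed} joint type. Fix $(x^n,y^n)$ with joint type $T$, so $T_X=T_X^\star$. Under $Q_X^n=\Unif{\cT(T_X^\star)}$ with $y^n$ fixed, a competitor $\tilde X^n$ has a joint type $\tilde T$ with marginals $(T_X^\star,T_Y)$. The standard conditional type-class estimate gives
\[
Q_X^n\{T_{\tilde X^n y^n}=\tilde T\}=e^{-n\Ihat(\tilde T)+O(\log n)},
\]
uniformly over such $\tilde T$. Since $m_{W'}$ is the linear functional $n\langle\tilde T,\log W'\rangle$, the tie-as-error PEP of any member $m_{W'}\in\Md^n$ is a sum over at most $K_n$ joint types $\tilde T$ with those marginals and $\langle\tilde T,\log W'\rangle\ge\langle T,\log W'\rangle$. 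Hence it equals $e^{O(\log n)}\exp\{-n\min\Ihat(\tilde T)\}$, the minimum taken over that feasible set.

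\emph{Lower bound on every PEP.} For every member $W'$ the set of feasible $\tilde T$ contains $\tilde T=T$ itself, because ties are charged as errors. So $\PEPm{x^n}{y^n}{m_{W'}}\ge e^{-n\Ihat(T)-O(\log n)}$, and therefore $-\log\min_k\PEPm{x^n}{y^n}{m_k^n}\le n\Ihat(T)+O(\log n)$.

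\emph{Upper bound via the member $W_T$.} Take $W'=W_T$. The key identity is
\[
\langle\tilde T,\log W_T\rangle=-H_{\tilde T}(Y\mid X)-D(\tilde T\,\|\,T_X^\star\times W_T),
\]
valid because both types have $X$-marginal $T_X^\star$. The right-hand side is $-\infty$ when $\tilde T$ charges a pair on which $W_T=0$; such $\tilde T$ are infeasible, which is consistent with the extended-real convention. We also have $\langle T,\log W_T\rangle=-H_T(Y\mid X)$. Feasibility of $\tilde T$ therefore forces $H_{\tilde T}(Y\mid X)\le H_T(Y\mid X)$. Since the $Y$-marginal $T_Y$ is common to $\tilde T$ and $T$, this gives $\Ihat(\tilde T)\ge\Ihat(T)$. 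Summing over the polynomially many feasible $\tilde T$ yields
\[
\PEPm{x^n}{y^n}{m_{W_T}}\le K_n\,e^{O(\log n)}\,e^{-n\Ihat(T)}.
\]
Combining this with the lower bound, $-\log\min_k\PEPm{x^n}{y^n}{m_k^n}=n\Ihat(T)+O(\log n)$, uniformly in $(x^n,y^n)$. The clip is inactive because $n\Ihat(T)\le n\log|\cX|<n^2$ for large $n$, so \eqref{eq:uc-mmi} holds with an $O(\log n)$ error term.

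\emph{Exponent equivalence.} Write $U^n=n\Ihat+\epsilon_n$ with $|\epsilon_n|\le c\log n$. The PEP of $U^n$ at $(x^n,y^n)$ is at most the $Q_X^n$-mass of competitor types with $\Ihat(\tilde T)\ge\Ihat(T)-2c\log n/n$. By the type estimate this is at most $e^{-n\Ihat(T)+O(\log n)}$. The MMI PEP is at least the mass of $\tilde T=T$, namely $e^{-n\Ihat(T)-O(\log n)}$. The symmetric argument gives the reverse comparison, so the two PEPs agree within $e^{\pm O(\log n)}$ at every pair. \Cref{prop:uc-erc-pep} then transfers this to equality of the random-coding exponents.

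The main obstacle is the upper-bound step: choosing the right competitor member $W_T$ and verifying the entropy/divergence identity carefully. In particular one must handle zero entries of $W_T$, for which $-\infty$ metric values must be treated as infeasible rather than as ties. The type-counting estimates themselves are routine.
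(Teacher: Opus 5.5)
Your proposal is correct and follows the paper's proof essentially step for step. The tie mass at the observed type lower-bounds every member's PEP, the member $m_{W_T}$ matches it via the splitting identity, and the exponent transfer is the same containment-plus-tie-mass argument. Your form $-H_{\tilde T}(Y\mid X)-D(\tilde T\,\|\,T_X^\star\times W_T)$ is the paper's $\langle\tilde T,g_T\rangle+D(\tilde T\,\|\,T)$ rewritten, since $T=T_X^\star\times W_T$ and the $Y$-marginal is shared.

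The one point your ``symmetric argument'' leaves implicit is that $U^n(\cdot,y^n)$ depends on $\tilde x^n$ only through $T_{\tilde x^n y^n}$, because each member's PEP does. That fact is what supplies the tie-mass lower bound on the PEP of $U^n$; the two-sided $O(\log n)$ sandwich alone would not give it.
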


\begin{IEEEproof}
  Fix $(x^n,y^n)\in\cT(T_X^\star)\times\cY^n$ with joint type
  $T=T_{x^n y^n}$; the $n^2$ clip is inactive for the exponent claim.

  \emph{Term $T'=T$.} The candidate $\tilde x^n$ contributes to
  $\PEPm{x^n}{y^n}{m_{W_T}}$ iff $m_{W_T}(\tilde x^n,y^n)\ge m_{W_T}(x^n,y^n)$,
  a linear inequality in $T_{\tilde x^n y^n}$. Type counting~\cite[Lemma~2.6]{csiszar1981information}
  gives $\PEPm{x^n}{y^n}{m_{W_T}}=e^{-n\Ihat(T)+o(n)}$: the lower bound from the
  single type $T'=T$ (sequences of joint type $T$ tie under $m_{W_T}$ and are
  charged in full by the tie-as-error rule), the upper bound from the
  conditional-type estimate $|\cT(T_{X\mid Y}\mid y^n)|\le e^{nH(T_{X\mid Y}\mid T_Y)}$
  together with the splitting-identity step in the proof of
  \cref{lem:uc-typemass-dom}, which shows that every joint type contributing
  to the PEP event has empirical information at least $\Ihat(T)$.

  \emph{Minimum over $T'$.} Taking $T'=T$ gives
  $\min_{T'}\PEPm{x^n}{y^n}{m_{W_{T'}}}\le e^{-n\Ihat(T)+o(n)}$. For the matching
  lower bound, $m_{W_{T'}}$ depends on $(\tilde x^n,y^n)$ only through the joint
  type \eqref{eq:uc-dmc-metric}, so every $\tilde x^n\in\cT(T_X^\star)$ of joint
  type $T$ with $y^n$ ties with $x^n$ under \emph{every} $T'$ and is counted in
  the PEP. These contribute mass
  $|\cT(T\mid y^n)|/|\cT(T_X^\star)|=e^{-n\Ihat(T)+o(n)}$, using
  $H(T_X^\star)-H(T_{X\mid Y})=\Ihat(T)$. Hence
  $\min_{T'}\PEPm{x^n}{y^n}{m_{W_{T'}}}\ge e^{-n\Ihat(T)+o(n)}$, and
  $U^n=-\log\min_{T'}\PEPm{x^n}{y^n}{m_{W_{T'}}}=n\Ihat(T)+o(n)$.
\end{IEEEproof}

\begin{lemma}[Type-mass domination]
  \label{lem:uc-typemass-dom}
  Let $Q_X^n=\Unif{\cT(T_X^\star)}$. For every metric $m^n$ that depends on
  $(x^n,y^n)$ only through
  the joint type $T_{x^ny^n}$ --- in particular the matched-ML metric $m_W$ of
  every DMC $W$, by \eqref{eq:uc-dmc-metric} --- at every
  $(x^n,y^n)\in\cT(T_X^\star)\times\cY^n$,
  \begin{equation}\label{eq:uc-mmi-domination}
    \min_{T'\in\cT_n(\cX\times\cY)}\PEPm{x^n}{y^n}{m_{W_{T'}}}
    \;\le\;(n+1)^{2|\cX||\cY|+|\cX|}\,\PEPm{x^n}{y^n}{m^n},
  \end{equation}
  the uniform-on-type-class analogue of the domination clause of
  \cref{thm:uc-tilted}(ii) below.
\end{lemma}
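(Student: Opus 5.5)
The plan is to compare both sides of \eqref{eq:uc-mmi-domination} with the common benchmark $e^{-n\Ihat(T)}$, where $T=T_{x^ny^n}$. Fix $(x^n,y^n)\in\cT(T_X^\star)\times\cY^n$. Since the left side is a minimum over $T'$, it suffices to take $T'=T$ and prove two bounds: an upper bound $\PEPm{x^n}{y^n}{m_{W_T}}\le (n+1)^{|\cX||\cY|+|\cX|}e^{-n\Ihat(T)}$, and a lower bound $\PEPm{x^n}{y^n}{m^n}\ge (n+1)^{-|\cX||\cY|}e^{-n\Ihat(T)}$ valid for every joint-type metric $m^n$. Dividing the two bounds gives the prefactor $(n+1)^{2|\cX||\cY|+|\cX|}$.

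\emph{Lower bound (ties).} The metric $m^n$ depends only on the joint type. Hence every competitor $\tilde x^n\in\cT(T_X^\star)$ with $T_{\tilde x^ny^n}=T$ ties with $x^n$, and the tie-as-error rule \eqref{eq:uc-pep-tie} charges it in full. The $Q_X^n$-mass of this set is $|\cT(T\mid y^n)|/|\cT(T_X^\star)|$. For the numerator we use the conditional type class estimate $|\cT(T\mid y^n)|\ge (n+1)^{-|\cX||\cY|}e^{nH(T_{X\mid Y}\mid T_Y)}$. For the denominator we use $|\cT(T_X^\star)|\le e^{nH(T_X^\star)}$. Combining these with $H(T_X^\star)-H(T_{X\mid Y}\mid T_Y)=\Ihat(T)$ gives the claimed lower bound, uniformly over all type-based $m^n$.

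\emph{Upper bound (splitting identity).} A competitor $\tilde x^n\in\cT(T_X^\star)$ with joint type $T'$ has $T'_X=T_X^\star=T_X$ and $T'_Y=T_Y$. We write
\[
\log W_T(b\mid a)=\log\frac{T(a,b)}{T_X(a)T_Y(b)}+\log T_Y(b).
\]
The second term integrates to the same value under $T'$ and under $T$, because the $Y$-marginals agree. So the PEP event $m_{W_T}(\tilde x^n,y^n)\ge m_{W_T}(x^n,y^n)$ reads $\sum T'\log\frac{T}{T_XT_Y}\ge\Ihat(T)$. Using the shared marginals again, the left side equals $\Ihat(T')-D(T'\|T)$. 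The event therefore forces $\Ihat(T')\ge\Ihat(T)+D(T'\|T)\ge\Ihat(T)$.

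Each contributing conditional type has $Q_X^n$-mass at most $e^{nH(T'_{X\mid Y}\mid T_Y)}/|\cT(T_X^\star)|$. We bound the denominator below by $|\cT(T_X^\star)|\ge (n+1)^{-|\cX|}e^{nH(T_X^\star)}$. This makes each such mass at most $(n+1)^{|\cX|}e^{-n\Ihat(T')}\le(n+1)^{|\cX|}e^{-n\Ihat(T)}$. Summing over at most $(n+1)^{|\cX||\cY|}$ joint types gives the upper bound.

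The main obstacle is the bookkeeping in the splitting identity when $T$ has zeros. Then $\log W_T=-\infty$ off the support of $T$, and any $T'$ that charges such a cell has metric $-\infty$. Such a $T'$ can enter the PEP event only if $m_{W_T}(x^n,y^n)=-\infty$ as well, which is impossible because $T'=T$ has finite score. So only $T'\ll T$ contribute, and on those the identity and $D(T'\|T)<\infty$ are legitimate. The arbitrary choice of $W_T(\cdot\mid a)$ when $T_X(a)=0$ is harmless, since such $a$ carry no mass under any admissible $T'$ (because $T'_X=T_X$).
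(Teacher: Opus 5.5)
Your proposal is correct and follows the paper's proof step for step. It uses the same tie-mass lower bound $(n+1)^{-|\cX||\cY|}e^{-n\Ihat(T)}$ for every joint-type-based metric, and the same upper bound for the member $T'=T$ via support containment and the splitting identity $\Ihat(\tilde T)=\Ihat(T)+D(\tilde T\|T)$ on the PEP event, summed over at most $(n+1)^{|\cX||\cY|}$ joint types, with zero cells of $T$ handled exactly as in the paper (only $\tilde T\ll T$ can enter the event).
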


\begin{IEEEproof}[Proof sketch]
  The tie mass at the observed joint type $T$ lower-bounds every
  joint-type-based PEP by the type-class mass $e^{-n\Ihat(T)}$, up to
  polynomial factors; the family member $m_{W_T}$ matches this bound, because
  on its PEP event the splitting identity
  $\Ihat(\tilde T)=\langle\tilde T,g_T\rangle+D(\tilde T\|T)$ forces
  $\Ihat(\tilde T)\ge\Ihat(T)$. Type counting supplies the polynomial
  factors; the computation is in Appendix~\ref{app:dmc-typemass}.
\end{IEEEproof}

\begin{corollary}[MMI is universal for fixed-type input]
  \label{cor:uc-mmi-universal}
  For every DMC $W\in\cP(\cY\mid\cX)$ and every input type $T_X^\star$, with
  $Q_X^n=\Unif{\cT(T_X^\star)}$ the MMI decoder achieves the matched-ML
  random-coding exponent at every rate $R$:
  $\Erc(R;\bbW,Q_X^n,U^n)\ge\Erc(R;\bbW,Q_X^n,m_W)$.
\end{corollary}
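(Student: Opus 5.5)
The plan is to obtain \cref{cor:uc-mmi-universal} by chaining \cref{thm:univ-rc} (in its (H-sep) form) with \cref{lem:uc-typemass-dom}, and then transferring from the merge $U^n$ to the MMI decoder through \cref{thm:uc-mmi}.

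\emph{Step 1: separability.} Take as representative family the type-based family $\Md^n=\{m_{W_{T'}}:T'\in\cT_n(\cX\times\cY)\}$ of \cref{lem:uc-dmc-polytypes}. Its cardinality is $K_n=O(n^{|\cX||\cY|-1})$, so it has subexponential growth. The competitor metric $m_W$ depends on $(x^n,y^n)$ only through the joint type, by \eqref{eq:uc-dmc-metric}. Hence \cref{lem:uc-typemass-dom} applies at every $(x^n,y^n)\in\cT(T_X^\star)\times\cY^n$, which is the entire support of $Q_X^n$. It gives
\[
\min_{T'}\PEPm{x^n}{y^n}{m_{W_{T'}}}\le s_n\,\PEPm{x^n}{y^n}{m_W},
\qquad s_n=(n+1)^{2|\cX||\cY|+|\cX|}.
\]
The slack $s_n$ is polynomial and uniform over $W$. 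This is (H-sep) for the pair $\Mset=\{m_W\}$, $\Cset=\{\bbW\}$. One small point needs checking: the joint law $Q_X^n\cdot W^n$ puts no mass off the type class, so the pointwise domination is only needed there. Equivalently, one can define the metrics arbitrarily (e.g.\ as $-\infty$) off $\cT(T_X^\star)$ without changing any PEP.

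\emph{Step 2: universality of the merge.} Chaining \eqref{eq:uc-pep-u-min} with this domination gives, pointwise,
\[
\PEPm{x^n}{y^n}{U^n}\le K_n(1+n^2)\max\{s_n\PEPm{x^n}{y^n}{m_W},e^{-n^2}\}.
\]
Substituting into \cref{prop:uc-erc-pep} exactly as in the proof of \cref{thm:univ-rc} yields $\Erc(R;\bbW,Q_X^n,U^n)\ge\Erc(R;\bbW,Q_X^n,m_W)$ at every $R$. The additive residual $K_n(1+n^2)e^{nR-n^2}$ is super-exponentially small, and the prefactor $s_nK_n(1+n^2)$ is polynomial.

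\emph{Step 3: transfer to MMI.} I expect this to be the only delicate step. The corollary is stated for the MMI decoder, but Step 2 concerns $U^n$. By \cref{thm:uc-mmi}, $U^n=n\Ihat(T)+o(n)$ uniformly, and the two PEPs agree within $e^{\pm o(n)}$ at every $(x^n,y^n)$. In particular,
\[
\PEPm{x^n}{y^n}{\MMI}\le e^{o(n)}\PEPm{x^n}{y^n}{U^n}.
\]
Plugging this pointwise bound into the functional $\E{\min\{1,e^{nR}\,\cdot\,\}}$ costs only $e^{o(n)}$, so the MMI exponent is at least that of $U^n$.

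If I wanted to avoid relying on the ``$o(n)$ shift'' assertion of \cref{thm:uc-mmi}, I would argue directly. The MMI PEP at $(x^n,y^n)$ is the $Q_X^n$-mass of $\{\tilde x^n:\Ihat(T_{\tilde x^n y^n})\ge\Ihat(T)\}$. By type counting, using $|\cT(T'\mid y^n)|/|\cT(T_X^\star)|\le e^{-n\Ihat(T')}\mathrm{poly}(n)$ and summing over polynomially many types, this mass is at most $\mathrm{poly}(n)\,e^{-n\Ihat(T)}$. The second half of \cref{thm:uc-mmi} then gives $e^{-n\Ihat(T)}\le e^{o(n)}\min_{T'}\PEPm{x^n}{y^n}{m_{W_{T'}}}$. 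Combining this with the domination of Step 1 gives $\PEPm{x^n}{y^n}{\MMI}\le e^{o(n)}\PEPm{x^n}{y^n}{m_W}$ directly, and Step 2's computation then finishes the proof for MMI itself.
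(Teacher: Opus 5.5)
Your proposal is correct and follows the paper's own proof. As the paper does, you substitute the type-mass domination of \cref{lem:uc-typemass-dom} (with $m^n=m_W$) into \eqref{eq:uc-pep-u-min}, rerun the exponent computation of \cref{thm:univ-rc} with the enlarged polynomial prefactor, and then pass from $U^n$ to the MMI decoder through the pointwise $e^{\pm o(n)}$ PEP agreement of \cref{thm:uc-mmi}. Your fallback for Step 3 is also sound, and it can be made shorter: type counting bounds the MMI PEP by $\mathrm{poly}(n)\,e^{-n\Ihat(T)}$, and the tie-mass bound (a) in the proof of \cref{lem:uc-typemass-dom} already gives $\PEPm{x^n}{y^n}{m_W}\ge(n+1)^{-|\cX||\cY|}e^{-n\Ihat(T)}$, so MMI is pointwise dominated by $m_W$ without going through the merge at all.
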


\begin{IEEEproof}[Proof sketch]
  Step 1 substitutes the domination \eqref{eq:uc-mmi-domination} into the
  merge bound of \cref{lem:uc-pep-bound-u} and repeats the exponent
  computation of \cref{thm:univ-rc} with the enlarged --- still
  subexponential --- prefactor; neither $\Md^n$ nor $U^n$ depends on $W$, so
  a single decoder serves every DMC simultaneously. Step 2 transfers the
  guarantee to the MMI decoder proper: by \cref{thm:uc-mmi} the two metrics
  differ by an exponent-neutral $o(n)$ shift, so their PEPs agree within
  $e^{\pm o(n)}$ pointwise. No step of the classical Csisz\'ar--K\"orner
  analysis is imported. Full proof in Appendix~\ref{app:dmc-typemass}.
\end{IEEEproof}

\begin{remark}[Recovery of Goppa's MMI]
  \label{rem:uc-goppa}
  \Cref{cor:uc-mmi-universal} is Goppa's universal decoder~\cite{goppa1975nonprobabilistic},
  recovered here as a direct corollary of the merge theorem on the type-based
  family. The recovery is self-contained: the guarantee rests on the merge
  bound and type counting alone (the domination clause
  \eqref{eq:uc-mmi-domination}, mirroring the tilted clause of
  \cref{thm:uc-tilted}(ii) below), with the matched-exponent identity of
  Csisz\'ar~\cite{csiszar1974extremum} and
  Csisz\'ar--K\"orner~\cite{csiszar1981information}
  entering only to name the common value.
\end{remark}

\subsection{Memoryless input prior: tilted MMI}
\label{subsec:uc-dmc-memoryless}

For a memoryless prior $Q_X^n=Q_X^{\otimes n}$, the codeword type $T_{X^n}$ is
random (multinomial, mean $Q_X$). The prior decomposes over input types as
$Q_X^{\otimes n}=\sum_{T_X}\pi_n(T_X)\Unif{\cT(T_X)}$ with
$\pi_n(T_X)=e^{-nD(T_X\|Q_X)+o(n)}$; we assume without loss of generality
$Q_X(a)>0$ for every $a\in\cX$. The natural decoder tilts MMI by the
input-type divergence,
\begin{equation}\label{eq:uc-tilted-mmi}
  U^n_{\mathrm{tilt}}(x^n,y^n)\;\triangleq\;
  n\bigl[\Ihat(T_{x^n y^n})+D(T_{x^n}\|Q_X)\bigr],
\end{equation}
the \emph{$Q_X$-tilted MMI} metric. The direction of the tilt is dictated by
the elementary identity
\begin{equation}\label{eq:uc-tilt-identity}
  \Ihat(T)+D(T_X\|Q_X)\;=\;D\bigl(T\,\big\|\,Q_X\otimes T_Y\bigr),
\end{equation}
valid for every joint type $T$: the metric scores $(x^n,y^n)$ by how
improbable it is for an independent competitor $\tilde X^n\sim Q_X^{\otimes n}$
to reproduce the observed joint type. An atypical input type makes the joint
type \emph{harder} to reach by chance, raising the score rather than lowering
it. Equivalently, $D(T\|Q_X\otimes T_Y)$ is the empirical rate of the
posterior-to-prior ratio $\log\bigl(P(x^n\mid y^n)/Q_X^n(x^n)\bigr)$ ---
the matched instance of the Kraft dictionary of \cref{rem:uc-kraft}. We derive \eqref{eq:uc-tilted-mmi} from the merge construction and prove
its universality.

\paragraph{The tilted type-based family.}
For each joint type $T'\in\cT_n(\cX\times\cY)$ define the \emph{tilted type
metric}
\begin{equation}\label{eq:uc-tilted-member}
  \tilde m_{T'}(x^n,y^n)\;\triangleq\;
  n\sum_{a\in\cX,\,b\in\cY}T_{x^ny^n}(a,b)\,
  \log\frac{T'(a,b)}{Q_X(a)\,T'_Y(b)},
\end{equation}
with the convention $\log0=-\infty$, and let
$\tilde\Md^n\triangleq\{\tilde m_{T'}:T'\in\cT_n(\cX\times\cY)\}$, of
cardinality $\tilde K_n\le(n+1)^{|\cX||\cY|}$. Decomposing
$\log\tfrac{T'(a,b)}{Q_X(a)T'_Y(b)}
 =\log W_{T'}(b\mid a)+\log\tfrac{T'_X(a)}{Q_X(a)}-\log T'_Y(b)$, each member
is the matched-ML metric of the empirical channel $W_{T'}$ of
\cref{lem:uc-dmc-polytypes}, shifted per input letter by the Stirling weight
$\log(T'_X(a)/Q_X(a))$ of its own input type (the per-letter reading of the
multinomial estimate above); the last term depends on $y^n$ alone and never
affects a ranking.

\begin{theorem}[Merge metric reduces to tilted MMI]
  \label{thm:uc-tilted}
  For the memoryless prior $Q_X^n=Q_X^{\otimes n}$ with $Q_X(a)>0$ for every
  $a\in\cX$, and the tilted type-based
  family $\tilde\Md^n$, the merge metric $U^n$ of \cref{def:merge-metric}
  satisfies:
  \begin{enumerate}
  \item[(i)] uniformly in $(x^n,y^n)\in\cX^n\times\cY^n$,
    $U^n(x^n,y^n)=n[\Ihat(T_{x^ny^n})+D(T_{x^n}\|Q_X)]+O(\log n)
     =U^n_{\mathrm{tilt}}(x^n,y^n)+O(\log n)$;
  \item[(ii)] \emph{(pointwise domination)} for every metric $m^n$ that
    depends on $(x^n,y^n)$ only through the joint type $T_{x^ny^n}$ --- in
    particular the matched-ML metric $m_W$ of every DMC $W$ ---
    \[
      \min_{\tilde m\in\tilde\Md^n}\PEPm{x^n}{y^n}{\tilde m}
      \;\le\;(n+1)^{2|\cX||\cY|}\,\PEPm{x^n}{y^n}{m^n}
      \qquad\text{at every }(x^n,y^n).
    \]
  \end{enumerate}
\end{theorem}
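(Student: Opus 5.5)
Both parts rest on one estimate, which I will prove first. At a fixed $(x^n,y^n)$ with joint type $T=T_{x^ny^n}$, every joint-type-based PEP under the memoryless prior is at least the $Q_X^{\otimes n}$-mass of the conditional type class $\cT(T\mid y^n)$. The reason is that every competitor with joint type $T$ ties with $x^n$, and ties are charged in full by \eqref{eq:uc-pep-tie}. This mass equals $|\cT(T\mid y^n)|\,e^{-n[H(T_X)+D(T_X\|Q_X)]}$. Standard conditional-type counting gives
\[
(n+1)^{-|\cX||\cY|}e^{-nD(T\|Q_X\otimes T_Y)}\le Q_X^{\otimes n}(\cT(T\mid y^n))\le e^{-nD(T\|Q_X\otimes T_Y)}.
\]
Combined with the identity \eqref{eq:uc-tilt-identity}, this shows that every joint-type-based PEP is at least $(n+1)^{-|\cX||\cY|}e^{-nD(T\|Q_X\otimes T_Y)}$. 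In particular, this lower bound holds for every member $\tilde m_{T'}$ and for an arbitrary $m^n$ as in (ii).

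For the matching upper bound, take the member $\tilde m_T$ indexed by the observed type itself. Its value at a competitor of joint type $\tilde T$ (with the same $y^n$-marginal $T_Y$) is $n\langle\tilde T,g_T\rangle$, where $g_T(a,b)=\log\frac{T(a,b)}{Q_X(a)T_Y(b)}$. I will use the splitting identity
\[
D(\tilde T\|Q_X\otimes T_Y)=\langle\tilde T,g_T\rangle+D(\tilde T\|T),
\]
which holds whenever $\tilde T\ll T$. If $\tilde T\not\ll T$, the metric is $-\infty$. Such a competitor can tie with $x^n$ only if $\tilde m_T(x^n,y^n)=-\infty$, and that is impossible because $\langle T,g_T\rangle=D(T\|Q_X\otimes T_Y)$ is finite. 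So competitors outside $T$'s support never enter the PEP event.

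On the PEP event $\langle\tilde T,g_T\rangle\ge\langle T,g_T\rangle$, the splitting identity then forces $D(\tilde T\|Q_X\otimes T_Y)\ge D(T\|Q_X\otimes T_Y)$. Summing the conditional type-class masses over the at most $(n+1)^{|\cX||\cY|}$ such $\tilde T$ gives
\[
\PEPm{x^n}{y^n}{\tilde m_T}\le(n+1)^{|\cX||\cY|}e^{-nD(T\|Q_X\otimes T_Y)}.
\]

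Part (ii) follows immediately. We have $\min_{\tilde m}\PEPm{x^n}{y^n}{\tilde m}\le\PEPm{x^n}{y^n}{\tilde m_T}$, which is at most $(n+1)^{|\cX||\cY|}e^{-nD(T\|Q_X\otimes T_Y)}$. By the tie lower bound applied to $m^n$, this is in turn at most $(n+1)^{2|\cX||\cY|}\PEPm{x^n}{y^n}{m^n}$.

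For part (i), the two bounds pinch $\min_{T'}\PEPm{x^n}{y^n}{\tilde m_{T'}}$ within a factor $(n+1)^{|\cX||\cY|}$ of $e^{-nD(T\|Q_X\otimes T_Y)}$. The lower bound holds for every $T'$ by the tie argument, and the upper bound is attained at $T'=T$. Hence $-\log\min_{T'}\PEPm{x^n}{y^n}{\tilde m_{T'}}=nD(T\|Q_X\otimes T_Y)+O(\log n)$, and by \eqref{eq:uc-tilt-identity} this equals $U^n_{\mathrm{tilt}}+O(\log n)$.

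It remains to check that the $n^2$ clip is inactive. Since $Q_X(a)>0$ for every letter and $\cX,\cY$ are finite, $D(T\|Q_X\otimes T_Y)\le\log\frac{1}{\min_aQ_X(a)}+\log|\cY|$. So $U^n$ is $O(n)$, which stays below $n^2$ for large $n$; for the finitely many small $n$ the discrepancy is absorbed into the $O(\log n)$ constant. This uniformity in $(x^n,y^n)$ is the one place where the positivity assumption on $Q_X$ is used.

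I expect the main obstacle to be the upper bound for $\tilde m_T$. It requires the careful $-\infty$ and support bookkeeping above, and the splitting identity must be checked with the tilted weights $Q_X(a)$ in the reference measure $Q_X\otimes T_Y$. A useful consistency check is that both $\tilde T$ and $T$ share the $Y$-marginal $T_Y$, since both are joint types with the same $y^n$. This is exactly what makes the $-\log T_Y(b)$ terms cancel, so that $\langle\tilde T,g_T\rangle=\sum\tilde T\log\frac{T}{Q_X\otimes T_Y}$ holds.
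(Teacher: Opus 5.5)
Your proposal is correct and follows the paper's proof essentially step for step. The shared ingredients are the memoryless conditional-type-class sandwich obtained through identity \eqref{eq:uc-tilt-identity}, the tie lower bound valid for every joint-type-based metric, the upper bound for the member $\tilde m_T$ via support containment and the splitting identity, and a uniform bound on $D(T\|Q_X\otimes T_Y)$ that makes the $n^2$ clip inactive. Your clip bound uses $\log|\cY|$ where the paper uses $\log|\cX|$, and both are valid; in fact $D(T\|Q_X\otimes T_Y)=-H_T(X\mid Y)-\sum_a T_X(a)\log Q_X(a)\le\log(1/\min_a Q_X(a))$ already suffices.
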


\begin{IEEEproof}[Proof sketch]
  Under the memoryless prior the mass of a conditional type class carries the
  tilted weight: $P_n(\tilde T)=e^{-nD(\tilde T\|Q_X\otimes T_Y)}$ up to
  polynomial factors, by \eqref{eq:uc-tilt-identity}. With this sandwich the
  two halves of the proof of \cref{lem:uc-typemass-dom} apply verbatim with
  $g_T=\log\tfrac{T}{Q_X\otimes T_Y}$: the tie mass gives the lower bound,
  the member $\tilde m_T$ matches it via the splitting identity, and the
  uniform bound $D(T\|Q_X\otimes T_Y)\le\log|\cX|+\log(1/\min_aQ_X(a))$
  keeps the $n^2$ clip inactive. Full proof in
  Appendix~\ref{app:dmc-typemass}.
\end{IEEEproof}

\begin{corollary}[Tilted MMI is universal for memoryless input]
  \label{cor:uc-tilted-universal}
  For every DMC $W\in\cP(\cY\mid\cX)$ and memoryless input
  $Q_X^n=Q_X^{\otimes n}$ with full-support $Q_X$, the merge metric $U^n$ over $\tilde\Md^n$ ---
  equal to $U^n_{\mathrm{tilt}}$ up to $O(\log n)$ by
  \cref{thm:uc-tilted}(i) --- achieves the matched-ML random-coding exponent
  at every rate $R$:
  $\Erc(R;\bbW,Q_X^n,U^n)\ge\Erc(R;\bbW,Q_X^n,m_W)$.
\end{corollary}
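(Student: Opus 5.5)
The plan follows the template of \cref{cor:uc-mmi-universal}, with \cref{thm:uc-tilted} playing the role of \cref{lem:uc-typemass-dom}. Fix a DMC $W$ and a full-support $Q_X$. The benchmark metric $m_W$ of \eqref{eq:uc-dmc-metric} depends on $(x^n,y^n)$ only through the joint type. So \cref{thm:uc-tilted}(ii) applies and gives, at every $(x^n,y^n)$,
\[
\min_{\tilde m\in\tilde\Md^n}\PEPm{x^n}{y^n}{\tilde m}\le s_n\,\PEPm{x^n}{y^n}{m_W}, \qquad s_n=(n+1)^{2|\cX||\cY|}.
\]
This is exactly the (H-sep) domination of \cref{def:uc-hsep} for the one-member family $\{m_W\}$ (indeed for all type-based metrics), with representative $\tilde\Md$. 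Its cardinality $\tilde K_n\le(n+1)^{|\cX||\cY|}$ is polynomial, so (H-sub) holds.

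Next, substitute this domination into the merge bound \eqref{eq:uc-pep-u-min} of \cref{lem:uc-pep-bound-u}. This gives
\[
\PEPm{x^n}{y^n}{U^n}\le\tilde K_n(1+n^2)\max\bigl\{s_n\PEPm{x^n}{y^n}{m_W},e^{-n^2}\bigr\}.
\]
Then repeat the computation in the proof of \cref{thm:univ-rc}. Bound $\E{\min\{1,e^{nR}\PEPm{X^n}{Y^n}{U^n}\}}$ by two terms: $\tilde K_n(1+n^2)s_n\,\E{\min\{1,e^{nR}\PEPm{X^n}{Y^n}{m_W}\}}$, plus a residual $\tilde K_n(1+n^2)e^{nR-n^2}$. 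Under $(X^n,Y^n)\sim Q_X^{\otimes n}\cdot W^{\otimes n}$ the prefactor is polynomial and the residual is super-exponentially small. Taking $-\tfrac1n\log$ and the $\liminf$, and invoking \cref{prop:uc-erc-pep} on both sides, gives $\Erc(R;\bbW,Q_X^n,U^n)\ge\Erc(R;\bbW,Q_X^n,m_W)$.

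The one point needing care is the residual. If $\Erc(R;m_W)$ is finite, the residual is negligible against $e^{-n\Erc}$. If $\Erc(R;m_W)=\infty$, the residual still decays like $e^{-n^2+O(n)}$, so the merge's exponent is $\infty$ as well. The claim therefore holds in $[0,\infty]$.

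Finally, neither $\tilde\Md^n$ nor $U^n$ depends on $W$, so one decoder serves every DMC simultaneously. By \cref{thm:uc-tilted}(i), $U^n$ differs from $U^n_{\mathrm{tilt}}$ by $O(\log n)$ uniformly. The conclusion therefore transfers to the tilted-MMI metric \eqref{eq:uc-tilted-mmi} by the same argument used in Step 2 of \cref{cor:uc-mmi-universal}: an $O(\log n)$ additive shift of a joint-type-based metric changes the PEP by at most a polynomial factor.

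I do not expect a real obstacle, since the substantive work sits in \cref{thm:uc-tilted}(ii). The step I would check most carefully is that the tie-as-error convention and the $\log 0=-\infty$ values of the tilted members do not break Lemma~\ref{lem:uc-pep-bound-u}. They do not: that lemma holds for arbitrary extended-real metrics, and full support of $Q_X$ keeps $U^n$ below the $n^2$ clip.
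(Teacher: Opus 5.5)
Your proof is correct and is the paper's own argument: apply \cref{thm:uc-tilted}(ii) to $m_W$, substitute into \eqref{eq:uc-pep-u-min}, and rerun the exponent computation of \cref{thm:univ-rc} with the enlarged polynomial prefactor, noting that neither $\tilde\Md^n$ nor $U^n$ depends on $W$. Your closing transfer to $U^n_{\mathrm{tilt}}$ is not needed, since the statement concerns the merge $U^n$. The principle you state there is also false in general: an $O(\log n)$ shift of an arbitrary joint-type-based metric can reorder its types freely. The transfer works here only because $U^n_{\mathrm{tilt}}$ equals, up to polynomial factors, the negative log prior mass of the conditional type class, by \eqref{eq:uc-typemass}.
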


\begin{IEEEproof}
  The matched-ML metric $m_W$ depends on $(x^n,y^n)$ only through the joint
  type \eqref{eq:uc-dmc-metric}, so \cref{thm:uc-tilted}(ii) applies to it.
  Substituting the domination bound into \eqref{eq:uc-pep-u-min} of
  \cref{lem:uc-pep-bound-u} gives, at every $(x^n,y^n)$,
  $\PEPm{x^n}{y^n}{U^n}\le\tilde K_n(1+n^2)(n+1)^{2|\cX||\cY|}
   \max\{\PEPm{x^n}{y^n}{m_W},e^{-n^2}\}$, and repeating the proof of
  \cref{thm:univ-rc} with the enlarged --- still subexponential --- prefactor
  yields the exponent inequality. Neither $\tilde\Md^n$ nor $U^n$ depends on
  $W$, so a single decoder serves every DMC simultaneously.
\end{IEEEproof}

\begin{remark}[Relation to Merhav's metric-class universality; sign of the tilt]
  \label{rem:uc-tilted-sign}
  Merhav~\cite{merhav:universal} constructs a universal decoder relative to a
  given \emph{class of decoding metrics}: the decoder scores $(x^n,y^n)$ by
  $-\log$ of the prior mass of the set of inputs metric-equivalent to $x^n$
  under every metric in the class. For the class of additive metrics over
  finite alphabets under an i.i.d.\ prior, that score becomes
  $n[\Ihat(T)+D(T_X\|Q_X)]+o(n)$ --- the tilted-MMI metric
  \eqref{eq:uc-tilted-mmi} --- and his Theorem~1 delivers the matched
  random-coding exponent for every DMC. \Cref{thm:uc-tilted} recovers the same
  metric and guarantee within the merge framework: the merge scores
  $(x^n,y^n)$ by the best-in-family PEP, and the PEP of the best tilted member
  is, up to polynomial factors, exactly the prior mass of the observed
  joint-type class.

  The sign of the tilt deserves emphasis. Under the memoryless prior,
  atypical input types are exponentially unlikely,
  $\pi_n(T_X)=e^{-nD(T_X\|Q_X)+o(n)}$, and one might be tempted to
  \emph{downweight} atypical codewords by this multinomial weight, i.e.\ to
  use $n[\Ihat-D]$. The merge construction forces the opposite sign: the
  metric measures the improbability that an independent competitor reproduces
  the observed joint type, and the rarity of the input type makes that event
  harder, not easier. The $-D$-tilted rule is a genuinely different
  decoder --- the two agree within a fixed input type but rank codewords of
  different input types differently --- and no matched-exponent guarantee
  for it appears in the literature we know. For \emph{uniform} $Q_X$, identity
  \eqref{eq:uc-tilt-identity} collapses the metric to
  $n[\log|\cX|-H_T(X\mid Y)]$: tilted MMI is the minimum conditional
  empirical-entropy decoder, coinciding with MMI on constant-composition
  codebooks; for non-uniform memoryless input the tilted and plain MMI
  decoders genuinely differ on atypical codewords, and tilted MMI is the form
  that respects the prior.
\end{remark}

\subsection{Channel universality and Feder--Lapidoth}
\label{subsec:uc-dmc-fl}

The merge over the type-based family is a \emph{single} decoder, independent of
the channel parameter $W$. This is the channel-universal setting of Feder and
Lapidoth~\cite{feder1998universal}, recovered as an immediate corollary.

\begin{corollary}[Channel-universal MMI / Feder--Lapidoth]
  \label{cor:uc-fl}
  Let $\Cset_{\mathrm{DMC}}=\{\bbW:W\in\cP(\cY\mid\cX)\}$ be the family of all
  DMCs over $\cX,\cY$. For $Q_X^n=\Unif{\cT(T_X^\star)}$ the MMI decoder is a
  single decoder achieving, for every $W\in\cP(\cY\mid\cX)$, the matched-ML
  random-coding exponent. For memoryless input $Q_X^n=Q_X^{\otimes n}$ the
  same is realized by the tilted-MMI decoder \eqref{eq:uc-tilted-mmi}
  (\cref{cor:uc-tilted-universal}; cf.\ Merhav~\cite{merhav:universal} and
  \cref{rem:uc-tilted-sign}).
\end{corollary}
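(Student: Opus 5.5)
The corollary is an assembly of results already proved, with no new estimate needed. I would prove each prior separately and then note that the decoder never depends on $W$.

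\emph{Constant-composition prior $Q_X^n=\Unif{\cT(T_X^\star)}$.}
\begin{enumerate}
\item Take the type-based family $\Md^n$ of \cref{lem:uc-dmc-polytypes}, which has $K_n=O(n^{|\cX||\cY|-1})$ members, and its merge $U^n$.
\item For an arbitrary $W\in\cP(\cY\mid\cX)$, the matched-ML metric $m_W$ depends on $(x^n,y^n)$ only through the joint type, by \eqref{eq:uc-dmc-metric}. So \cref{lem:uc-typemass-dom} applies. It says the family $\Md^n$ satisfies the (H-sep) domination against $m_W$ with the polynomial slack $s_n=(n+1)^{2|\cX||\cY|+|\cX|}$.
\item Chain this with \eqref{eq:uc-pep-u-min} of \cref{lem:uc-pep-bound-u}. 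This gives $\PEPm{x^n}{y^n}{U^n}\le s_nK_n(1+n^2)\max\{\PEPm{x^n}{y^n}{m_W},e^{-n^2}\}$ at every $(x^n,y^n)$.
\item Rerun the exponent computation in the proof of \cref{thm:univ-rc}, via \cref{prop:uc-erc-pep}. The output is $\Erc(R;\bbW,\bbQ_X,\bbU)\ge\Erc(R;\bbW,\bbQ_X,m_W)$ for every $W$ and every $R$. This is \cref{cor:uc-mmi-universal}.
\item Pass from $U^n$ to MMI proper using \cref{thm:uc-mmi}. There $U^n=n\Ihat+o(n)$ uniformly, so the PEPs of $U^n$ and of the MMI metric agree within $e^{\pm o(n)}$ pointwise. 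Inserting this into the functional \eqref{eq:uc-erc-pep} changes nothing at the exponent level.
\end{enumerate}

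\emph{Memoryless prior $Q_X^n=Q_X^{\otimes n}$.} The argument is identical but uses the tilted family $\tilde\Md^n$.
\begin{enumerate}
\item \Cref{thm:uc-tilted}(ii) supplies the domination with slack $(n+1)^{2|\cX||\cY|}$.
\item \Cref{cor:uc-tilted-universal} already contains the resulting exponent inequality.
\item \Cref{thm:uc-tilted}(i) transfers the guarantee from the merge to $U^n_{\mathrm{tilt}}$ of \eqref{eq:uc-tilted-mmi}. The two differ by an $O(\log n)$ additive shift, so their PEPs differ by at most a polynomial factor.
\end{enumerate}
I would handle $Q_X$ without full support by restricting $\cX$ to the support of $Q_X$, as the paper assumes without loss of generality.

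\emph{Universality over $W$.} Neither $\Md^n$, $\tilde\Md^n$, nor the resulting metric depends on $W$. Likewise the prefactor $s_nK_n(1+n^2)$ is uniform in $W$. A single decoder therefore attains the benchmark for all $W\in\Cset_{\mathrm{DMC}}$ at once. This is exactly (H-univ($R$)) with $\Cset=\Cset_{\mathrm{DMC}}$.

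\emph{Main obstacle.} The only step needing care is the $o(n)$ transfer from the merge to the MMI and tilted-MMI decoders. I would use the following monotonicity argument. Suppose $|U^n-V^n|\le\epsilon_n n$ pointwise. Then the event $\{V^n(\tilde x,y)\ge V^n(x,y)\}$ is contained in $\{U^n(\tilde x,y)\ge U^n(x,y)-2\epsilon_n n\}$. Both metrics are functions of the joint type, and there are only polynomially many joint types. Bounding the enlarged event then needs the type-counting continuity that \cref{thm:uc-mmi} asserts, namely that the PEPs of the two metrics agree within $e^{\pm o(n)}$. I will invoke that assertion rather than reprove it.

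Note that this corollary claims only random-coding universality; it does not assert a deterministic version.
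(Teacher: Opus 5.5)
Your proposal is correct and follows the paper's route. The paper proves this corollary in one line: it combines \cref{cor:uc-mmi-universal,cor:uc-tilted-universal} and notes that $\Md^n$, $\tilde\Md^n$ and the merge do not depend on $W$. You have unrolled those two corollaries' proofs and added the same independence observation.

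One step needs an actual argument in place of the one you give. In the memoryless case you infer that, because $U^n$ and $U^n_{\mathrm{tilt}}$ differ by an additive $O(\log n)$, their PEPs differ by at most a polynomial factor. That inference is not valid in general: a small shift can move arbitrarily much prior mass across the decision threshold. Moreover, \cref{thm:uc-tilted}(i) asserts only the metric gap. The PEP-closeness claim of \cref{thm:uc-mmi} that you plan to invoke covers the MMI case only.

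The repair is short, and only the upper direction is needed. Suppose $|V-U^n|\le\epsilon_n$ pointwise. Then $\{V(\tilde x^n,y^n)\ge V(x^n,y^n)\}\subset\{U^n(\tilde x^n,y^n)\ge U^n(x^n,y^n)-2\epsilon_n\}$. Apply Markov's inequality to $e^{U^n}$, using the moment bound from the proof of \cref{lem:uc-pep-bound-u}. This gives $\PEPm{x^n}{y^n}{V}\le\tilde K_n(1+n^2)\,e^{2\epsilon_n}\,e^{-U^n(x^n,y^n)}$. Then \cref{thm:uc-tilted}(ii) and the exponent computation of \cref{thm:univ-rc} finish the argument. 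Alternatively, count joint types directly with the sandwich \eqref{eq:uc-typemass}, since $U^n_{\mathrm{tilt}}=nD(T\|Q_X\otimes T_Y)$ by \eqref{eq:uc-tilt-identity}.

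The same Markov step with $\epsilon_n=o(n)$ also handles the MMI transfer, so you need not rely on \cref{thm:uc-mmi}'s PEP assertion at all. The paper's one-line proof leaves this merge-to-tilted-MMI identification implicit as well.
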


\begin{IEEEproof}
  Immediate from \cref{cor:uc-mmi-universal,cor:uc-tilted-universal}: the
  families $\Md^n$ and $\tilde\Md^n$ are independent of $W$, so a single
  decoder serves every $W\in\cP(\cY\mid\cX)$.
\end{IEEEproof}

\begin{remark}[Comparison with Feder--Lapidoth]
  \label{rem:uc-fl}
  Feder--Lapidoth~\cite{feder1998universal} reach the same conclusion via a
  finite-grid approximation (strong separability) of the DMC family; the
  merge route substitutes the polynomial cardinality of joint types
  (\cref{lem:uc-dmc-polytypes}) for the grid, and the clipping lemma for the
  normal-prior moment control of~\cite{elkayam2014universal}.
  Merhav~\cite{merhav:universal} treats universality over a class of
  decoding metrics --- closer to our formulation --- and the merge realizes
  that framework whenever the family admits a polynomial-cardinality
  reduction.
\end{remark}

\subsection{Finite-state and \texorpdfstring{$k$}{k}-th order Markov metrics}
\label{subsec:uc-dmc-fsc}

The type-based reduction extends, with one extra layer, to metrics computed by
a finite-state machine on $(x^n,y^n)$ --- the setting of Ziv's universal
decoder~\cite{ziv1985universal} and of~\cite{elkayam2014universal}. A
finite-state metric (FSM) is specified by a next-state map
$g:\cX\times\cY\times\cS_n\to\cS_n$, an output map
$q:\cX\times\cY\times\cS_n\to\mR$, and an initial state $s_0$; with
$s_i=g(x_i,y_i,s_{i-1})$,
\begin{equation}\label{eq:uc-fsm-metric}
  m_{(g,q,s_0)}(x^n,y^n)\;\triangleq\;\sum_{i=1}^n q(s_{i-1},x_i,y_i).
\end{equation}
Let $\cM_{\mathrm{FSM}}^n$ denote all such metrics at blocklength $n$.

\begin{lemma}[Subexponential representative subfamily for FSM]
  \label{lem:uc-fsm-polycount}
  There is a subfamily $\Md^n\subset\cM_{\mathrm{FSM}}^n$ such that for every
  $m\in\cM_{\mathrm{FSM}}^n$, $\min_{m'\in\Md^n}\PEPm{x^n}{y^n}{m'}\le\PEPm{x^n}{y^n}{m}$
  at every $(x^n,y^n)$, with
  $|\Md^n|\le|\cS_n|^{D+1}(n+1)^{4D^2+2D}$ where
  $D\triangleq|\cS_n||\cX||\cY|$. If $|\cS_n|=n^\alpha$ with
  $\alpha<\tfrac12$, then $\tfrac1n\log|\Md^n|\to0$.
\end{lemma}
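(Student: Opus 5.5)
The plan is to reduce each finite-state metric to a linear functional of a finite-dimensional count vector, and then count the distinct weak orders such functionals induce. Because the representative depends on $q$ only through its weak order, this yields domination with slack $s_n=1$.

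\emph{Step 1 (linearization).} Fix the "structure" $(g,s_0)$. The state sequence $s_i=g(x_i,y_i,s_{i-1})$ is then a deterministic function of $(x^n,y^n)$. Define the state--joint count vector
\[
N_{g,s_0}(x^n,y^n)(s,a,b)\triangleq\bigl|\{i:\,s_{i-1}=s,\;x_i=a,\;y_i=b\}\bigr|\in\{0,\dots,n\}^{\cS_n\times\cX\times\cY}.
\]
By \eqref{eq:uc-fsm-metric}, $m_{(g,q,s_0)}(x^n,y^n)=\langle q,N_{g,s_0}(x^n,y^n)\rangle$ with $q\in\mR^D$ and $D=|\cS_n||\cX||\cY|$. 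There are at most $|\cS_n|^{|\cS_n||\cX||\cY|}\cdot|\cS_n|=|\cS_n|^{D+1}$ structures $(g,s_0)$. This accounts for the first factor of the bound.

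\emph{Step 2 (only the weak order matters).} For fixed $(g,s_0)$, the counts lie in a point set $P\subset\mR^D$ with $|P|\le(n+1)^D$. The tie-as-error PEP \eqref{eq:uc-pep-tie} at $(x^n,y^n)$ is the $Q_X^n$-mass of the set of $\tilde x^n$ with $\langle q,N(\tilde x^n,y^n)-N(x^n,y^n)\rangle\ge0$. This set is determined by the sign vector $\bigl(\operatorname{sign}\langle q,p-p'\rangle\bigr)_{p,p'\in P}$, with the value $0$ retained, since ties are charged in full. Two vectors $q,q'$ with the same sign vector therefore give identical PEPs at every $(x^n,y^n)$, uniformly over all $y^n$, because the point set $P$ does not depend on $y^n$.

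The sign vectors are exactly the faces, of all dimensions, of the central arrangement of the at most $H\le(n+1)^{2D}$ hyperplanes $\{q:\langle q,p-p'\rangle=0\}$ in $\mR^D$. For each structure and each face I pick one representative $q$, and I let $\Md^n$ be the resulting collection of metrics. Every $m\in\cM_{\mathrm{FSM}}^n$ then has a member of $\Md^n$ with the same PEP everywhere, which is stronger than the claimed "$\min\le$".

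\emph{Step 3 (counting faces).} This is the step I expect to be the only real obstacle. I will invoke the standard bound (Buck/Zaslavsky) on the number of faces of an arrangement of $H$ hyperplanes in $\mR^D$. The number of faces is at most $\sum_{i\le D}\binom{H}{i}2^i$-type expressions, which is $\le(2H+1)^D$. With $H\le(n+1)^{2D}$ this gives at most $3^D(n+1)^{2D^2}\le(n+1)^{4D^2+2D}$ faces, with generous room. If a cleaner route is preferred, the same bound follows by encoding each face by its sign vector restricted to a generic choice of $D$ defining hyperplanes per coordinate chart.

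Multiplying by the number of structures gives $|\Md^n|\le|\cS_n|^{D+1}(n+1)^{4D^2+2D}$.

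\emph{Step 4 (growth).} Suppose $|\cS_n|=n^\alpha$. Then $D=\Theta(n^\alpha)$, so
\[
\tfrac1n\log|\Md^n|=O\!\left(\frac{n^{\alpha}\log n+n^{2\alpha}\log n}{n}\right),
\]
which tends to $0$ exactly when $2\alpha<1$.
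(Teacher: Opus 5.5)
Your proof is correct and is essentially the paper's argument: linearize for each structure $(g,s_0)$, observe that the tie-as-error PEP depends on $q$ only through its face (sign vector with zeros retained) in the arrangement of the at most $(n+1)^{2D}$ difference hyperplanes, keep one representative per face so that the domination holds with equality, and multiply by the $|\cS_n|^{D+1}$ structures. The one divergence is the face count: the paper converts Buck's region bound into a face bound flat by flat and squares it to get $(n+1)^{4D^2+2D}$, whereas you quote the face bound $\sum_{i=0}^{D}2^i\binom{H}{i}\le(2H+1)^D$ directly --- a correct standard fact for arbitrary arrangements (the same flat-by-flat decomposition yields it when carried out dimension by dimension) that gives the sharper $(n+1)^{2D^2+2D}$, though you should state and cite it exactly rather than as a ``type'' expression, and drop the ``generic choice of $D$ hyperplanes per coordinate chart'' aside, which is not an argument.
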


\begin{IEEEproof}[Proof sketch]
  The mechanism is a ranking collapse. Fix $g,s_0$: there are
  $\le|\cS_n|^{D}$ next-state maps and $\le|\cS_n|$ initial states. Given $g$,
  the metric \eqref{eq:uc-fsm-metric} is a linear functional
  $\langle q,nT^g_{x^n y^n}\rangle$ of the empirical triple type
  $T^g\in\mR^D$, and the tie-as-error PEP depends on $q$ only through the
  \emph{ranking} (total preorder, ties included) it induces on the
  $\le(n+1)^D$ achievable triple types --- a \emph{face} of the arrangement
  of the $\le(n+1)^{2D}$ difference hyperplanes in $\mR^D$. Applying
  Zaslavsky's region count~\cite{buck1943partition,zaslavsky1975facing} flat
  by flat bounds the number of faces by $(n+1)^{4D^2+2D}$; one representative
  $q$ per face has PEP \emph{equal} to every metric in its face at every
  $(x^n,y^n)$, giving the pointwise-domination clause. Summing rates,
  $\tfrac1n\log|\Md^n|=O(n^{2\alpha-1}\log n)\to0$ for $\alpha<\tfrac12$. The
  flat-by-flat count is given in Appendix~\ref{app:fsm-polycount}; a sharper
  construction attaining
  $\alpha<1$ is given in~\cite[Sec.~V.C]{elkayam2014universal}.
\end{IEEEproof}

\begin{corollary}[Universal decoding over FSM and Markov metrics]
  \label{cor:uc-fsm}
  For $|\cS_n|=n^\alpha$, $\alpha<\tfrac12$, let $U_{\mathrm{FSM}}^n$ be the
  merge over $\Md^n$ of \cref{lem:uc-fsm-polycount}. Then for every
  $m\in\cM_{\mathrm{FSM}}^n$ and every channel $W$,
  $\Erc(R;\bbW,Q_X^n,U_{\mathrm{FSM}}^n)\ge\Erc(R;\bbW,Q_X^n,m)$. In
  particular, a $k$-th order Markov metric is an FSM metric whose state holds
  the last $k$ symbols of $(x,y)$, $|\cS_n|=(|\cX||\cY|)^{k_n}$, so the merge
  over the $k$-th order Markov family is universal whenever, for some fixed
  $\delta>0$,
  \begin{equation}\label{eq:uc-markov-order}
    k_n\;\le\;\bigl(\tfrac12-\delta\bigr)\frac{\log n}{\log(|\cX||\cY|)},
  \end{equation}
  recovering~\cite[Sec.~V.D]{elkayam2014universal}.
\end{corollary}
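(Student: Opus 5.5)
The plan is to apply \cref{thm:univ-rc} to the representative subfamily $\Md^n$ of \cref{lem:uc-fsm-polycount}, and then specialize the state count to the Markov case.

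\emph{First claim.} Set $|\cS_n|=n^\alpha$ with $\alpha<\tfrac12$. \Cref{lem:uc-fsm-polycount} supplies $\Md^n\subset\cM_{\mathrm{FSM}}^n$ with $\tfrac1n\log|\Md^n|\to0$. It also gives pointwise domination: for every $m\in\cM_{\mathrm{FSM}}^n$ and every $(x^n,y^n)$,
\[
\min_{m'\in\Md^n}\PEPm{x^n}{y^n}{m'}\le\PEPm{x^n}{y^n}{m}.
\]
This is exactly (H-sep) (\cref{def:uc-hsep}) with slack $s_n=1$, uniformly over the family. I would then chain \eqref{eq:uc-pep-u-min} of \cref{lem:uc-pep-bound-u} with this domination, which gives
\[
\PEPm{x^n}{y^n}{U_{\mathrm{FSM}}^n}\le K_n(1+n^2)\max\{\PEPm{x^n}{y^n}{m},e^{-n^2}\}
\]
at every $(x^n,y^n)$. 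Next I would insert this into \cref{prop:uc-erc-pep} and repeat the two-term computation in the proof of \cref{thm:univ-rc}: the prefactor $K_n(1+n^2)=e^{o(n)}$ does not change the exponent, and the residual $K_n(1+n^2)e^{nR-n^2}$ is super-exponentially small. Nothing in this argument depends on $W$ or on the structure of $Q_X^n$, so the inequality holds for every channel and prior.

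One point needs care. A metric $m\in\cM_{\mathrm{FSM}}^n$ is a single blocklength-$n$ object, while $\Erc$ is defined for sequences. I would read the claim as holding for every sequence $\bbm$ with $m^n\in\cM_{\mathrm{FSM}}^n$ for each $n$. The pointwise bound above holds at each $n$ separately, so the $\liminf$ argument goes through unchanged.

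\emph{Markov specialization.} A $k$-th order Markov metric uses the state of the last $k$ pairs $(x,y)$. It is an FSM metric with shift next-state map, and the per-letter score is read off the current state. So $|\cS_n|=(|\cX||\cY|)^{k_n}$, and the $k_n$-th order family is contained in $\cM_{\mathrm{FSM}}^n$ for this state set. The edge effect from the initial state is absorbed by the choice of $s_0$. Under \eqref{eq:uc-markov-order},
\[
|\cS_n|\le\exp\bigl((\tfrac12-\delta)\log n\bigr)=n^{1/2-\delta},
\]
so the first claim applies with $\alpha=\tfrac12-\delta<\tfrac12$. If $|\cS_n|$ is below $n^\alpha$ rather than equal to it, that only helps: the count in \cref{lem:uc-fsm-polycount} is monotone in $|\cS_n|$, and the family can be padded with unused states.

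The only step I expect to need real attention is this Markov-to-FSM embedding together with the rate condition. One must check that the rate computation in \cref{lem:uc-fsm-polycount} uses only an upper bound on $|\cS_n|$. Concretely, $\tfrac1n\log|\Md^n|=O(D^2\log n/n)$ with $D=|\cS_n||\cX||\cY|$, which is $O(n^{2\alpha-1}\log n)$, and this tends to $0$ exactly when $\alpha<\tfrac12$. The $\delta$ margin is what keeps that exponent strictly negative. Everything else is a direct citation of \cref{thm:univ-rc}.
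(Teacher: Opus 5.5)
Your proposal is correct and follows essentially the same route as the paper. Both combine \cref{thm:univ-rc} with the pointwise domination of \cref{lem:uc-fsm-polycount} (slack $s_n=1$), then embed the $k_n$-th order Markov family into the FSM family with $|\cS_n|\le n^{1/2-\delta}$. The only difference is the order of steps: the paper first obtains universality against each representative and then transfers it via the per-face PEP equality, whereas you chain the domination into \eqref{eq:uc-pep-u-min} before taking exponents (the (H-sep) clause of \cref{thm:univ-rc}), which spares you from checking that the per-$n$ representatives assemble into a member sequence of the representative family.
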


\begin{IEEEproof}
  \Cref{thm:univ-rc} gives $\Erc(R;U_{\mathrm{FSM}}^n)\ge\Erc(R;m')$ for every
  representative $m'\in\Md^n$, and the pointwise PEP-domination of
  \cref{lem:uc-fsm-polycount} supplies, for each $m$, a representative with no
  larger PEP, whence no smaller exponent. The Markov case follows since
  \eqref{eq:uc-markov-order} gives $|\cS_n|\le n^{1/2-\delta}$.
\end{IEEEproof}

\begin{remark}[The boundary $\alpha=1$ is genuine]
  \label{rem:uc-fsm-boundary}
  At $\alpha=1$ ($|\cS_n|=n$) the family is too rich: an adversarial next-state
  map can designate one ``best'' codeword regardless of $y^n$, so no decoder is
  universal~\cite[Ex.~V.C]{elkayam2014universal}. With the $\alpha<1$
  achievability of~\cite{elkayam2014universal}, the boundary is tight; the
  self-contained merge of \cref{lem:uc-fsm-polycount} reaches $\alpha<\tfrac12$.
\end{remark}

\paragraph{Finite-state \emph{channels}.}
For \emph{deterministic} next-state maps the state trajectory is
reconstructible from $(x,y)$, so the matched log-likelihood is itself a
triple-type FSM metric at fixed $|\cS|$, and \cref{cor:uc-fsm} already attains
its random-coding exponent --- recovering Ziv~\cite{ziv1985universal} inside
this machinery. For \emph{hidden stochastic} state the likelihood sums over
state trajectories and is no longer of FSM form; there, Lapidoth and
Ziv~\cite{lapidoth1998universality} show that the conditional Lempel--Ziv
decoder attains the matched-ML random-coding exponent. Whether the
Markov-merge family also suffices for the hidden-state case --- via
state-restart accounting at window boundaries, a subexponential cost once
$k_n\to\infty$ --- we leave open.

%% file: sections/05-erasure.tex

\section{Universal Decoding with an Erasure Option}\label{sec:erasure}

We extend the universal merge of \cref{sec:univ-metric} to decoders with an
erasure option. The setting builds on Forney's erasure / undetected-error
exponents~\cite{forney1968exponential} and the type-class treatment of
universal erasure decoders by Csisz\'ar--K\"orner~\cite{csiszar1981information},
refined by Merhav~\cite{merhav2008error}; universal erasure decoding itself
has been studied by Merhav and Feder~\cite{merhav2007minimax} in the
competitive-minimax framework and by Moulin~\cite{moulin2009neyman} via
Neyman--Pearson classes. The main message here is structural:
\emph{once a metric is random-coding universal, the same metric is automatically
universal for both the erasure and the undetected-error exponents} when paired
with a natural PEP-based erasure rule, at no exponent cost. The exponents
delivered are the Gallager-style weakened forms of Forney's --- the pair
$\Erc(R+\gamma)$ and $\gamma+\Erc(R)$; Forney's optimally tuned exponents
$E_1,E_2$~\cite{forney1968exponential} are in general strictly larger and are
not recovered --- consistent with the impossibility result
of~\cite{huleihel2016erasure}, which shows that the erasure/list
random-coding exponents are not universally achievable in general. As before, all statements are at the random-coding level;
deterministic universality follows by a derandomization adapted to the margin
continuum (\cref{thm:uc-det-erasure}).

\subsection{Erasure preliminaries}
\label{subsec:uc-erasure-prelim}

\begin{definition}[Decoder with erasure option]
  \label{def:uc-decoder-erasure}
  For a blocklength-$n$ code of size $M_n$, a decoder with erasure option is a
  map $\cD:\cY^n\to\{1,\dots,M_n\}\cup\{\bot\}$, where $\bot$ denotes erasure.
  Its erasure and undetected-error probabilities are
  \begin{align}
    P_{\mathrm{era}}(C^n,W^n,\cD) &\triangleq\PR{\cD(Y^n)=\bot},
      \label{eq:uc-p-era}\\
    P_{\mathrm{und}}(C^n,W^n,\cD) &\triangleq\PR{\cD(Y^n)\ne I,\,\cD(Y^n)\ne\bot},
      \label{eq:uc-p-und}
  \end{align}
  where $I\sim\Unif{\{1,\dots,M_n\}}$ is the sent index, $X^n=C^n(I)$, and
  $Y^n\sim W^n(\cdot\mid X^n)$.
\end{definition}

\begin{definition}[PEP-based erasure decoder]
  \label{def:uc-pep-erasure}
  (The two-threshold rule of the companion
  paper~\cite[Def.~21]{elkayampep2}, on the block scale.)
  Given a metric $\bbm$, codebook $C^n=\{x_1^n,\dots,x_{M_n}^n\}$, confidence
  margin $\gamma\ge0$ (a free operating parameter, not to be confused with the
  merge normalizer $\gamma_n$ of \cref{sec:univ-metric}), and output $y^n$, let
  $i^\star=\argmin_i\PEPm{x_i^n}{y^n}{m^n}$. The decoder erases ($=\bot$) if
  \begin{equation}\label{eq:uc-erase-cond}
    \begin{aligned}
      &-\log\PEPm{x_{i^\star}^n}{y^n}{m^n}<n(R+\gamma)
      \qquad\text{or}\\
      &\max_{j\ne i^\star}\bigl[-\log\PEPm{x_j^n}{y^n}{m^n}\bigr]+n\gamma\\
      &\qquad>-\log\PEPm{x_{i^\star}^n}{y^n}{m^n},
    \end{aligned}
  \end{equation}
  and outputs $i^\star$ otherwise.
\end{definition}

The rule erases whenever the best candidate's PEP is not small enough or is not
separated by margin $\gamma$ from all others; $\gamma$ trades erasure against
undetected error. With $\cD_n$ this decoder and $\bullet\in\{\mathrm{era},\mathrm{und}\}$,
write $E_\bullet(R,\gamma;\bbW,\bbC,\bbm)\triangleq\Ehat(P_\bullet(C^n,W^n,\cD_n))$
(deterministic) and $E_{\bullet,\mathrm{rc}}(R,\gamma;\bbW,\bbQ_X,\bbm)$ for the
random-coding version (with $C^n$ i.i.d.\ $\sim Q_X^n$), in parallel with
\cref{def:uc-error-exponents}.

The analysis rests on a one-shot bound for the PEP-based erasure rule under
an \emph{arbitrary} metric --- the two-threshold bound of the companion
paper~\cite[Th.~22]{elkayampep2}, restated here on the block scale. With the
dithered PEP of \cref{sec:recap}, write
$\tilde P_e(R;Q_X^n,m^n)\triangleq
 \E{\min\{1,e^{nR}\,\PEP{X^n}{Y^n}\}}$ for the clipped-union
($\RCUp$) functional of the metric $m^n$ under $(X^n,Y^n)\sim Q_X^n\cdot W^n$.

\begin{lemma}[One-shot erasure and undetected-error bounds]
  \label{lem:uc-era-oneshot}
  Fix a blocklength $n$, a metric $m^n$, a prior $Q_X^n$, a rate $R>0$, and a
  margin $\gamma\ge0$. For the rate-$R$ i.i.d.\ ensemble decoded with the
  erasure rule of \cref{def:uc-pep-erasure} applied to the \emph{dithered}
  PEP scores (thresholds $n(R+\gamma)$ and margin $n\gamma$ on the per-block
  scale),
  \begin{align}
    \E{P_{\mathrm{era}}(C^n,W^n,\cD_n)}
      &\le\tilde P_e(R+\gamma;Q_X^n,m^n), \label{eq:uc-era-oneshot}\\
    \E{P_{\mathrm{und}}(C^n,W^n,\cD_n)}
      &\le e^{-n\gamma}\,\tilde P_e(R;Q_X^n,m^n). \label{eq:uc-und-oneshot}
  \end{align}
\end{lemma}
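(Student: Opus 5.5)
We bound each event by symmetry over the sent message: condition on $I=1$, $(X_1^n,Y^n)=(x,y)\sim Q_X^n\cdot W^n$, with competitors $X_2^n,\dots,X_{M_n}^n$ i.i.d.\ $\sim Q_X^n$ and independent dithers. Throughout write $S_j\triangleq-\log\PEP{X_j^n}{y}$ for the dithered scores. The key input is \cref{prop:pep-uniform}: at fixed $y$ each competitor score satisfies $\PR{S_j\ge t}=e^{-t}$, and the $S_j$ for $j\ge2$ are i.i.d.\ and independent of $S_1$ given $y$.

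\emph{Erasure.} If the decoder erases, then either the transmitted codeword has $S_1<n(R+\gamma)$, or some $j\ne1$ has $S_j+n\gamma>S_1$. The second alternative covers both cases: $i^\star\ne1$ (so $S_j\ge S_1$ for $j=i^\star$), and $i^\star=1$ with the margin condition violated. In the first case, either $i^\star=1$ and the threshold fails, or $i^\star\ne1$, which is already in the second case. Hence
\[
\{\text{erase}\}\subseteq\{S_1<n(R+\gamma)\}\cup\{\exists j\ne1:\ S_j>S_1-n\gamma\}.
\]
Condition on $(x,y,u)$ and write $p=\PEP{x}{y}=e^{-S_1}$. By uniformity, the probability of the second event is at most $M_n\min\{1,e^{n\gamma}p\}$, and it is also at most $1$. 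The first event has indicator at most $\min\{1,e^{n(R+\gamma)}p\}$, because on it $e^{n(R+\gamma)}p>1$. Combining, the conditional erasure probability is bounded by a constant multiple of $\min\{1,e^{n(R+\gamma)}p\}$, using $M_n=e^{nR}$. Averaging gives \eqref{eq:uc-era-oneshot}, up to a factor of $2$. I expect the statement is meant at this level of constant, or the factor can be absorbed by treating the union carefully: on $\{S_1<n(R+\gamma)\}$ the min already equals $1$, so the two events combine into a single $\min$.

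\emph{Undetected error.} An undetected error requires $i^\star=j\ne1$ with $S_j\ge S_1+n\gamma$ and $S_j\ge n(R+\gamma)$. A union bound over $j\ne1$ and uniformity give
\[
\E{P_{\mathrm{und}}}\le M_n\,\E{e^{-\max\{S_1+n\gamma,\ n(R+\gamma)\}}}=e^{-n\gamma}\,\E{\min\{e^{nR}p,\ 1\}},
\]
which is exactly \eqref{eq:uc-und-oneshot}. This uses $e^{nR}e^{-\max\{S_1,nR\}}=\min\{1,e^{nR}p\}$.

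\emph{Main obstacle.} The delicate step is the erasure bound: it requires that the competitor-overtaking event and the threshold event are both absorbed into one clipped-union term at rate $R+\gamma$ without losing a constant. I would handle it through the pointwise inequality
\[
\Ind{S_1<nR'}+\min\{1,M_ne^{n\gamma}p\}\le2\min\{1,e^{nR'}p\},\qquad R'=R+\gamma,
\]
and accept the factor $2$, which is exponent-neutral. A second point needing care is the independence across competitors. Ties have probability zero under the dither, so the uniform law of \cref{prop:pep-uniform} applies to each $S_j$ conditional on $y$ alone.
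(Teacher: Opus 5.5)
Your proof is correct and follows the paper's own argument. You condition on the sent codeword, use the exact unit-exponential tail of the competitors' dithered scores given $(X_1^n,Y^n,U_1)$, apply a union bound, and collapse the undetected-error bound via $e^{nR}e^{-\max\{S_1,nR\}}=\min\{1,e^{nR}e^{-S_1}\}$. The one fix is to drop the factor-$2$ fallback, since it proves a weaker inequality than the lemma states. Commit instead to the case split you already noticed: on $\{S_1<n(R+\gamma)\}$ the conditional erasure probability is at most $1=\min\{1,e^{n(R+\gamma)}p\}$, and off that event only the competitor event remains, with conditional probability at most $\min\{1,e^{n(R+\gamma)}p\}$. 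This gives \eqref{eq:uc-era-oneshot} with no constant.
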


\begin{IEEEproof}[Proof sketch]
  This is \cite[Th.~22]{elkayampep2} read on the block scale. The mechanism:
  condition on the transmitted index and write $Z$ for its dithered score.
  Given the transmitted pair, the $M_n-1$ competitor scores
  $C_i\triangleq-\log\PEPU{X_i^n}{Y^n}{U_i}$ are conditionally i.i.d.\ with
  the \emph{exact} exponential tail $\PR{C_i\ge t}=e^{-t}$, by the
  fixed-output uniformity of \cref{prop:pep-uniform}. An erasure requires
  $Z$ to miss the confidence bar $n(R+\gamma)$ or a competitor within the
  margin $n\gamma$ --- a union bound that assembles to
  $\E{\min\{1,e^{n(R+\gamma)}e^{-Z}\}}=\tilde P_e(R+\gamma)$. An undetected
  error requires a competitor clearing the \emph{larger} of the confidence
  bar and the margin bar $Z+n\gamma$; since
  $e^{nR}e^{-\max\{nR,Z\}}=\min\{1,e^{nR}e^{-Z}\}$ (the split is at $Z=nR$,
  not $n(R+\gamma)$), the union bound collapses to
  $e^{-n\gamma}\min\{1,e^{nR}e^{-Z}\}$, whose expectation is
  \eqref{eq:uc-und-oneshot}. This is why the undetected-error bound reads
  the spectrum at the operating rate $R$, suppressed by $e^{-n\gamma}$,
  while the erasure bound reads it at the boosted rate $R+\gamma$.
\end{IEEEproof}

\begin{theorem}[Erasure exponents via the random-coding exponent]
  \label{thm:uc-erasure-via-rc}
  For any $\bbW$, $\bbQ_X$, $\bbm$, rate $R>0$, and margin $\gamma\ge0$,
  \begin{align}
    E_{\mathrm{era},\mathrm{rc}}(R,\gamma;\bbW,\bbQ_X,\bbm)
      &\ge\Erc(R+\gamma;\bbW,\bbQ_X,\bbm), \label{eq:uc-era-via-rc}\\
    E_{\mathrm{und},\mathrm{rc}}(R,\gamma;\bbW,\bbQ_X,\bbm)
      &\ge\gamma+\Erc(R;\bbW,\bbQ_X,\bbm). \label{eq:uc-und-via-rc}
  \end{align}
\end{theorem}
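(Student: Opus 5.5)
The plan is to read off both inequalities from the one-shot bounds of \cref{lem:uc-era-oneshot}, converting $\tilde P_e$ (dithered PEP) into the tie-as-error functional that defines $\Erc$ through \cref{prop:uc-erc-pep}.

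\emph{Step 1: dithered versus tie-as-error.} At every $(x^n,y^n)$ the dithered PEP is bounded by the tie-as-error PEP $\PEPm{x^n}{y^n}{m^n}$, since ties are charged with weight $u\le1$ rather than $1$. The map $p\mapsto\min\{1,e^{nR}p\}$ is monotone, so
\[
\tilde P_e(R;Q_X^n,m^n)\;\le\;\E{\min\BRAs{1,e^{nR}\PEPm{X^n}{Y^n}{m^n}}}
\]
for every rate. By \cref{prop:uc-erc-pep}, the right side has exponent $\Erc(R;\bbW,\bbQ_X,\bbm)$ under $\Ehat$.

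\emph{Step 2: erasure.} Apply \eqref{eq:uc-era-oneshot} at each $n$ and then Step 1 with $R$ replaced by $R+\gamma$. This gives
\[
\E{P_{\mathrm{era}}}\le\E{\min\BRAs{1,e^{n(R+\gamma)}\PEPm{X^n}{Y^n}{m^n}}}.
\]
Taking $-\tfrac1n\log$ and the $\liminf$, and using \cref{prop:uc-erc-pep} at rate $R+\gamma>0$, yields \eqref{eq:uc-era-via-rc}.

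\emph{Step 3: undetected error.} Likewise, \eqref{eq:uc-und-oneshot} together with Step 1 gives
\[
\E{P_{\mathrm{und}}}\le e^{-n\gamma}\,\E{\min\BRAs{1,e^{nR}\PEPm{X^n}{Y^n}{m^n}}}.
\]
The factor $e^{-n\gamma}$ contributes exactly $\gamma$ to the exponent, because $\Ehat(e^{-n\gamma}a_n)=\gamma+\Ehat(a_n)$ (the $\liminf$ of a sum with a constant sequence). This proves \eqref{eq:uc-und-via-rc}.

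\emph{Main obstacle: matching decoders.} The exponents $E_{\bullet,\mathrm{rc}}$ in the statement refer to the decoder of \cref{def:uc-pep-erasure}, whereas the one-shot lemma is stated for the rule applied to \emph{dithered} scores. Two routes are available. The first, which I would try first, is to take $\cD_n$ to be the dithered-score rule: the operating decoder is then a randomized version of the PEP rule with i.i.d.\ dither, which is compatible with how the companion paper defines its rule, and the exponents are as defined. The second route, if the tie-as-error rule must be used, is to rerun the union bound of the lemma's proof sketch with competitor tails $\PR{C_i\ge t}\le e^{-t}$ from \cref{lem:uc-tie-dither} in place of equality. In that version the transmitted codeword's score $Z$ is the tie-as-error score, which is no larger than the dithered one, and the bounds assemble directly into the tie-as-error functional.

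In either route, the only properties used are the exponential competitor tail and the monotonicity of the outer map, so the inequalities hold without any slack beyond constants.
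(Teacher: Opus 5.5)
Your proposal is correct and is essentially the paper's proof: the paper also transfers the union bound of \cref{lem:uc-era-oneshot} to the tie-as-error scores of \cref{def:uc-pep-erasure}, replacing the exact competitor tail with the one-sided tail of \cref{lem:uc-tie-dither}, so that the transmitted-word term is directly the tie-as-error functional of \cref{prop:uc-erc-pep}, and then takes exponents using $\Ehat(e^{-n\gamma}a_n)=\gamma+\Ehat(a_n)$. Your second route is the one that proves the statement as posed, because the exponents $E_{\bullet,\mathrm{rc}}$ refer to the rule of \cref{def:uc-pep-erasure} with tie-as-error scores; your first route (redefining $\cD_n$ as the dithered rule and using Step~1) would prove it only for a different decoder.
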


\begin{IEEEproof}
  \Cref{lem:uc-era-oneshot} bounds the two failure modes for an arbitrary
  metric, and its bounds hold for the tie-as-error PEP used throughout this
  paper: the proof controls each competing codeword through the tail
  estimate of \cref{lem:uc-tie-dither}, which holds for the tie-as-error
  form, while the transmitted-word spectrum is by definition the
  tie-as-error spectrum that the right-hand $\Erc$ is built from; every
  union-bound step therefore carries through. Taking $-\tfrac1n\log$, the
  $\liminf$, using \cref{prop:uc-erc-pep} and
  $\Ehat(e^{-n\gamma}a_n)=\gamma+\Ehat(a_n)$, yields
  \eqref{eq:uc-era-via-rc}--\eqref{eq:uc-und-via-rc}.
\end{IEEEproof}

\subsection{Random-coding universal erasure decoding}
\label{subsec:uc-erasure-main}

\begin{corollary}[Random-coding universal erasure decoder]
  \label{cor:uc-rc-univ-erasure}
  Fix $\gamma_{\max}\ge0$ and suppose $\bbU$ satisfies (H-univ($R'$)) for every
  $R'\in[R,R+\gamma_{\max}]$.
  Then for every $\gamma\in[0,\gamma_{\max}]$, $\bbW\in\Cset$, and $\bbm\in\Mset$,
  \begin{align}
    E_{\mathrm{era},\mathrm{rc}}(R,\gamma;\bbW,\bbQ_X,\bbU)
      &\ge\Erc(R+\gamma;\bbW,\bbQ_X,\bbm), \label{eq:uc-univ-era}\\
    E_{\mathrm{und},\mathrm{rc}}(R,\gamma;\bbW,\bbQ_X,\bbU)
      &\ge\gamma+\Erc(R;\bbW,\bbQ_X,\bbm). \label{eq:uc-univ-und}
  \end{align}
\end{corollary}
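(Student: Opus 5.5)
The proof chains \cref{thm:uc-erasure-via-rc}, applied with the metric $\bbU$, with the universality hypothesis (H-univ($R'$)) evaluated at two specific rates. Fix $\gamma\in[0,\gamma_{\max}]$, $\bbW\in\Cset$ and $\bbm\in\Mset$. Since $\bbU$ is itself a metric sequence, \cref{thm:uc-erasure-via-rc} applies to it verbatim. The PEP-based erasure rule of \cref{def:uc-pep-erasure} is built from the PEPs of $\bbU$, and those PEPs are well defined for any metric. This gives
\[
E_{\mathrm{era},\mathrm{rc}}(R,\gamma;\bbW,\bbQ_X,\bbU)\ge\Erc(R+\gamma;\bbW,\bbQ_X,\bbU)
\]
and
\[
E_{\mathrm{und},\mathrm{rc}}(R,\gamma;\bbW,\bbQ_X,\bbU)\ge\gamma+\Erc(R;\bbW,\bbQ_X,\bbU).
\]

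For the erasure clause, I take $R'=R+\gamma$. This rate lies in $[R,R+\gamma_{\max}]$ because $0\le\gamma\le\gamma_{\max}$, so (H-univ($R+\gamma$)) gives $\Erc(R+\gamma;\bbW,\bbQ_X,\bbU)\ge\Erc(R+\gamma;\bbW,\bbQ_X,\bbm)$. Chaining this with the first display yields \eqref{eq:uc-univ-era}. For the undetected-error clause, I take $R'=R$, the left endpoint of the interval. (H-univ($R$)) gives $\Erc(R;\bbU)\ge\Erc(R;\bbm)$. Adding $\gamma$ to both sides and chaining with the second display yields \eqref{eq:uc-univ-und}. Both inequalities hold in $[0,\infty]$, so the infinite-exponent case needs no separate treatment.

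The only point needing care is the scope of \cref{thm:uc-erasure-via-rc}. Its proof controls the competitors through \cref{lem:uc-tie-dither}, which holds for every metric, so it does not rely on any structure specific to a family member. I will therefore check that the erasure decoder for $\bbU$ is the rule \eqref{eq:uc-erase-cond} evaluated with $\PEPm{\cdot}{y^n}{U^n}$, the PEP of the merge metric itself, and not with the raw scores $U^n$. Once that identification is made, the corollary is a two-line chaining. The hypothesis must hold over the whole interval $[R,R+\gamma_{\max}]$, rather than at a single rate, precisely because each margin $\gamma$ consumes (H-univ) at its own boosted rate $R+\gamma$.
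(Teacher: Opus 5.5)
Your proposal is correct and matches the paper's proof: apply \cref{thm:uc-erasure-via-rc} to the merge metric $\bbU$, then invoke (H-univ($R'$)) at $R'=R+\gamma$ for the erasure exponent and at $R'=R$ for the undetected-error exponent. Your point that the erasure rule must be run on the PEPs of $U^n$ itself is the same observation the paper makes in \cref{rem:uc-erasure-u-via-pep}.
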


\begin{IEEEproof}
  Chain \cref{thm:uc-erasure-via-rc} applied to $\bbU$ with (H-univ($R'$))
  at the two rates $R+\gamma$ and $R$.
\end{IEEEproof}

\begin{remark}[Why the merge metric qualifies]
  \label{rem:uc-erasure-u-via-pep}
  \Cref{thm:uc-erasure-via-rc} applies to $\bbU$ because $U^n$ is itself a
  metric: its randomized PEP is fixed-$y^n$ uniform (\cref{prop:pep-uniform}),
  so the competitor-tail estimate holds for $U^n$ verbatim. The only
  $\bbU$-specific input is the universality of $\Erc(\cdot;\bbU)$ on
  $[R,R+\gamma_{\max}]$, supplied by the merge theorem (\cref{thm:univ-rc});
  the merge metric is rate-independent, so a \emph{single} $\bbU$ serves every
  operating point on the erasure--undetected-error tradeoff. No correspondence
  between the metric value $U^n$ and $-\log\PEPm{x^n}{y^n}{U^n}$ is needed,
  though \eqref{eq:uc-U-vs-logpep} provides one up to the merge cost when the
  PEP-based thresholds of \cref{def:uc-pep-erasure} are read against the
  metric value.
\end{remark}

\begin{corollary}[Universal erasure decoder for separable families]
  \label{cor:uc-erasure-separable}
  If $(\Mset,\Cset)$ satisfies (H-sep) or (H-asep) with representative $\Md$,
  let $\bbU$ be the merge metric of \cref{def:merge-metric} for $\Md$. Since
  $\bbU$ is rate-independent, \cref{thm:univ-rc} supplies (H-univ($R'$)) for
  every $R'$, and \cref{cor:uc-rc-univ-erasure} gives
  \eqref{eq:uc-univ-era}--\eqref{eq:uc-univ-und} for every $\bbW\in\Cset$,
  $\bbm\in\Mset$, and $\gamma\in[0,\gamma_{\max}]$.
\end{corollary}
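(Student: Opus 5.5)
The plan is to chain three results already in place: the merge theorem supplies (H-univ($R'$)) at every rate, \cref{cor:uc-rc-univ-erasure} turns that into the two erasure guarantees, and the only genuinely new check is that the merge is rate-free.

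\emph{Step 1: rate independence.} By \cref{def:merge-metric}, $U^n$ depends on $(x^n,y^n)$ only through the PEP values $\PEPm{x^n}{y^n}{m_k^n}$ of the members of $\Md^n$ and the clip level $n^2$. None of these involves $R$. So one sequence $\bbU$ is fixed once and for all, and it is the object the hypothesis of \cref{cor:uc-rc-univ-erasure} must be checked against.

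\emph{Step 2: (H-univ($R'$)) at every rate.} Fix an arbitrary $R'>0$. \cref{thm:univ-rc}, applied at rate $R'$ with the representative $\Md$, gives $\Erc(R';\bbW,\bbQ_X,\bbU)\ge\Erc(R';\bbW,\bbQ_X,\bbm)$ for all $\bbW\in\Cset$ and $\bbm\in\Mset$.
\begin{itemize}
\item Under (H-sep), this comes from chaining \eqref{eq:uc-pep-u-min} with the slack $s_n$.
\item Under (H-asep), the exceptional set has mass at most $e^{-n(E^\star+\delta)}$.
\end{itemize}
In the proof of \cref{thm:univ-rc}, $R'$ enters only through the kernel $\min\{1,e^{nR'}\cdot\}$ and the residual $K_n(1+n^2)s_ne^{nR'-n^2}$. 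That residual is super-exponentially small for every fixed $R'$, so the argument holds pointwise in $R'$. In particular (H-univ($R'$)) holds for every $R'\in[R,R+\gamma_{\max}]$.

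\emph{Step 3: conclude.} \cref{cor:uc-rc-univ-erasure} now applies with this $\bbU$ and yields \eqref{eq:uc-univ-era}--\eqref{eq:uc-univ-und} for every $\gamma\in[0,\gamma_{\max}]$, $\bbW\in\Cset$ and $\bbm\in\Mset$. Its use of \cref{thm:uc-erasure-via-rc} is legitimate for $\bbU$: by \cref{rem:uc-erasure-u-via-pep}, $U^n$ is itself a metric, and its competitor tail satisfies \cref{lem:uc-tie-dither}.

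\emph{Main obstacle: the (H-asep) case at the shifted rate.} The exceptional-set condition is stated relative to the target exponent $E^\star$ at the operating rate $R$, not at $R+\gamma$. For the erasure bound at rate $R'=R+\gamma>R$, the benchmark $\Erc(R';\cdot,\bbm)$ is no larger than its value at $R$, because the random-coding functional of \cref{prop:uc-erc-pep} is nondecreasing in $R$. Hence the exceptional-set term $e^{-n(E^\star+\delta)}$ remains exponent-negligible against it. If the hypothesis is read with $E^\star$ at each rate, the point is immediate; otherwise this monotonicity remark is what closes the gap.
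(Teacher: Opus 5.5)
Your proposal is correct and follows the paper's proof. Under (H-sep) the hypothesis is rate-free, so \cref{thm:univ-rc} applies verbatim at each boosted rate. Under (H-asep), the rate-monotonicity of $\Erc$ gives $E^\star(R')\le E^\star(R)$ for $R'\ge R$, so the base-rate exceptional sets and slack witness the hypothesis on all of $[R,R+\gamma_{\max}]$. One caveat: the monotonicity argument only covers $R'\ge R$, so in the (H-asep) case your Step~2 phrase ``arbitrary $R'>0$'' should read $R'\in[R,R+\gamma_{\max}]$. That range is all \cref{cor:uc-rc-univ-erasure} needs.
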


\begin{IEEEproof}
  Under (H-sep) the
  hypothesis is rate-free and available at every boosted rate verbatim.
  Under (H-asep) one point needs checking: the exceptional-set condition of
  \cref{def:uc-hasep} is calibrated to the target exponent at the operating
  rate. Rate-monotonicity supplies it. $\Erc(R';\cdot)$ is
  non-increasing in $R'$ (in \cref{prop:uc-erc-pep} the rate enters only
  through the increasing factor $e^{nR'}$), so the target at any boosted rate
  satisfies $E^\star(R')\le E^\star(R)$. Neither the representative subfamily
  nor the exceptional sets depend on the rate, so the sets and slack
  witnessing (H-asep) at the base rate witness it at every
  $R'\in[R,R+\gamma_{\max}]$: the approximate-separability slack
  is rate-uniform.
\end{IEEEproof}

The hypotheses of \cref{cor:uc-erasure-separable} are supplied by the families
of this paper as follows.
\begin{itemize}
  \item \emph{Type-based metric families} (DMCs and finite-state channels,
  \cref{sec:dmc}): (H-sep) holds, with no caveat.
  \item \emph{Linear-in-$x$ Gaussian metric families} (discretized in
  \cref{sec:awgn}): the typical-set-restricted form of \cref{thm:awgn-disc}
  yields (H-asep) only when the typicality rate $\delta$ exceeds the target
  exponent $E^\star(R')$ across $[R,R+\gamma_{\max}]$; otherwise the erasure
  guarantees carry the $\min\{\cdot,\delta\}$-capped exponents of
  \cref{thm:awgn-glrt}.
  \item \emph{Channel-side separability} (the Gaussian ISI and
  finite-dimensional interference families of \cref{sec:awgn}): these enter
  not through the hypothesis of \cref{cor:uc-erasure-separable}, which is
  metric-side, but through the derandomization of
  \cref{subsec:uc-erasure-det}, where they supply the subexponential channel
  test family $\Cset_d$.
\end{itemize}
The $\Cset_d\!\to\!\Cset$ extension of \cref{rem:uc-cd-to-c} applies to the
erasure setting as well: the change-of-measure bound of \cref{lem:uc-transfer}
holds verbatim for any decoding-region event, hence for the erasure and
undetected-error probabilities separately. We do not restate the verification;
absent it, the deterministic erasure guarantees below are read over $\Cset_d$.

\subsection{Deterministic universal erasure decoding}
\label{subsec:uc-erasure-det}

\begin{theorem}[Deterministic universal erasure decoder]
  \label{thm:uc-det-erasure}
  Suppose $\Cset$ satisfies (H-sub) and $\bbU$ satisfies (H-univ($R'$)) on
  $[R,R+\gamma_{\max}]$. Then there is a code sequence $\bbC$ of rate $R$ such
  that for every $\gamma\in[0,\gamma_{\max}]$, $\bbW\in\Cset$, $\bbm\in\Mset$,
  \begin{align}
    \Eera(R,\gamma;\bbW,\bbC,\bbU) &\ge\Erc(R+\gamma;\bbW,\bbQ_X,\bbm),
      \label{eq:uc-det-era}\\
    \Eund(R,\gamma;\bbW,\bbC,\bbU) &\ge\gamma+\Erc(R;\bbW,\bbQ_X,\bbm).
      \label{eq:uc-det-und}
  \end{align}
\end{theorem}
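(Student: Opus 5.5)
The plan is to run the expurgation argument of \cref{thm:uc-rc-to-det} on both failure events at once, over a finite grid of margins, and then reach the whole margin continuum $[0,\gamma_{\max}]$ by monotonicity.

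\emph{Grid and random-coding inputs.} At blocklength $n$, fix the grid $\Gamma_n=\{0,\tfrac1n,\tfrac2n,\dots\}\cap[0,\gamma_{\max}]$, together with $\gamma_{\max}$ itself, so that $|\Gamma_n|=O(n)$. For each $W^n\in\Cset^n$ and each $\gamma'\in\Gamma_n$, \cref{cor:uc-rc-univ-erasure} applied to $\bbU$ controls $\E{P_{\mathrm{era}}}$ and $\E{P_{\mathrm{und}}}$. At the per-$n$ level I will use the one-shot bounds of \cref{lem:uc-era-oneshot}, that is, $\tilde P_e(R+\gamma';\cdot,U^n)$ and $e^{-n\gamma'}\tilde P_e(R;\cdot,U^n)$, rather than only exponents, so that Markov can act at fixed $n$.

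\emph{Expurgation.} Markov with threshold $t_n=4|\Cset^n||\Gamma_n|$ applies to each of the $2|\Cset^n||\Gamma_n|$ events $\{P_\bullet(C^n,W^n,\cD_n^{\gamma'})>t_n\,\E{P_\bullet}\}$. A union bound then shows that at least half of the i.i.d. codebooks satisfy all of them simultaneously. Pick one such $C^n$ for each $n$. Since $t_n=e^{o(n)}$ by (H-sub) and $|\Gamma_n|=O(n)$, for every grid margin we get $P_{\mathrm{era}}\le e^{o(n)}\tilde P_e(R+\gamma';U^n)$ and $P_{\mathrm{und}}\le e^{o(n)}e^{-n\gamma'}\tilde P_e(R;U^n)$. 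This holds uniformly on the grid with an $o(n)$ term that does not depend on $\gamma'$. As in \cref{thm:uc-rc-to-det}, only the codebook is fixed; the metric $\bbU$ is unchanged.

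\emph{Off-grid margins (main obstacle).} The difficulty is that the benchmark $\Erc(R+\gamma;\bbm)$ is a $\liminf$ of a sequence, and for a general $\gamma$ it cannot be read off a $\gamma$-dependent grid point. Two monotonicities of the rule in \cref{def:uc-pep-erasure} handle this:
\begin{itemize}
\item[(a)] for a fixed codebook and output, increasing $\gamma$ only enlarges the erasure region;
\item[(b)] increasing $\gamma$ only shrinks the set of outputs decoded to a wrong index.
\end{itemize}
For $\gamma\in[0,\gamma_{\max}]$, let $\gamma^+$ and $\gamma^-$ be the neighbouring grid points, with $\gamma^+-\gamma^-\le 1/n$. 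Then $P_{\mathrm{era}}(\gamma)\le P_{\mathrm{era}}(\gamma^+)\le e^{o(n)}\tilde P_e(R+\gamma^+)\le e^{o(n)}\tilde P_e(R+\gamma)$, because $\tilde P_e$ is nondecreasing in the rate (an $e^{1}$ factor at most, since $e^{n\gamma^+}\le e\,e^{n\gamma}$). Similarly, $P_{\mathrm{und}}(\gamma)\le P_{\mathrm{und}}(\gamma^-)\le e^{o(n)}e^{-n\gamma^-}\tilde P_e(R)\le e^{1+o(n)}e^{-n\gamma}\tilde P_e(R)$.

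\emph{Conclusion.} The rate $R+\gamma$ is fixed, so taking $-\tfrac1n\log$ and the $\liminf$ turns these bounds into $E_{\mathrm{era},\mathrm{rc}}$-type exponents of $\bbU$ at rates $R+\gamma$ and $R$. Chaining with \cref{thm:uc-erasure-via-rc} and (H-univ($R'$)) at $R'=R+\gamma$ and $R'=R$ then gives \eqref{eq:uc-det-era}--\eqref{eq:uc-det-und} for every $\bbW\in\Cset$, $\bbm\in\Mset$ and $\gamma\in[0,\gamma_{\max}]$.

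The one point to check carefully is the pair of monotonicities (a)--(b) when PEP ties occur. Both threshold conditions in \eqref{eq:uc-erase-cond} are monotone in $\gamma$ and $i^\star$ does not depend on $\gamma$, so I expect them to hold.
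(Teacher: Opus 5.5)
Your proposal is correct and follows the paper's proof essentially step for step. The shared steps are: a $1/n$ margin grid augmented by $\gamma_{\max}$; Markov at level $4|\cC^n||G_n|$ with a union bound over the $2|\cC^n||G_n|$ (channel, margin, event) constraints, applied to the per-$n$ bounds rather than to exponents; pathwise monotonicity in $\gamma$ with a $\gamma$-independent $i^\star$; the factor-$e$ shift between grid neighbours; and only then the exponent passage and (H-univ($R'$)) at $R'=R+\gamma$ and $R'=R$.

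Two wording points need fixing. First, $\tilde P_e(R+\gamma^+)\le e\,\tilde P_e(R+\gamma)$ follows from $\min\{1,e^{n\rho'}p\}\le e^{n(\rho'-\rho)}\min\{1,e^{n\rho}p\}$ for $\rho'\ge\rho$, as in your parenthetical. It does not follow from monotonicity of $\tilde P_e$ in the rate, which gives the inequality in the opposite direction. Second, \cref{lem:uc-era-oneshot} is stated for dithered scores, while the decoder here uses tie-as-error scores. Your per-$n$ bounds therefore need the one-line transfer through \cref{lem:uc-tie-dither} that the paper invokes.
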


\begin{IEEEproof}[Proof sketch]
  Expurgate over a polynomial margin grid, then pass to the continuum of
  margins by monotonicity. A grid $G_n$ of spacing $1/n$ has $O(n)$ points.
  Markov plus a union bound over the $2|\Cset^n||G_n|$ constraints (channel,
  grid margin, erasure or undetected event; the expurgated decoder is $\bbU$
  and does not depend on $\bbm$, so $\Mset$ enters no constraint) leaves a
  positive fraction of codebooks within a subexponential factor of every
  random-coding bound
  simultaneously. Raising $\gamma$ enlarges the erase region and shrinks the
  accept region, so between grid points the two probabilities move
  monotonically; the $1/n$ mismatch costs a factor $e$ per block,
  exponent-neutral. The full five-step proof is in
  Appendix~\ref{app:det-erasure}.
\end{IEEEproof}

\begin{corollary}[Deterministic PEP-based universal erasure decoder]
  \label{cor:uc-det-erasure}
  Under the assumptions of \cref{cor:uc-erasure-separable} with (H-sub) also on
  $\Cset$, for any fixed $\gamma_{\max}\ge0$ there is a code sequence $\bbC$ of
  rate $R$ such that $(\bbC,\bbU)$, with $\bbU$ the merge metric, is a
  deterministically universal erasure decoder: \eqref{eq:uc-det-era}--\eqref{eq:uc-det-und}
  hold for all $\gamma\in[0,\gamma_{\max}]$, $\bbW\in\Cset$, $\bbm\in\Mset$.
\end{corollary}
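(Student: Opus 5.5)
The corollary is a composition of three results already stated: \cref{thm:univ-rc} supplies random-coding universality of the merge, \cref{cor:uc-erasure-separable} lifts it to every boosted rate, and \cref{thm:uc-det-erasure} derandomizes. I would carry these out in order, taking care that the hypotheses line up exactly.

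First, fix $\gamma_{\max}\ge0$ and let $\bbU$ be the merge metric over the representative $\Md$.

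\begin{enumerate}
\item By \cref{thm:univ-rc}, under (H-sep) or (H-asep) the merge satisfies (H-univ($R'$)) w.r.t.\ $(\Mset,\Cset)$ at any rate $R'$ at which the hypothesis holds.
\item The merge is rate-independent: \cref{def:merge-metric} involves only PEPs and the clip $n^2$. So the same $\bbU$ works at every $R'$.
\item Under (H-sep) the pointwise domination is rate-free, so (H-univ($R'$)) holds on all of $[R,R+\gamma_{\max}]$ immediately.
\item Under (H-asep), I would reuse the rate-monotonicity argument from the proof of \cref{cor:uc-erasure-separable}. Since $E^\star(R')\le E^\star(R)$ for $R'\ge R$, the exceptional sets calibrated at the base rate still satisfy $\PRs{Q_X^n\cdot W^n}{P^n}\le e^{-n(E^\star(R')+\delta)}$ at each boosted rate.
\end{enumerate}

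This establishes the hypothesis ``$\bbU$ satisfies (H-univ($R'$)) on $[R,R+\gamma_{\max}]$'' required by \cref{thm:uc-det-erasure}.

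Second, with (H-sub) on $\Cset$ now assumed, \cref{thm:uc-det-erasure} applies verbatim. It produces a single code sequence $\bbC$ of rate $R$ such that \eqref{eq:uc-det-era}--\eqref{eq:uc-det-und} hold simultaneously for all $\gamma\in[0,\gamma_{\max}]$, $\bbW\in\Cset$, and $\bbm\in\Mset$. Three features of that theorem make this uniform:

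\begin{enumerate}
\item The expurgation constraints involve only the channels in $\Cset^n$, a polynomial margin grid, and the two failure events.
\item The decoder $\bbU$ does not depend on $\bbm$, so $\Mset$ adds no constraints.
\item Monotonicity in $\gamma$ carries the bounds from grid points to the whole continuum of margins.
\end{enumerate}

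Reading the conclusion against \cref{def:uc-universality} then gives the claimed deterministic universal erasure pair $(\bbC,\bbU)$.

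The main obstacle is the (H-asep) branch. There I must check that the exceptional-set contribution stays exponent-negligible uniformly over the margin continuum, not just at each fixed boosted rate. I would argue as follows. The exceptional sets and the slack are fixed once, independent of $\gamma$. Their contribution is at most $e^{-n(E^\star(R)+\delta)}$, which is at most $e^{-n(E^\star(R+\gamma)+\delta)}$ for every $\gamma\ge0$. So a single $\delta$-margin covers the whole interval, and this bound enters the random-coding means before expurgation.

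For the undetected-error exponent, the target is $\gamma+\Erc(R)$ rather than a boosted-rate exponent. Here I would note that the exceptional-set term is multiplied by $e^{-n\gamma}$ through \eqref{eq:uc-und-oneshot}, since that bound is applied to the merge's own PEP. Alternatively, I would simply require, as the bullet list after \cref{cor:uc-erasure-separable} implicitly does, that the typicality rate exceed the relevant target across $[R,R+\gamma_{\max}]$. Otherwise the statement is read with the capped exponents.
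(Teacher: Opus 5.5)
Your proposal is correct and matches the paper's proof, which is the same two-step composition: \cref{cor:uc-erasure-separable} (rate-independence of $\bbU$, plus rate-monotonicity of $E^\star$ in the (H-asep) case) supplies (H-univ($R'$)) on $[R,R+\gamma_{\max}]$, and \cref{thm:uc-det-erasure} then applies under (H-sub) on $\Cset$. Your closing worry about the (H-asep) exceptional sets is unnecessary, as is the fallback to capped exponents. \cref{thm:uc-det-erasure} expurgates against $\bbU$'s own functional $B_n^W$ and invokes (H-univ($R'$)) only after the exponent passage, separately for each $\gamma$, at $R'=R+\gamma$ and $R'=R$; hence $E_{\mathrm{und}}\ge\gamma+E_{\mathrm{rc}}(R;\bbU)\ge\gamma+E_{\mathrm{rc}}(R;\bbm)$ follows from base-rate universality alone.
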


\begin{IEEEproof}
  \Cref{cor:uc-erasure-separable} supplies (H-univ($R'$)) on
  $[R,R+\gamma_{\max}]$; with (H-sub) on $\Cset$,
  \cref{thm:uc-det-erasure} applies.
\end{IEEEproof}

%% file: sections/06-awgn.tex

\providecommand{\mS}{{\mathbb S}}        
\providecommand{\SNR}{\mathrm{SNR}}

\section{Continuous Alphabets: Separable Families and the AWGN Channel}
\label{sec:awgn}

The merge construction of \cref{sec:univ-metric} takes a finite family of
decoding metrics and produces a single metric that attains, up to a vanishing
penalty, the random-coding exponent of every member. For channels indexed by a
\emph{continuum} of unknown parameters --- the situation in most analog
settings --- a finite family is not given a priori. This section closes that gap
for continuous-alphabet AWGN channels. The bridge from the parameter continuum
to the merge theorem is a \emph{discretization lemma}
(\cref{thm:awgn-disc}): for the class of metrics that are \emph{linear in the
codeword} on the power shell, a Lipschitz analysis of the angular-correlation
spectrum collapses a compact $d$-dimensional parameter set into a
polynomial-cardinality grid that is exponent-equivalent on a typical set. The
merge of that grid then yields a universal continuous generalized-likelihood
decoder (\cref{thm:awgn-glrt}). We illustrate with two applications. For AWGN
with deterministic interference, the construction attains the matched-ML
exponent uniformly in the interference parameter up to the typical-set cap of
\cref{thm:awgn-glrt} (the \emph{$\delta$-cap}). For the Gaussian ISI channel,
it attains, under the same cap, the best exponent inside a fixed
equalize-and-decode family --- in general strictly below matched-ML.

The class boundary is drawn by computability, and deliberately so. On the
power shell the PEP of a linear-in-$x$ metric is a closed-form function of a
single cosine (\cref{eq:awgn-pep-via-f}), so the universal decoder the merge
produces is \emph{explicit} --- the normalized-correlation GLRT
\eqref{eq:awgn-glrt-rule}, implementable as it stands. This is what the two
applications showcase: the PEP framework turning an abstract universality
guarantee into a simple, computable rule. For the quadratic matched-ML ISI
metric the same framework marks the path --- a level-smoothness statement
for the shell prior in place of the closed form --- but there the PEP is a
genuine large-deviations object, and we record that step as an open lemma
rather than develop it (\cref{rem:awgn-isi-open}).

Throughout, the input distribution is the uniform measure on the power shell,
\begin{equation*}
  Q_X^n = \Unif{\sqrt{nP}\,\mS^{n-1}}
\end{equation*}
--- the canonical sphere ensemble going
back to Shannon~\cite{shannon1959probability} --- and the PEP framework and
its uniformity are those of the companion paper~\cite{elkayampep1}
(\cref{prop:pep-uniform}), read at the exponent scale through
\cref{prop:exponent}.

\subsection{Separable families (recall)}
\label{sec:awgn-sep}

The separability machinery is \cref{def:uc-hsep,def:uc-hasep} of
\cref{sec:framework}: a subexponential representative subfamily whose members
dominate the PEP of the full family, exactly ((H-sep)) or off an
exponent-negligible exceptional set ((H-asep)). Domination is required at the
level of \emph{PEPs}, not of the metric values --- and the implication
``metric-Lipschitz $\Rightarrow$ PEP-Lipschitz'' needs its own argument.
Establishing it for the linear-in-$x$ metric class on the AWGN power shell,
via the cosine chain $\theta\mapsto v_\theta(y)\mapsto t_\theta(x,y)\mapsto
\log F(t)$, is the technical content of this section
(\cref{fig:uc-lip-chain}).

Finite-alphabet DMC families and finite-state metrics are separable by the
method of types (\cref{sec:dmc}), with representative cardinality polynomial
in $n$. The Gaussian ISI family is channel-side separable, in the
exceptional-set sense, by Feder and Lapidoth~\cite{feder1998universal}: a
fixed-dimension tap net plus Gaussian smoothness of the log-likelihood give
likelihood closeness off an exponentially rare output event. The remainder of this section gives a
quantitative, geometry-driven route for the \emph{linear-in-$x$} metric class
that covers both AWGN applications; for these families the domination holds
only on a typical event, so the guarantees below carry a cap.

\subsection{Linear-in-\texorpdfstring{$x$}{x} metrics for AWGN}
\label{sec:awgn-linear}

We work with metrics that are linear in the codeword,
\begin{equation}\label{eq:awgn-linear-metric}
  m_\theta(x,y) \;=\; \BRAi{x,\,v_\theta(y)},
  \qquad \theta\in\Theta\subset\mR^d,
\end{equation}
on the power shell $\norm{x}^2=nP$ under uniform-shell input. On the shell three
reductions place a wide range of AWGN decoders into this form:
(i) even powers $\norm{x}^{2k}$ are constant and drop;
(ii) a positive prefactor $\alpha>0$ does not change the decision rule, so the
noise variance $\sigma^2$ and any input scale $h>0$ vanish from the
rank-equivalent metric;
(iii) any linear preprocessing $G$ of $y$ yields $\BRAi{x,Gy}$, covering
equalize-and-decode metrics.

\paragraph{Interference (matched-ML lands in the class).}
For $Y=hX+S(\theta)+Z$ with $Z\sim\cN(0,\sigma^2 I_n)$, expanding
$-\tfrac{1}{2\sigma^2}\norm{y-hx-S(\theta)}^2$ on the shell gives a term constant
in $x$ plus $\tfrac{h}{\sigma^2}\BRAi{x,\,y-S(\theta)}$. Dropping the positive
prefactor reduces matched-ML to~\eqref{eq:awgn-linear-metric} with
$v_\theta(y)=y-S(\theta)$, parametrized by $\theta$ alone; the pair $(h,\sigma)$
survives only inside regularity (typical-set) conditions.

\paragraph{ISI (matched-ML leaves the class).}
For $Y=H_h X+Z$ with $H_h$ the Toeplitz convolution operator,
$-\norm{y-H_h x}^2 = -\norm{y}^2 + 2\BRAi{x,\,H_h^\top y} - \norm{H_h x}^2$, and
the quadratic $\norm{H_h x}^2$ is not constant on the shell unless $H_h$ is
unitary. Matched-ML is therefore outside~\eqref{eq:awgn-linear-metric}. What
\emph{is} in the class is the equalize-and-decode (E\&NN) family: fix an
equalizer map $h\mapsto G_h$ and use $m_h(x,y)=\BRAi{x,\,G_h y}$.

\subsection{The discretization lemma}
\label{sec:awgn-disc}

Fix a metric $m(x,y)=\BRAi{x,v(y)}$ with $v(y)\neq 0$ and set the cosine
\begin{equation}\label{eq:awgn-cosine}
  t(x,y)\;\triangleq\;\frac{\BRAi{x,v(y)}}{\norm{x}\,\norm{v(y)}}.
\end{equation}
Under shell input, rotational invariance makes the PEP a function of the cosine
alone. Writing $\hat\rho = \BRAi{X^n,u}/(\norm{X^n}\norm{u})$ for the angular
correlation between a uniform shell vector and a fixed direction $u$ (its law is
the same for every $u$ by symmetry), and $F(t)\triangleq\PR{\hat\rho\ge t}$, one
has the closed form
\begin{equation}\label{eq:awgn-pep-via-f}
  \PEP{x}{y}_{m}
  \;=\; \PRs{\tilde X^n\sim Q_X^n}{\BRAi{\tilde X^n,v(y)}\ge\BRAi{x,v(y)}}
  \;=\; F\!\bigl(t(x,y)\bigr),
\end{equation}
where, with $\beta=(n-1)/2$,
\begin{equation}\label{eq:awgn-f-form}
  F(t)\;=\;\frac{1}{\pi}\,(1-t^2)^{\beta}\,I_n(t),
  \qquad
  I_n(t)\;\triangleq\;\int_0^{\pi/2}\!\bigl(1+t^2\tan^2\phi\bigr)^{-\beta}\,d\phi .
\end{equation}
The identity~\eqref{eq:awgn-pep-via-f} reduces comparing the PEP across $\theta$
to comparing $F$ at the cosines $t_\theta(x,y)$; the closed form is proved in
Appendix~\ref{app:shell-f} (\cref{lem:awgn-shell-f}). The discretization
result rests on two Lipschitz estimates --- one for $\log F$ in the cosine,
one for the cosine in the parameter --- combined through a covering argument.
\Cref{fig:uc-lip-chain} previews the chain and its constants.

\begin{figure}[tbp]
\centering
\begin{tikzpicture}[node distance=8mm and 16mm,
	bx/.style={draw, rounded corners=2pt, inner sep=5pt, font=\small},
	arr/.style={-{Latex[length=2.2mm]}, thick}]
\node[bx] (th) {$\theta$};
\node[bx, right=of th] (v) {$v_\theta(y)$};
\node[bx, right=of v] (t) {$t_\theta(x,y)$};
\node[bx, right=of t] (F) {$\log F(t)$};
\draw[arr] (th) -- node[above, font=\footnotesize]{(H1)} node[below, font=\footnotesize]{$K\sqrt n$} (v);
\draw[arr] (v) -- node[above, font=\footnotesize]{(H2)} node[below, font=\footnotesize]{$L_t=\dfrac{2K}{c_0}=O(1)$} (t);
\draw[arr] (t) -- node[above, font=\footnotesize]{Lem.~\ref{lem:awgn-logf}} node[below, font=\footnotesize]{$C_F\,n$} (F);
\end{tikzpicture}
\caption{\textbf{The cosine chain and its constants.} The direction map moves
at scale $K\sqrt n$ in $\theta$ (H1); normalizing by the direction's typical
length $c_0\sqrt n$ (H2) cancels the $\sqrt n$'s, leaving the order-one cosine
constant $L_t=2K/c_0$; the tail $\log F$ then moves at slope $C_Fn$ in the
cosine. A $1/n$ mesh in $\theta$ therefore perturbs the log-PEP by only
$C=C_FL_t=O(1)$ --- the single surviving factor of $n$ is the exponent scale
itself (\cref{thm:awgn-disc}).}
\label{fig:uc-lip-chain}
\end{figure}

\begin{lemma}[Logarithmic Lipschitz of the angular-correlation tail]
\label{lem:awgn-logf}
For every $t\in(0,1)$,
\begin{equation}\label{eq:awgn-logf-deriv}
  \abs{\frac{d}{dt}\log F(t)} \;\le\; \frac{n-1}{t\,(1-t^2)} .
\end{equation}
Consequently, for any $0<t_-<t_+<1$ and $t,t'\in[t_-,t_+]$,
\begin{equation}\label{eq:awgn-logf-lip}
  \abs{\log F(t)-\log F(t')} \;\le\; C_F\,n\,\abs{t-t'},
  \qquad C_F \;\triangleq\; \frac{1}{t_-\,(1-t_+^2)} .
\end{equation}
\end{lemma}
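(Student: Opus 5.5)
The plan is to bypass the closed form \eqref{eq:awgn-f-form} and work directly with the density of the angular correlation $\hat\rho$. For a uniform vector on $\sqrt{nP}\,\mS^{n-1}$ and a fixed unit direction $u$, the cosine $\hat\rho=\BRAi{X^n,u}/\norm{X^n}$ has the classical density
\[
  f(s)=c_n\,(1-s^2)^{(n-3)/2}\quad\text{on }(-1,1).
\]
Here $c_n$ is a normalizing constant, and the formula holds for $n\ge2$. It follows from the surface-area element of the sphere written in terms of the polar angle to $u$, or equivalently by differentiating the representation used for \cref{lem:awgn-shell-f} in Appendix~\ref{app:shell-f}. Since $F(t)=\int_t^1 f(s)\,ds$, we get $\tfrac{d}{dt}\log F(t)=-f(t)/F(t)$. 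The claim \eqref{eq:awgn-logf-deriv} therefore reduces to a lower bound on the tail $F(t)$ in terms of the density at $t$.

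The key step is the elementary tail bound. For $t\in(0,1)$ and $s\in[t,1]$ we have $s\le1$, so
\[
  F(t)\;\ge\;c_n\int_t^1 s\,(1-s^2)^{(n-3)/2}\,ds
  \;=\;\frac{c_n\,(1-t^2)^{(n-1)/2}}{n-1}.
\]
The integral is exact, with antiderivative $-(1-s^2)^{(n-1)/2}/(n-1)$, and it is valid for $n\ge2$; the exponent $(n-3)/2=-\tfrac12$ at $n=2$ is still integrable. Dividing $f(t)$ by this bound gives
\[
  \abs{\tfrac{d}{dt}\log F(t)}=\frac{f(t)}{F(t)}\le\frac{n-1}{1-t^2}.
\]
This is at most $\frac{n-1}{t(1-t^2)}$ because $0<t<1$, so \eqref{eq:awgn-logf-deriv} holds, in fact with the factor $1/t$ to spare. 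One could alternatively differentiate \eqref{eq:awgn-f-form} and bound $I_n'/I_n$; the $t$ in the stated denominator suggests that route, since $\partial_t(1+t^2\tan^2\phi)^{-\beta}$ produces a factor $2\beta t\tan^2\phi/(1+t^2\tan^2\phi)\le 2\beta/t$. The density route avoids that bookkeeping.

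For \eqref{eq:awgn-logf-lip}, apply the mean value theorem on $[t_-,t_+]\subset(0,1)$. Note that $F>0$ and $F$ is $C^1$ there, so $\log F$ is differentiable. On this interval,
\[
  \sup_{t\in[t_-,t_+]}\frac{n-1}{t(1-t^2)}\;\le\;\frac{n}{t_-(1-t_+^2)}\;=\;C_F\,n.
\]
This uses that $t\mapsto 1/t$ is decreasing and $t\mapsto 1/(1-t^2)$ is increasing on $(0,1)$.

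The only step requiring care is the density formula for $\hat\rho$ and its exponent $(n-3)/2$. Getting that exponent right is exactly what makes the antiderivative produce the factor $n-1$. I would verify it against the closed form of \cref{lem:awgn-shell-f}, or by the standard Gaussian representation $X^n=\sqrt{nP}\,G/\norm{G}$ with $G\sim\cN(0,I_n)$. In that representation $\hat\rho^2\sim\mathrm{Beta}(\tfrac12,\tfrac{n-1}{2})$, and the change of variables $s\mapsto s^2$ confirms the exponent.
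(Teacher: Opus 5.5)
Your proof is correct but follows a different route from the paper; the alternative you mention in passing is exactly the paper's proof.

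\textbf{The paper's route.} It differentiates the closed form \eqref{eq:awgn-f-form} directly, writing $\log F(t)=\tfrac{n-1}{2}\log(1-t^2)+\log I_n(t)$. The prefactor contributes $(n-1)t/(1-t^2)$. Under the integral, the pointwise bound $\tan^2\phi/(1+t^2\tan^2\phi)\le 1/t^2$ gives $\abs{I_n'(t)/I_n(t)}\le(n-1)/t$. The identity $t/(1-t^2)+1/t=1/(t(1-t^2))$ then assembles the stated bound, and this is where the $1/t$ comes from.

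\textbf{Your route.} You write $-\tfrac{d}{dt}\log F=f/F$ and lower-bound the tail by inserting the factor $s\le1$, a Mills-ratio-type argument.

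\textbf{What each buys.} The paper's argument is self-contained: it uses only the closed form already proved in \cref{lem:awgn-shell-f}. Yours imports the classical density $c_n(1-s^2)^{(n-3)/2}$, which is standard (your Beta-law check settles the exponent) but is not established in the paper. In return, yours is sharper and more robust.
\begin{itemize}
\item It gives $f(t)/F(t)\le(n-1)/(1-t^2)$. This confirms the paper's own remark that the $1/t$ factor is an artifact of its uniform estimate.
\item Both $s\le1$ and your antiderivative computation hold for every $t\in(-1,1)$, so the bound does not need $t$ bounded away from $0$.
\item On the paper's band one could therefore take $C_F=1/(1-t_+^2)$. The lower edge $t_-$ then plays no role in this Lipschitz step.
\end{itemize}

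The case $n=1$, which your density formula excludes, is trivial, since then $F\equiv\tfrac12$ on $(0,1)$.
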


\begin{IEEEproof}
Differentiate $\log F(t)=\tfrac{n-1}{2}\log(1-t^2)+\log I_n(t)$. The first term
contributes $-(n-1)t/(1-t^2)$. For the second, differentiating under the integral
and using the pointwise bound $\tan^2\phi/(1+t^2\tan^2\phi)\le 1/t^2$ gives
$\abs{I_n'(t)/I_n(t)}\le(n-1)/t$. Adding the two bounds
yields~\eqref{eq:awgn-logf-deriv}, and~\eqref{eq:awgn-logf-lip} follows from the
mean value theorem on the compact interval $[t_-,t_+]$.
\end{IEEEproof}

The slope $\Theta(n)$ of $\log F$ in the cosine is exactly the AWGN
error-exponent slope, expressed in the angular variable. The $t$-dependence of
the constant is read off the closed form. The main term contributes the
angular-exponent slope
$-(n-1)t/(1-t^2)$ --- $\Theta(n)$ on any interior band,
genuinely blowing up only at $t\to1$, where the cap degenerates; the
correction estimate $\abs{I_n'/I_n}\le(n-1)/t$ adds the $1/t$, and the two
collapse via
$t/(1-t^2)+1/t=1/(t(1-t^2))$. The $1/t$ factor is an artifact of that uniform
estimate, not a genuine $t\to0$ blowup ($F(0)=\tfrac12$, with bounded true
slope there); its looseness is immaterial, since (H3) confines every use to a
compact interior band $[t_-,t_+]$.

\begin{lemma}[Lipschitz of the cosine in the parameter]
\label{lem:awgn-cosine}
Suppose there is a typical event $\cE_n^Y\subset\mR^n$ on which:
\begin{description}
  \item[(H1)] \emph{Lipschitz direction map.} There is $K\ge 0$ with
  $\norm{v_\theta(y)-v_{\theta'}(y)}\le K\sqrt n\,\norm{\theta-\theta'}$
  for every $y\in\cE_n^Y$ and $\theta,\theta'\in\Theta$.
  \item[(H2)] \emph{Non-degenerate direction.} There is $c_0>0$ with
  $\norm{v_\theta(y)}\ge c_0\sqrt n$ for all
  $\theta\in\Theta$, $y\in\cE_n^Y$.
\end{description}
Then for every $x\in\mR^n\setminus\{0\}$ and $y\in\cE_n^Y$,
\begin{equation}\label{eq:awgn-t-lip}
  \abs{t_\theta(x,y)-t_{\theta'}(x,y)} \;\le\; L_t\,\norm{\theta-\theta'},
  \qquad L_t \;\triangleq\; \frac{2K}{c_0}.
\end{equation}
\end{lemma}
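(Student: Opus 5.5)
The plan is to write the cosine as an inner product of the unit vector $x/\norm{x}$ with the normalized direction $\hat v_\theta\triangleq v_\theta(y)/\norm{v_\theta(y)}$. Then
\[
  \abs{t_\theta(x,y)-t_{\theta'}(x,y)}
  =\abs{\BRAi{x/\norm{x},\,\hat v_\theta-\hat v_{\theta'}}}
  \le\norm{\hat v_\theta-\hat v_{\theta'}}
\]
by Cauchy--Schwarz. This removes $x$ entirely, so the claim holds uniformly over $x\ne0$ and is not restricted to the shell. It remains to control the normalization map $v\mapsto v/\norm{v}$ away from the origin.

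For the normalization step, the standard estimate is as follows. For nonzero $a,b$,
\[
  \Bigl\|\frac{a}{\norm a}-\frac{b}{\norm b}\Bigr\|
  \le\frac{\norm{a-b}}{\norm a}+\norm b\,\Bigl|\frac1{\norm a}-\frac1{\norm b}\Bigr|
  =\frac{\norm{a-b}+\bigl|\norm b-\norm a\bigr|}{\norm a}
  \le\frac{2\norm{a-b}}{\norm a},
\]
using the reverse triangle inequality in the last step. Take $a=v_\theta(y)$ and $b=v_{\theta'}(y)$ with $y\in\cE_n^Y$. Hypothesis (H2) gives $\norm a\ge c_0\sqrt n>0$; this also guarantees the cosine is well defined. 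Hypothesis (H1) gives $\norm{a-b}\le K\sqrt n\,\norm{\theta-\theta'}$. The factors of $\sqrt n$ cancel, yielding $\norm{\hat v_\theta-\hat v_{\theta'}}\le (2K/c_0)\norm{\theta-\theta'}=L_t\norm{\theta-\theta'}$, which is \eqref{eq:awgn-t-lip}.

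There is no real obstacle. The only points needing care are two. First, both hypotheses must be applied at the same $y\in\cE_n^Y$ and uniformly in $\theta,\theta'$, which is exactly how (H1) and (H2) are quantified. Second, the constant must be tracked so that the $\sqrt n$ scale cancels, leaving the order-one constant advertised in \cref{fig:uc-lip-chain}. Only the lower bound on one of the two norms is needed, so no symmetrization is required.
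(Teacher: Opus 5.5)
Your proposal is correct and follows essentially the same route as the paper's proof in Appendix~\ref{app:cosine-lip}. That proof also uses Cauchy--Schwarz to reduce to $\norm{\hat v_\theta-\hat v_{\theta'}}$, applies the same asymmetric split with the reverse triangle inequality to get $2\norm{v_\theta-v_{\theta'}}/\norm{v_\theta}$, and then uses (H1) for the numerator and (H2) for the denominator, so that the $\sqrt n$ factors cancel. (Note that $b=v_{\theta'}(y)\neq0$ also follows from (H2), which holds for every $\theta$.)
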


\begin{IEEEproof}[Proof sketch]
By Cauchy--Schwarz, $\abs{t_\theta(x,y)-t_{\theta'}(x,y)}\le\norm{\hat
v_\theta-\hat v_{\theta'}}$ where $\hat v=v/\norm{v}$. An asymmetric two-term
split of the unit-vector difference bounds it by
$2\norm{v_\theta-v_{\theta'}}/\norm{v_\theta}$, combining the numerator bound
(H1) with the denominator lower bound (H2); the two $\sqrt n$ factors cancel,
leaving the order-one constant $L_t=2K/c_0$. The full proof is in
Appendix~\ref{app:cosine-lip}.
\end{IEEEproof}

For families whose direction map is Lipschitz for \emph{every} $y$ --- such
as the interference family below, where
$v_\theta(y)-v_{\theta'}(y)=S(\theta')-S(\theta)$ is $y$-free --- (H1) holds
with $\cE_n^Y=\mR^n$. The typical-set form is what the equalize-and-decode
family of \cref{sec:awgn-isi} genuinely requires, since there
$\norm{G_hy-G_{h'}y}$ scales with $\norm{y}$, which is $O(\sqrt n)$ only on a
typical event.

\begin{remark}[Only the lower bound on $\norm{v_\theta}$ is used]
\label{rem:awgn-onlylower}
The argument uses (H2) solely as a lower bound in the denominator: a larger
$\norm{v_\theta}$ would only shrink $L_t$. In the applications
$\norm{v_\theta(y)}=\Theta(\sqrt n)$ on the typical set --- the matched direction
is built from quantities of size $\Theta(\sqrt n)$ (codeword, output,
interference) --- but the upper side of this scaling is never invoked.
\end{remark}

\begin{remark}[(H2)--(H3) are for-all-$\theta$ statements]
\label{rem:awgn-typical}
Hypotheses (H2) and (H3) below each demand a \emph{single} typical event on
which the stated bound holds \emph{simultaneously for every}
$\theta\in\Theta$: the GLRT sweeps $\theta$, and the proofs consume the
all-$\theta$ control at each typical pair. Per-parameter concentration ---
$\norm{v_\theta(Y^n)}=\Theta(\sqrt n)$ with high probability for each
\emph{fixed} $\theta$ --- does not deliver this over the uncountable
$\Theta$. Two standard devices close the gap in the applications: \emph{(H2)
by projection} --- when the $\theta$-dependence of $v_\theta$ is confined to a
fixed finite-dimensional subspace, a $\theta$-free projection bound covers all
$\theta$ by one $\chi^2$ event (interference; the ISI analogue substitutes a
uniform spectral lower bound on the equalizer); and \emph{(H3) by net} --- an
$\eta$-net over the compact $\Theta$, per-point concentration, a union bound,
and \cref{lem:awgn-cosine} to extend from the net to all of $\Theta$; since
$L_t=O(1)$ and the band margins are constants, a \emph{constant} mesh
suffices, so the net has $O(1)$ points. The order matters: $L_t=2K/c_0$
presupposes (H2), so (H2) is established first, by projection, and only then
is the net extended --- no circularity. The actual verifications are
\cref{lem:awgn-interf} (interference) and \cref{lem:awgn-isi} (E\&NN/ISI,
under explicit family restrictions).
\end{remark}

Combining the two Lipschitz bounds through a Euclidean net gives the
discretization theorem.

\begin{theorem}[Discretization of parametric Gaussian metrics]
\label{thm:awgn-disc}
Let $\Theta\subset\mR^d$ be compact with Euclidean diameter $D$, and let
$\{m_\theta\}$ be of the form~\eqref{eq:awgn-linear-metric}. Assume (H1)--(H2) of
\cref{lem:awgn-cosine}, with $L_t=2K/c_0$, and:
\begin{description}
  \item[(H3)] \emph{Bounded codeword angle.} There are $0<t_-<t_+<1$, a constant
  $\delta>0$ uniform in $n$ and $\theta$, and a typical event
  $\cE_n\subset\cX^n\times\cE_n^Y$ with $\PR{(X^n,Y^n)\notin\cE_n}\le e^{-n\delta}$
  under the channel-joint law $\bbQ_X\cdot\bbW$ for \emph{every} channel in the
  family --- every admissible true parameter, including nuisance parameters not
  indexing the metric --- and $t_\theta(x,y)\in[t_-,t_+]$ for all
  $\theta\in\Theta$, $(x,y)\in\cE_n$.
\end{description}
Then for every $\eta>0$ small enough that
$L_t\eta\le\tfrac12\min(t_-,1-t_+)$ there is a finite
$\Theta_d^n\subset\Theta$ with $\abs{\Theta_d^n}\le(D\sqrt d/\eta)^d$ such that
every $\theta\in\Theta$ has a $\theta_d\in\Theta_d^n$ with
$\norm{\theta-\theta_d}\le\eta$ and, on $\cE_n$,
\begin{equation}\label{eq:awgn-pep-equiv}
  e^{-Cn\eta}\,\PEP{x}{y}_{m_\theta}
  \;\le\; \PEP{x}{y}_{m_{\theta_d}}
  \;\le\; e^{Cn\eta}\,\PEP{x}{y}_{m_\theta},
\end{equation}
where $C\triangleq C_F\,L_t$ combines the two Lipschitz constants, with $C_F$
the log-$F$ constant of \cref{lem:awgn-logf} evaluated on the \emph{enlarged}
band $[t_-/2,(1+t_+)/2]$ (the expanded value is displayed in the proof). In
particular, $\eta=1/n$ gives $\abs{\Theta_d^n}=O(n^d)$ with a constant-factor
PEP shift on the typical set.
\end{theorem}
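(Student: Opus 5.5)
The proof composes the two Lipschitz estimates through a Euclidean net, and I would carry it out in three steps.

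\emph{Step 1: the net.} Enclose $\Theta$ in an axis-aligned cube of side at most $D$ (any compact set of diameter $D$ fits in such a cube). Partition the cube into subcubes of side $\eta/\sqrt d$, whose diameter is $\eta$. In every subcube that meets $\Theta$, select one point of $\Theta$. This gives $\Theta_d^n$ with at most $\lceil D\sqrt d/\eta\rceil^d$ points, which matches the stated count up to the harmless ceiling, or exactly after a slight enlargement of the mesh. Every $\theta\in\Theta$ then lies within distance $\eta$ of some $\theta_d\in\Theta_d^n$. With $\eta=1/n$ the count is $O(n^d)$, as claimed.

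\emph{Step 2: the cosines stay in a band.} Fix $(x,y)\in\cE_n$ and a pair $\theta,\theta_d$ with $\norm{\theta-\theta_d}\le\eta$. Since $y\in\cE_n^Y$, \cref{lem:awgn-cosine} applies and gives $\abs{t_\theta-t_{\theta_d}}\le L_t\eta$. By (H3), both cosines already lie in $[t_-,t_+]$, because (H3) is stated for all $\theta\in\Theta$ and $\theta_d\in\Theta$. The smallness condition $L_t\eta\le\tfrac12\min(t_-,1-t_+)$ is therefore only needed to keep every intermediate point inside the enlarged band $[t_-/2,(1+t_+)/2]\subset(0,1)$. This is why the theorem evaluates $C_F$ on that band. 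Any segment between the two cosines stays in $[t_-,t_+]$ as well, so the enlargement is pure slack, and I would note this explicitly.

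\emph{Step 3: transfer to the PEP.} By the shell identity \eqref{eq:awgn-pep-via-f}, $\PEP{x}{y}_{m_\theta}=F(t_\theta(x,y))$ for each member, since each $m_\theta$ is linear in $x$ and $v_\theta(y)\neq0$ by (H2). I would also check that $F$ is continuous, because the law of $\hat\rho$ is atomless for $n\ge2$. Continuity makes the tie-as-error and dithered forms coincide, so the statement is unambiguous for either PEP. \Cref{lem:awgn-logf}, applied on the enlarged band with $C_F=1/\bigl((t_-/2)(1-((1+t_+)/2)^2)\bigr)$, then gives
\[
\abs{\log F(t_\theta)-\log F(t_{\theta_d})}\le C_Fn\,L_t\eta=Cn\eta .
\]
Exponentiating yields the two-sided bound \eqref{eq:awgn-pep-equiv}. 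Setting $\eta=1/n$ gives a factor of $e^{C}$, which is the constant PEP shift claimed.

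\emph{Main obstacle.} The delicate point is Step 2's reliance on (H1)--(H2) and (H3) holding simultaneously for all $\theta$ on one typical event. The proof consumes this only as a hypothesis, but I would make sure that the choice of $\theta_d$ depends on $\theta$ alone and not on $(x,y)$, so that the grid is data-independent. The net construction in Step 1 guarantees this.
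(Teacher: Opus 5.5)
Your proposal is correct and is essentially the paper's proof: an $\eta$-net of $\Theta$, the cosine-Lipschitz bound of \cref{lem:awgn-cosine} on $\cE_n$, the log-$F$ Lipschitz bound of \cref{lem:awgn-logf}, and exponentiation through \eqref{eq:awgn-pep-via-f}. The one difference is in the band step. The paper places the grid cosine in the enlarged band $[t_-/2,(1+t_+)/2]$ via the Lipschitz shift, whereas you correctly note that $\Theta_d^n\subset\Theta$ already puts $t_{\theta_d}$ in $[t_-,t_+]$ by (H3), so the mesh constraint and the enlargement are pure slack (your ceiling caveat on the net count is likewise a harmless refinement of the paper's volume bound).
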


\begin{IEEEproof}
Let $\Theta_d^n$ be a minimal Euclidean $\eta$-net; the volume bound gives
$\abs{\Theta_d^n}\le(D\sqrt d/\eta)^d$. For $\theta\in\Theta$ pick
$\theta_d$ with $\norm{\theta-\theta_d}\le\eta$. By \cref{lem:awgn-cosine}, on
$\cE_n$, $\abs{t_\theta-t_{\theta_d}}\le L_t\eta$. By (H3),
$t_\theta(x,y)\in[t_-,t_+]$ on $\cE_n$; the grid cosine therefore lies in the
enlarged band $t_{\theta_d}(x,y)\in[t_--L_t\eta,\,t_++L_t\eta]$, which the
mesh constraint $L_t\eta\le\tfrac12\min(t_-,1-t_+)$ keeps inside
$[t_-/2,\,(1+t_+)/2]\subset(0,1)$. \Cref{lem:awgn-logf}, applied on this
enlarged band, then gives
$\abs{\log F(t_\theta)-\log F(t_{\theta_d})}\le C_F n\cdot L_t\eta=Cn\eta$,
explicitly with
\[
  C\;=\;C_F\,L_t\;=\;\frac{2K/c_0}{(t_-/2)\bigl(1-((1+t_+)/2)^2\bigr)}.
\]
Since $\log\PEP{x}{y}_{m_\theta}=\log F(t_\theta)$ by~\eqref{eq:awgn-pep-via-f},
exponentiating yields~\eqref{eq:awgn-pep-equiv}.
\end{IEEEproof}

\Cref{thm:awgn-disc} produces a polynomial-cardinality representative family on
which the merge construction of \cref{sec:univ-metric} applies with penalty
$\tfrac1n\log\abs{\Theta_d^n}=O(\log n/n)\to 0$. The resulting decoder is the
generalized-likelihood-ratio test (GLRT)~\cite{lapidoth1998reliable} over the
grid.

\subsection{Universality of the continuous GLRT}
\label{sec:awgn-glrt-sub}

The natural decoder for an unknown parameter minimizes the member PEP jointly
over the message and the entire continuum:
\begin{equation}\label{eq:awgn-glrt-rule}
  \hat\imath(y)\;=\;\argmin_{i,\,\theta\in\Theta}\,\PEP{x_i}{y}_{m_\theta},
\end{equation}
with no reference to a grid. By~\eqref{eq:awgn-pep-via-f} and the monotonicity
of $F$, this is the \emph{normalized (cosine)} GLRT
$\argmax_i\sup_{\theta}t_\theta(x_i,y)$ --- \emph{not} the unnormalized
$\argmax_i\sup_\theta m_\theta(x_i,y)$, which fails rank-equivalence across
$\theta$ because the normalizer $\norm{v_\theta(y)}$ depends on $\theta$ (see
the taxonomy in \cref{sec:awgn-interference}). In merge form, clip the inverse
PEP at level $e^{cn}$ and define
\begin{equation}\label{eq:awgn-barm}
  \bar m^n(x,y)\;\triangleq\;\min\!\BRAs{-\log\inf_{\theta\in\Theta}\PEP{x}{y}_{m_\theta},\;cn};
\end{equation}
the rule~\eqref{eq:awgn-glrt-rule} is then
$\hat\imath(y)=\argmax_i\bar m^n(x_i,y)$.
The infimum over the continuum raises no measurability issue: for
$y\in\cE_n^Y$, (H1)--(H2) make $\theta\mapsto t_\theta(x,y)$ Lipschitz
(\cref{lem:awgn-cosine}) and hence
$\theta\mapsto\PEP{x}{y}_{m_\theta}=F(t_\theta(x,y))$ continuous on the
compact $\Theta$, so a single countable dense subset of $\Theta$ attains the
infimum at every such pair --- the typical-event form of the
countable-subfamily hypothesis of \cref{prop:uc-continuum-merge} --- and off
the typical event the restricted decoder defined next is constant, so it is
measurable throughout.
Hypothesis (H3) controls the channel-joint typical set, but at a fixed $y$ the
product-law slice $Q_X^n\{\tilde X^n:(\tilde X^n,y)\notin\cE_n\}$ is not bounded
by (H3) and is in fact close to $1$. To bypass it we analyze the
typical-set-restricted decoder
\begin{equation}\label{eq:awgn-barm-cE}
  \bar m^n_\cE(x,y)\;\triangleq\;
  \bar m^n(x,y)\,\Ind{(x,y)\in\cE_n}\;-\;cn\,\Ind{(x,y)\notin\cE_n},
\end{equation}
which agrees with $\bar m^n$ on every channel-typical pair and is set to the
lowest score $-cn$ off $\cE_n$. Score agreement on $\cE_n$ does \emph{not} by
itself make the two decisions coincide: a decision at $y$ compares the scores
of \emph{all} codewords, including competitor pairs $(x_j,y)\notin\cE_n$ ---
typically most of them, by the slice observation above --- whose scores the
restricted decoder floors at $-cn$ while the bare rule does not.

\begin{theorem}[Universality of the continuous GLRT]
\label{thm:awgn-glrt}
Assume (H1)--(H3) of \cref{thm:awgn-disc}, and let $\delta>0$ be the
typical-set rate, so that $\PR{(X^n,Y^n)\notin\cE_n}\le e^{-n\delta}$ under
$\bbQ_X\cdot\bbW_{\theta_*}$ for every admissible true parameter. Choose any
clipping constant $c>0$ and let $\bar m^n_\cE$ be as
in~\eqref{eq:awgn-barm-cE}. Then for every $\theta_*\in\Theta$ and every rate
$R<c$,
\begin{align}
  \Erc\bigl(R;\,\bbW_{\theta_*},\,\bbQ_X,\,\bar m^n_\cE\bigr)
  &\;\ge\;
  \min\Bigl\{\,\sup_{\theta'\in\Theta}
    \Erc\bigl(R;\,\bbW_{\theta_*},\,\bbQ_X,\,m_{\theta'}\bigr),\;c-R,\;
    \delta\Bigr\}\label{eq:awgn-glrt-exp}\\
  &\;\ge\;
  \min\bigl\{\Erc\bigl(R;\,\bbW_{\theta_*},\,\bbQ_X,\,m_{\theta_*}\bigr),\;
    c-R,\;\delta\bigr\}.\notag
\end{align}
\end{theorem}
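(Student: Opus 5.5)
We reduce to the random-coding functional of \cref{prop:uc-erc-pep}, now applied to the restricted decoder $\bar m^n_\cE$, and split the channel-joint expectation over $(X^n,Y^n)\sim Q_X^n\cdot W^n_{\theta_*}$ according to whether $(X^n,Y^n)\in\cE_n$. Off $\cE_n$ we bound the integrand $\min\{1,e^{nR}\PEPm{X^n}{Y^n}{\bar m^n_\cE}\}$ by $1$. By (H3) this contributes at most $e^{-n\delta}$, which is the $\delta$ entry of the minimum. On $\cE_n$ the transmitted score is $\bar m^n(X^n,Y^n)$, while every competitor pair off $\cE_n$ sits at the floor $-cn$. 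Two cases arise here. If $\bar m^n(X^n,Y^n)>-cn$, such competitors cannot tie or beat. If the transmitted score is itself at the floor, we bound the conditional error by $1$, but this cannot happen, because $\bar m^n\ge0$. Hence, for $(x,y)\in\cE_n$,
\[
\PEPm{x}{y}{\bar m^n_\cE}\;\le\;Q_X^n\bigl\{\tilde x:\ \bar m^n(\tilde x,y)\ge\bar m^n(x,y)\bigr\},
\]
and the floor in fact only helps. The remaining task is to bound the PEP of the clipped continuum merge $\bar m^n$ at a fixed typical $y$.

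For that we run the moment route of \cref{lem:uc-pep-bound-u}, taking the grid $\Theta_d^n$ of \cref{thm:awgn-disc} with $\eta=1/n$, so that $K_n=O(n^d)$. Markov gives $\PEPm{x}{y}{\bar m^n}\le\Es{\tilde X^n}{e^{\bar m^n(\tilde X^n,y)}}\,e^{-\bar m^n(x,y)}$. The main obstacle is the normalizer: \cref{thm:awgn-disc} only compares continuum and grid PEPs on $\cE_n$, whereas at fixed $y$ the product-law slice of $\cE_n^c$ is nearly all of $\cX^n$, so the (H-sep) moment transfer of \cref{prop:uc-continuum-merge} is unavailable. My plan is to avoid the grid in the normalizer and instead use the clip at $e^{cn}$ together with continuity in $\theta$. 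For each $\tilde x$ we have $e^{\bar m^n(\tilde x,y)}\le\min\{\inf_\theta\PEPm{\tilde x}{y}{m_\theta}^{-1},e^{cn}\}$. For $y\in\cE_n^Y$, \cref{lem:awgn-cosine} gives $|t_\theta-t_{\theta_d}|\le L_t/n$ for every $\tilde x$, with no typicality of $\tilde x$ needed, since (H1)--(H2) are conditions on $y$ alone. The cosine, and hence $F$, is monotone; so the continuum infimum is attained within an $L_t/n$ cosine shift of a grid member. Consequently $\{\tilde x:\inf_\theta F(t_\theta)\le s\}\subset\bigcup_{\theta_d}\{F(t_{\theta_d}+L_t/n)\le s\}$. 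Each set on the right has $Q_X^n$-mass at most $F^{-1}$-controlled $\approx s\,e^{O(1)}$ whenever the shifted cosine stays in a band where \cref{lem:awgn-logf} applies. Cosines near $1$ only enter at scale $s\le e^{-cn}$, where the clip already caps the integrand. A layer-cake integration over $s\in[e^{-cn},1]$ then gives
\[
\Es{\tilde X^n}{e^{\bar m^n(\tilde X^n,y)}}\le K_n\,e^{O(1)}(1+cn)=e^{o(n)},
\]
the analogue of \cref{lem:uc-clipped-inv-pep}. If the band issue near $t\to1$ proves delicate, the fallback is the crude bound $e^{cn}$ on the subset of cosines above $(1+t_+)/2$. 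That subset has shell mass $F((1+t_+)/2)\le e^{-n\kappa}$, and this produces an extra exponent term that is absorbed into the cap.

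With the normalizer in hand, the bound on $\cE_n$ becomes
\[
e^{nR}\PEPm{x}{y}{\bar m^n_\cE}\le e^{o(n)}e^{nR}\max\Bigl\{\inf_\theta\PEPm{x}{y}{m_\theta},\,e^{-cn}\Bigr\}\le e^{o(n)}e^{nR}\max\bigl\{\PEPm{x}{y}{m_{\theta'}},\,e^{-cn}\bigr\}
\]
for every $\theta'$. Taking expectations as in the proof of \cref{thm:univ-rc}, the first term yields $\Erc(R;\bbW_{\theta_*},\bbQ_X,m_{\theta'})$ up to $e^{o(n)}$. The clip term yields $e^{-n(c-R)}$, the $c-R$ entry; because the clip is linear here rather than $n^2$, it survives (\cref{rem:uc-clip-calibration}). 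Taking the supremum over $\theta'$ and using $\Ehat(a_n+b_n+c_n)=\min$ of the three exponents gives \eqref{eq:awgn-glrt-exp}. The second inequality is immediate by choosing $\theta'=\theta_*$.
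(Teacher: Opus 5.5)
Your argument is correct in substance but takes a different route from the paper's. The paper never leaves the restricted metric. Its Step~1 bounds the normalizer $\Es{\tilde X^n}{e^{\bar m^n_\cE(\tilde X^n,y)}}$ by splitting competitor pairs by membership in $\cE_n$. On $\cE_n$, (H3) puts every cosine in $[t_-,t_+]$, so \cref{thm:awgn-disc} applies verbatim and reduces everything to the grid merge and \cref{lem:uc-clipped-inv-pep}. Off $\cE_n$, the floor makes the integrand $e^{-cn}$. So the floor is what makes the normalizer easy, and $\log F$ is only ever needed on the (H3) band.

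You instead discard the floor. This is legitimate: on $\cE_n$ the transmitted score is $\ge0>-cn$, so the restricted PEP is at most the bare one. You then bound the \emph{bare} normalizer directly. This works because (H1)--(H2), and hence \cref{lem:awgn-cosine}, constrain $y$ only. So $\{\sup_\theta t_\theta\ge\tau\}\subset\bigcup_{\theta_d}\{t_{\theta_d}\ge\tau-L_t\eta\}$ holds for every competitor. The clip confines the layer-cake to $\tau\le t_c\triangleq F^{-1}(e^{-cn})$, where $1-t_c^2\ge e^{-3c}$ for $n\ge3$ because $F(t)\le\frac12(1-t^2)^{(n-1)/2}$. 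The price is that you need $F(\tau-L_t\eta)\le A_nF(\tau)$ with $A_n=e^{o(n)}$ on all of $(-1,t_c]$, not just on a band. The gain is that (H3) enters only through $\PR{(X^n,Y^n)\notin\cE_n}$, and your normalizer bound holds at every $y\in\cE_n^Y$. As far as I can see, the same argument therefore bounds the PEP of the bare GLRT $\bar m^n$ at every pair with $y\in\cE_n^Y$. That is exactly the point the paper leaves open in \cref{rem:awgn-bare}.

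Two details need repair. First, you do not treat the small-cosine end, and that is where the competitor mass lives. Under the product law the cosines concentrate near $0$, and the constant $1/(t_-(1-t_+^2))$ of \cref{lem:awgn-logf} degenerates as $t_-\to0$. So no band on which the lemma gives a bounded shift covers them. One fix is the finer mesh $\eta=n^{-2}$, which still gives $|\Theta_d^n|=O(n^{2d})$:
\begin{itemize}
\item for $\tau\in[1/n,t_c]$, the lemma applied on $[1/(2n),t_c]$ gives a log-shift of at most $2L_te^{3c}$;
\item for $\tau\le1/n$, $F(\tau)\ge F(1/n)\to\tfrac12$ while $F(\tau-L_t\eta)\le1$.
\end{itemize}
With $\eta=1/n$ you can instead split at $\epsilon_n=n^{-1/4}$, paying $e^{O(n^{1/4})}$ above the split and $1/F(\epsilon_n)=e^{O(\sqrt n)}$ below it. Either way the normalizer is $e^{o(n)}$, but your displayed $K_ne^{O(1)}(1+cn)$ needs the finer mesh.

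Second, drop the fallback. A contribution of order $K_ne^{cn}e^{-n\kappa}$ (your $\kappa$) to the normalizer multiplies the PEP bound. For $c>\kappa$ it would therefore cost exponent rather than being absorbed into the cap. It is also unnecessary, since the clip already keeps the relevant cosines below $t_c$.
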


The first form is the \emph{best-in-class} guarantee: the supremum ranges over
the entire metric family, not only the member indexed by the true parameter;
the second (matched-member) form is the special case $\theta'=\theta_*$. The
best-in-class exponent is attained in full whenever it is at most
$\min\{c-R,\delta\}$; in general the guarantee is the displayed minimum. The
$c-R$ term is an artifact of the clipping level and removable; the $\delta$
term is intrinsic to the typical-set route (\cref{rem:awgn-cliplevel}).

\begin{IEEEproof}
The proof retraces the merge argument of \cref{thm:univ-rc}, with the grid
family replaced by the continuum $\Theta$ and the clipped-inverse-PEP bound
replaced by a combination of the discretization theorem and the grid version
of the same bound; the typical-set-restricted form $\bar m^n_\cE$ is the
object analyzed --- this is the typical-set analogue of
\cref{prop:uc-continuum-merge}. Steps 1--3 (a uniform expectation bound on
$e^{\bar m^n_\cE}$, a pointwise bound on legitimate pairs, and the resulting
Markov PEP bound), deferred to Appendix~\ref{app:glrt-steps}, establish: for
every $(x,y)\in\cE_n$ and every \emph{fixed} $\theta'\in\Theta$ --- Step 2
bounds the infimum defining $\bar m^n$ by its value at an arbitrary fixed
$\theta'$, of which $\theta'=\theta_*$ is the special case ---
\begin{equation}\label{eq:awgn-step3-pep}
  \PEP{x}{y}_{\bar m^n_\cE}
  \;\le\;B_n\cdot\max\bigl\{\PEP{x}{y}_{m_{\theta'}},\;e^{-cn}\bigr\},
  \qquad B_n\;\triangleq\;e^C\cdot O\bigl(n^{d+1}\bigr),
\end{equation}
with $B_n$ subexponential and uniform in $y$; no channel-joint statement is
invoked there.

\emph{Step 4: random-coding exponent.} Fix $\theta'\in\Theta$ and split the
average error by legitimate-pair typicality:
\begin{align*}
  &\E{\min\bigl(1,e^{nR}\PEP{X^n}{Y^n}_{\bar m^n_\cE}\bigr)}\\
  &\quad\le B_n\,\E{\min\bigl(1,e^{nR}\max\{\PEP{X^n}{Y^n}_{m_{\theta'}},
    e^{-cn}\}\bigr)}+\PR{(X^n,Y^n)\notin\cE_n}\\
  &\quad\le B_n\,\E{\min\bigl(1,e^{nR}\PEP{X^n}{Y^n}_{m_{\theta'}}\bigr)}
    +B_n e^{n(R-c)}+e^{-n\delta},
\end{align*}
using \eqref{eq:awgn-step3-pep} on the $\cE_n$ event, $\min(1,\cdot)\le1$ off
it, and $\max(a,b)\le a+b$. Taking exponents (the exponent of a sum being the
minimum of the exponents) yields, for the fixed $\theta'$,
$\Erc(R;\bar m^n_\cE)\ge\min\{\Erc(R;m_{\theta'}),c-R,\delta\}$. The left side
does not depend on $\theta'$, so taking the supremum over $\theta'\in\Theta$
gives the first form of \eqref{eq:awgn-glrt-exp}, and $\theta'=\theta_*$ the
second.
\end{IEEEproof}

\begin{remark}[The clipping level: $c-R$ is removable, $\delta$ is not]
\label{rem:awgn-cliplevel}
Rerunning Steps 1--4 with the grid merge's clip $\alpha_n=e^{n^2}$ (and floor
$-n^2$ off $\cE_n$) goes through verbatim: Step 1 then gives
$\E{e^{\bar m^n_\cE(\tilde X^n,y)}}\le e^C|\Theta_d^n|(1+n^2)+e^{-n^2}
=e^C\,O(n^{d+2})$, still subexponential and uniform in $y$, and the middle
term of Step 4 becomes $B_ne^{nR-n^2}$, superexponentially small at every
rate. The guarantee is then $\min\{\sup_{\theta'}\Erc,\delta\}$ for
\emph{every} $R>0$: the $c-R$ term and the restriction $R<c$ disappear.
Equivalently, within the theorem as stated, $c$ is a free design constant and
$c\ge R+\delta$ makes the middle term non-binding. We nonetheless state the
theorem at a general clip $e^{cn}$ because the clip belongs to the
\emph{decoder}, not only to the proof: for the bare GLRT $\bar m^n$ --- the
implementable rule whose universality is the open point of
\cref{rem:awgn-bare} --- a low clipping level caps rogue competitor scores on
the product-law event $\cE_n^c$, which any resolution of that open question
must control.

The $\delta$ term, by contrast, is not removable within this route, and
$\delta$ cannot be made arbitrarily large: any (H3)-compliant typical set
excludes the cancellation event $\{\BRAi{X^n,v_{\theta_*}(Y^n)}\le0\}$
(there $t_{\theta_*}\le0<t_-$). For the interference family this event has
channel probability $e^{-n\,\SNR/2\,(1+o(1))}$: $v_{\theta_*}(Y^n)=hX^n+Z^n$,
so conditionally on $X^n$ the score
$\BRAi{X^n,v_{\theta_*}(Y^n)}=hnP+\BRAi{X^n,Z^n}$ is Gaussian with mean
$hnP$ and variance $nP\sigma^2$: the event is exactly a standard-normal
tail at $\sqrt{n\,\SNR}$ deviations, of exponent $\SNR/2$ (the ISI
analogue carries its own constants). Hence, uniformly over the SNR-constrained
family, $\delta\le h_{\min}^2P/(2\sigma_{\max}^2)$. Widening the band $[t_-,t_+]$
raises $\delta$ toward this cap at the price of a larger $C_F$ in
\cref{lem:awgn-logf}.
\end{remark}

\begin{remark}[The bare GLRT]
\label{rem:awgn-bare}
The decoder one would implement is the bare $\bar m^n$
of~\eqref{eq:awgn-barm} --- joint minimization over $(i,\theta)$. It assigns
the same score as $\bar m^n_\cE$ to every channel-typical pair, but --- as
noted before \cref{thm:awgn-glrt} --- score agreement does not make the two
decisions coincide. Whether $\bar m^n$ attains
the same exponent without the restriction (which would require controlling
competitor scores on the product-law slice) is left open.
\end{remark}

\subsection{Application: AWGN with deterministic interference}
\label{sec:awgn-interference}

For $Y=hX+S(\theta)+Z$ with $Z\sim\cN(0,\sigma^2 I_n)$, unknown
$\theta\in\Theta\subset\mR^d$ compact, $h\in[h_{\min},h_{\max}]$,
$\sigma\in[\sigma_{\min},\sigma_{\max}]$ subject to the SNR bound
\begin{equation}\label{eq:awgn-snr-bound}
  \SNR(h,\sigma)\;=\;h^2P/\sigma^2\;\le\;\Gamma_{\max},
\end{equation}
and structured interference $S(\theta)=\sum_{j=1}^d\theta_j\psi_j$ with bounded
basis $\norm{\psi_j}^2=O(n)$, the matched-ML metric reduces on the shell to
\begin{equation}\label{eq:awgn-interf-metric}
  m_\theta(x,y)\;=\;\BRAi{x,\,v_\theta(y)},
  \qquad v_\theta(y)\;\triangleq\;y-S(\theta),
\end{equation}
parametrized by $\theta$ alone. Hypothesis (H1) is immediate and needs no
typicality: $v_\theta(y)-v_{\theta'}(y)=S(\theta')-S(\theta)$ for \emph{every}
$y$, so the basis expansion gives
$\norm{S(\theta)-S(\theta')}\le(\sum_j\norm{\psi_j}^2)^{1/2}
\norm{\theta-\theta'}=O(\sqrt n)\norm{\theta-\theta'}$, i.e.\ $K=O(1)$.
Hypotheses (H2)--(H3) are for-all-$\theta$ statements on a single typical
event (\cref{rem:awgn-typical}); their verification, uniform across the
SNR-constrained family, is the following lemma.

\begin{lemma}[(H2)--(H3) for the interference family, uniformly over $\Theta$]
\label{lem:awgn-interf}
Let $V\triangleq\mathrm{span}\{\psi_1,\dots,\psi_d\}$, let $P_{V^\perp}$ be
the orthogonal projection onto $V^\perp$, and set
$S_{\max}\triangleq\sup_{\theta\in\Theta}\norm{S(\theta)}/\sqrt n$ (finite,
uniformly in $n$, by compactness of $\Theta$ and $\norm{\psi_j}^2=O(n)$).
Then:
\begin{enumerate}
\item[(i)] For all $n\ge2d$, (H2) holds with $c_0=\sigma_{\min}/2$ and the
  $\theta$-free event
  $\cE_n^Y\triangleq\{y\in\mR^n:\norm{P_{V^\perp}y}\ge(\sigma_{\min}/2)
  \sqrt n\}$.
\item[(ii)] There exist constants $0<t_-<t_+<1$, $\delta>0$ and $n_0$,
  depending only on
  $(P,\allowbreak d,\allowbreak D,\allowbreak K,\allowbreak
  h_{\min},\allowbreak h_{\max},\allowbreak\sigma_{\min},\allowbreak
  \sigma_{\max},\allowbreak S_{\max})$, such
  that (H3) holds for all $n\ge n_0$, uniformly over all true parameters
  $(h,\sigma,\theta_*)$ in the SNR-constrained family.
\end{enumerate}
\end{lemma}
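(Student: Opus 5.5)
For (i) we use that the whole $\theta$-dependence of $v_\theta$ lies in the fixed $d$-dimensional subspace $V$. For (ii) we sandwich the numerator and denominator of the cosine $t_\theta$ between $\theta$-free bounds. Each bound is obtained on a finite intersection of exponentially likely events, each with a $\theta$-free Gaussian or $\chi^2$ description. The uniform rate $\delta$ is then the minimum of finitely many large-deviation rates, which depend only on the listed constants.

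\emph{Step (i): (H2) by projection.} Since $S(\theta)\in V$, we have $P_{V^\perp}v_\theta(y)=P_{V^\perp}y$ for every $\theta$. Hence $\norm{v_\theta(y)}\ge\norm{P_{V^\perp}y}\ge(\sigma_{\min}/2)\sqrt n$ on $\cE_n^Y$, simultaneously for all $\theta$; this is (H2) with $c_0=\sigma_{\min}/2$.

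It remains to show that $\cE_n^Y$ is typical. Under the channel, $P_{V^\perp}Y=P_{V^\perp}(hX+S(\theta_*))+P_{V^\perp}Z$, and $P_{V^\perp}Z$ is an isotropic Gaussian in $n-d$ dimensions, independent of the shift. By Anderson's inequality (the centered Gaussian ball probability dominates every shifted one), $\PR{\norm{P_{V^\perp}Y}<(\sigma/2)\sqrt n}\le\PR{\chi^2_{n-d}<n/4}$. For $n\ge2d$ the threshold $n/4$ is at most half the mean $n-d$, so a Chernoff lower-tail bound gives an exponentially small probability with a universal rate. Replacing $\sigma$ by $\sigma_{\min}\le\sigma$ only helps.

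\emph{Step (ii): (H3).} We intersect $\cE_n^Y$ with the following events.
\begin{itemize}
\item[(a)] $\norm{Z}^2\le4\sigma^2n$.
\item[(b)] $\abs{\BRAi{X^n,Z}}\le\epsilon n$.
\item[(c)] $\norm{P_VX^n}\le\epsilon\sqrt n$. For a uniform shell vector this holds except with probability $e^{-\Omega(n)}$, since $\norm{P_VX^n}^2$ concentrates at $Pd=O(1)$.
\item[(d)] $\norm{P_{W^\perp}Z}\ge(\sigma/2)\sqrt n$, where $W=\mathrm{span}(V\cup\{X^n\})$; this is again a $\chi^2_{n-d-1}$ lower tail, conditionally on $X^n$.
\end{itemize}

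On this intersection, for every $\theta$, the numerator satisfies
$\BRAi{x,v_\theta(y)}=hnP+\BRAi{x,Z}+\BRAi{P_Vx,S(\theta_*)-S(\theta)}$. It therefore lies within $\epsilon n+2\epsilon S_{\max}n$ of $hnP$, and it is at least $h_{\min}nP/2$ once $\epsilon$ is chosen small relative to $h_{\min}P/(1+2S_{\max})$. The denominator $\sqrt{nP}\,\norm{v_\theta(y)}$ is at most $\sqrt{nP}\,\sqrt n\,(h_{\max}\sqrt P+2\sigma_{\max}+2S_{\max})$, by the triangle inequality together with (a). These two bounds give a positive $t_-$.

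For $t_+<1$, the component of $v_\theta(y)$ orthogonal to $x$ contains $P_{W^\perp}Z$, because $S(\theta_*)-S(\theta)\in V$. So (d) gives $1-t_\theta^2\ge\norm{P_{W^\perp}Z}^2/\norm{v_\theta}^2\ge(\sigma_{\min}^2/4)/(h_{\max}\sqrt P+2\sigma_{\max}+2S_{\max})^2$, and this yields $t_+<1$ uniformly.

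\emph{Main obstacle.} The delicate point is the uniformity. It requires the simultaneous-in-$\theta$ control, which is why every estimate is routed through the projections onto $V$ and $W$ rather than through per-$\theta$ concentration. It also requires all rates to be uniform over the SNR-constrained parameter box. The strict upper bound $t_+<1$ is the step most easily missed; event (d) is what supplies it. The event probabilities and the choice of $\epsilon$ are routine Chernoff and $\chi^2$ computations. The SNR bound and compactness enter only through the constants.
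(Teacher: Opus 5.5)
Your proof is correct. It differs from the paper's at one real point: the lower band.

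\textbf{Lower band.} The paper controls the interference cross-term $\langle X^n,S(\theta_*)-S(\theta)\rangle$ in four steps:
\begin{itemize}
\item it builds a constant-mesh $\eta$-net with $\eta=\zeta/L_t$;
\item it bounds the term at each net point by the angular-correlation tail from the closed form of $F$;
\item it takes a union bound over the net points;
\item it invokes \cref{lem:awgn-cosine} to extend the band from the net to all of $\Theta$.
\end{itemize}
That is why the paper's constants depend on $K$ and $D$, and why it must establish (H2) before the extension. You instead note that $S(\theta_*)-S(\theta)\in V$, so $\langle x,S(\theta_*)-S(\theta)\rangle=\langle P_Vx,S(\theta_*)-S(\theta)\rangle$. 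A single $\theta$-free event $\norm{P_VX^n}\le\epsilon\sqrt n$ then controls every $\theta$ at once by Cauchy--Schwarz. This event is exponentially likely: write $X^n=\sqrt{nP}\,G/\norm{G}$ and combine a $\chi^2_{d}$ upper tail with a $\chi^2_n$ lower tail.

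\textbf{Comparison.} Your step is the paper's own projection idea, used there for (H2) and the upper band, pushed one step further. It needs no net, no cosine-Lipschitz lemma, and no ordering between (H2) and (H3), and your constants do not depend on $K$ or $D$. The paper's net route buys uniformity of method. It is the template of \cref{rem:awgn-typical}, and it carries over to the E\&NN family (\cref{app:isi-typical}). There the noise cross-term $\langle X^n,G_hZ^n\rangle$ varies with $h$ outside any fixed low-dimensional subspace, so your projection trick has no analogue.

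Elsewhere you match the paper. Anderson's inequality in part (i) is a standard substitute for the paper's self-proved \cref{lem:awgn-noncentral-dom}. Your upper band is the paper's projection onto $(V+\mathrm{span}\{x\})^\perp$.

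\textbf{One point to make explicit.} (H3) asks for a single set $\cE_n\subset\cX^n\times\cE_n^Y$ that serves every channel in the family, since the decoder $\bar m^n_\cE$ uses it. Your events (a)--(d) are phrased through $Z^n=Y^n-hX^n-S(\theta_*)$, so they depend on the true $(h,\sigma,\theta_*)$ and are not such a set. Define $\cE_n$ intrinsically, as the paper does: take all $(x,y)$ with $\norm{x}^2=nP$, $y\in\cE_n^Y$, and $t_\theta(x,y)\in[t_-,t_+]$ for all $\theta\in\Theta$. Your computation then shows that, for each true parameter, the intersection of your events lies inside $\{(X^n,Y^n)\in\cE_n\}$. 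That containment is exactly the required probability bound.
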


\begin{IEEEproof}[Proof sketch]
(i) \emph{Projection.} $S(\theta)\in V$ for every $\theta$, so
$P_{V^\perp}v_\theta(y)=P_{V^\perp}y$ and
$\norm{v_\theta(y)}\ge\norm{P_{V^\perp}y}$ \emph{simultaneously for all}
$\theta$: one event serves the whole family. Under the channel,
$P_{V^\perp}Y^n$ is, conditionally on $X^n$, noncentral Gaussian on
$V^\perp$; its squared norm stochastically dominates $\sigma^2\chi^2_{n-d}$
and the $\chi^2$ lower tail gives $\PR{Y^n\notin\cE_n^Y}\le e^{-(n-d)/12}$.
(ii) The upper band mirrors (i): projecting onto
$W_x\triangleq(V+\mathrm{span}\{x\})^\perp$ annihilates codeword and
interference simultaneously for all $\theta$, leaving pure noise, so
$1-t_\theta^2\ge\norm{P_{W_x}Z^n}^2/\norm{v_\theta(Y^n)}^2$ is bounded below
by a constant on a single $\chi^2$ event. The lower band uses the net: on a
constant-mesh net ($O(1)$ points) the score $\BRAi{X^n,v_{\theta_i}(Y^n)}$
concentrates around $nhP$ per point; a union bound and
\cref{lem:awgn-cosine} --- available since (H2) is already established ---
extend the band to all of $\Theta$. The full proof is in
Appendix~\ref{app:interf-typical}.
\end{IEEEproof}

By \cref{thm:awgn-glrt} the continuous GLRT, in normalized (cosine) form,
\begin{equation}\label{eq:awgn-glrt-interference}
  \hat\imath(y)\;=\;\argmax_i\,\sup_{\theta\in\Theta}\,
  \frac{\BRAi{x_i,\,y-S(\theta)}}{\norm{y-S(\theta)}}
\end{equation}
attains the matched-ML random-coding exponent, up to the $\delta$-cap, for
every $(h,\sigma,\theta)$ satisfying~\eqref{eq:awgn-snr-bound}. Two caveats
locate the claim precisely. First, the guarantee is the min-capped one of
\cref{thm:awgn-glrt}: the full matched-ML exponent is attained whenever the
cap $\min\{c-R,\delta\}$ is not binding. Second, the certified decoder is the
typical-set-restricted form $\bar m^n_\cE$ of~\eqref{eq:awgn-barm-cE}, which
assigns the same score as the bare rule~\eqref{eq:awgn-glrt-interference} to
every channel-typical pair; whether the bare rule itself attains the same
exponent is the open point of \cref{rem:awgn-bare}.

\paragraph{Which rule is this?}
For each fixed $\theta$ the normalizer $\norm{y-S(\theta)}$ is common to all
codewords, so the fixed-$\theta$ rule is the usual correlation decoder. The
joint supremum over $\theta$ requires the normalized (cosine) form: the
unnormalized $\argmax_i\sup_\theta\BRAi{x_i,y-S(\theta)}$ is \emph{not}
rank-equivalent, because $\norm{y-S(\theta)}$ depends on $\theta$. Nor is the
joint rule the profiled-likelihood GLRT: maximizing the likelihood over
$(h,\sigma)$ leaves a score whose $\theta$-dependent factor survives the joint
supremum. The cosine form is the \emph{continuum PEP-merge} of
\cref{prop:uc-continuum-merge}: it coincides with the GLR ranking at each
fixed $\theta$ on the relevant range $t\ge0$, but across $\theta$ it commits
to the PEP scale --- the canonical one --- rather than the raw likelihood
scale. Raw-likelihood merging can fail outright:
\cite[Sec.~3]{lapidoth1998universality} exhibit a two-member family --- a pair
of artificial, blocklength-indexed noise laws, so not itself finite-state ---
for which the GLRT is not universal, while the two-element merge of the
matched decoders is.

\begin{remark}[Comparison with Merhav]
\label{rem:awgn-merhav}
Merhav~\cite{merhav1993universal} treats interference admitting an expansion
in a fixed system of bounded orthonormal functions with absolutely summable
coefficients, and decodes by a continuous analogue of MMI: a GLRT over
auxiliary backward channels of slowly growing order $k_n=O(n^{1/3})$, analyzed
by partitioning competitors into subexponentially many conditional
$\epsilon$-types over a quantized grid. In effect this is a subexponential
covering
of his interference class, at richer generality than the fixed-dimensional
family here (growing order versus fixed $d$). We expect the merge route to
reach his class as well --- absolute summability lets one truncate the
expansion and grid the surviving coefficients --- but we have not carried out
the (H2)--(H3) verification for the truncated family; we record the extension
as an expectation, not a claim. His argument is likewise restricted to a
high-probability set, with the exceptional probability calibrated per rate to
out-decay the target exponent; the residual exponent loss vanishes with the
shell thickness of his input ensemble rather than through a fixed typicality
rate, which is why his statement carries no $\min\{\cdot,\delta\}$ cap.
Neither route covers arbitrary bounded-energy interference.
\end{remark}

\subsection{Application: ISI via the equalize-and-decode family}
\label{sec:awgn-isi}

For the Gaussian ISI channel
\begin{equation}\label{eq:awgn-isi-channel}
  Y_i\;=\;\sum_{\ell=0}^{L-1}h_\ell\,X_{i-\ell}+Z_i,
  \qquad Z_i\sim\cN(0,\sigma^2),
\end{equation}
with unknown tap vector $h\in\cH\subset\mR^L$ compact and Toeplitz operator
$H_h$ ($Y=H_hX+Z$), matched-ML carries the quadratic $\norm{H_h x}^2$ term and
sits outside the linear-in-$x$ class. We therefore work inside the
equalize-and-decode (E\&NN) family: fix an equalizer map $h\mapsto G_h$ and use
\begin{equation}\label{eq:awgn-isi-metric}
  m_h(x,y)\;=\;\BRAi{x,\,G_h y}.
\end{equation}
Standard choices are matched filter $G_h=H_h^\top$, zero-forcing
$G_h=H_h^\dagger$, and MMSE $G_h=(H_h^\top H_h+(\sigma^2/P)I)^{-1}H_h^\top$.
For the matched filter $G_h$ is linear in $h$ and
$\norm{H_h^\top y-H_{h'}^\top y}\le\sqrt L\,\norm{y}\,\norm{h-h'}$, which is
$O(\sqrt n)\norm{h-h'}$ \emph{on the typical set} $\norm{y}=O(\sqrt n)$ ---
this is exactly the typical-$y$ form of (H1) in \cref{lem:awgn-cosine}; for
zero-forcing and MMSE the same holds under a uniform spectral lower bound on
$\cH$ that excludes taps near a channel null. Unlike the interference family,
the projection trick is unavailable here --- $h$ multiplies $y$ instead of
shifting it by a vector in a fixed subspace --- and the (H2)--(H3)
verification requires genuine restrictions on the family, stated explicitly.

\begin{lemma}[(H2)--(H3) for the E\&NN family, uniformly over $\cH$]
\label{lem:awgn-isi}
Let $\cH\subset\mR^L$ be compact with diameter $D_\cH$, let $s_{\min}(\cdot)$
denote the smallest singular value, and assume:
\begin{enumerate}
\item[(a)] \emph{Spectral band.} There are constants $0<\gamma_G\le\Gamma_G$
  and $\Gamma_H<\infty$ with $s_{\min}(G_h)\ge\gamma_G$,
  $\norm{G_h}_{\mathrm{op}}\le\Gamma_G$, and
  $\norm{H_h}_{\mathrm{op}}\le\Gamma_H$ for all $h\in\cH$.
\item[(b)] \emph{(H1) on the typical set.}
  $\norm{G_hy-G_{h'}y}\le K\sqrt n\,\norm{h-h'}$ for all $h,h'\in\cH$ and all
  $y\in\cE_n^Y$ of part (i) below; on $\cE_n^Y$, $\norm{y}\le C_Y\sqrt n$, so
  the operator-Lipschitz bounds above deliver such a $K=O(1)$.
\item[(c)] \emph{Uniform equalizer--channel alignment.}
  \[
    \kappa\;\triangleq\;\inf_{h,h'\in\cH}\;
    \lambda_{\min}\!\BRA{\tfrac12\bigl(G_hH_{h'}+(G_hH_{h'})^\top\bigr)}
    \;>\;0.
  \]
\end{enumerate}
Then there are constants $0<t_-<t_+<1$, $\delta>0$ and $n_0$, depending only
on $(P,\allowbreak L,\allowbreak\gamma_G,\allowbreak\Gamma_G,\allowbreak
\Gamma_H,\allowbreak\sigma_{\min},\allowbreak\sigma_{\max},\allowbreak
\kappa,\allowbreak D_\cH,\allowbreak K)$, such that for all $n\ge n_0$: (i) (H2) holds with
$c_0=\gamma_G\sigma_{\min}/\sqrt2$ and the $h$-free event
$\cE_n^Y\triangleq\{y:(\sigma_{\min}/\sqrt2)\sqrt n\le\norm{y}\le C_Y\sqrt n\}$,
$C_Y\triangleq\Gamma_H\sqrt P+\sqrt2\,\sigma_{\max}$; (ii) (H3) holds with the
band $[t_-,t_+]$ and rate $\delta$, uniformly over the true $h_*\in\cH$ and
$\sigma\in[\sigma_{\min},\sigma_{\max}]$.
\end{lemma}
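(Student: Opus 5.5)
The plan is to verify (i) directly from the spectral band (a) and the $\chi^2$ concentration of the noise. Then (ii) follows by the order prescribed in \cref{rem:awgn-typical}: (H2) first, then a constant-mesh net over $\cH$ extended to all of $\cH$ by \cref{lem:awgn-cosine}.

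\emph{Step 1: the event and (H2).} On $\cE_n^Y$ we have, for every $h$ simultaneously,
\[
\norm{G_hy}\ge s_{\min}(G_h)\norm{y}\ge\gamma_G(\sigma_{\min}/\sqrt2)\sqrt n,
\]
which is (H2) with the stated $c_0$; no $h$-dependent event is needed. For the probability, condition on $X^n=x$.
\begin{itemize}
\item Lower side: $\norm{H_{h_*}x+Z}^2$ is $\sigma^2$ times a noncentral $\chi^2_n$, so it stochastically dominates $\sigma^2\chi^2_n$. The lower $\chi^2$ tail then gives $\PR{\norm{Y}^2<\sigma_{\min}^2n/2}\le e^{-c_1n}$.
\item Upper side: $\norm{Y}\le\norm{H_{h_*}x}+\norm{Z}\le\Gamma_H\sqrt{nP}+\norm{Z}$. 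The upper $\chi^2$ tail gives $\norm{Z}\le\sqrt2\,\sigma_{\max}\sqrt n$ off an event of probability $e^{-c_2n}$, which matches $C_Y$.
\end{itemize}
Both rates depend only on the listed constants.

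\emph{Step 2: the lower band $t_-$.} Write
\[
\BRAi{x,G_hY}=\BRAi{x,G_hH_{h_*}x}+\BRAi{G_h^\top x,Z}.
\]
By (c), the first term is at least $\kappa nP$. For fixed $h$, the second term is Gaussian with variance at most $\sigma_{\max}^2\Gamma_G^2nP$, so it is at least $-\kappa nP/2$ off an event of probability $e^{-c_3n}$. On $\cE_n^Y$ we also have $\norm{G_hY}\le\Gamma_GC_Y\sqrt n$. Hence, at a fixed $h$,
\[
t_h\ge 2t_-\triangleq\kappa\sqrt P/(2\Gamma_GC_Y).
\]
Next take an $\eta$-net of $\cH$ with $L_t\eta\le t_-$ and $L_t=2K/c_0$. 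This is legitimate because (H2) and (b) are in force on $\cE_n^Y$. The net has $O((D_\cH\sqrt L/\eta)^L)=O(1)$ points. A union bound over the net, followed by \cref{lem:awgn-cosine}, gives $t_h\ge t_-$ for all $h\in\cH$ at a constant rate.

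\emph{Step 3: the upper band $t_+$.} This is the main obstacle, because the projection trick of \cref{lem:awgn-interf} is unavailable. I would show directly that $1-t_h^2$ is bounded below. We have
\[
1-t_h^2=\norm{P_{x^\perp}G_hY}^2/\norm{G_hY}^2,
\]
and the denominator is at most $\Gamma_G^2C_Y^2n$. For the numerator, $P_{x^\perp}G_hY=P_{x^\perp}G_hH_{h_*}x+P_{x^\perp}G_hZ$. Conditionally on $x$ this is Gaussian on the $(n-1)$-dimensional space $x^\perp$, with covariance at least $\sigma_{\min}^2\gamma_G^2$ times the identity there, since $G_hG_h^\top\succeq\gamma_G^2I$. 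Its squared norm therefore dominates $\sigma_{\min}^2\gamma_G^2\chi^2_{n-1}$, and it is at least $\sigma_{\min}^2\gamma_G^2n/2$ off an $e^{-c_4n}$ event, for each fixed $h$. This yields $1-t_h^2\ge 4(1-t_+)$ for a suitable constant $t_+<1$.

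To make this uniform over $h$, I again pass through the net: apply the union bound over its $O(1)$ points, then extend via \cref{lem:awgn-cosine} with the mesh also satisfying $L_t\eta\le(1-t_+)/2$. Both operations keep the rate constant. The delicate point is keeping the mesh and the band margins $n$-free, which holds because $L_t=O(1)$.

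\emph{Step 4: assembly.} Let $\cE_n$ be the intersection of the events from Steps 1--3. Then $\delta$ is the minimum of $c_1,\dots,c_4$, less $\tfrac1n\log$ of the net size, which is absorbed for $n\ge n_0$. Every constant depends only on the listed parameters, so the bound holds uniformly over $h_*\in\cH$ and $\sigma\in[\sigma_{\min},\sigma_{\max}]$.
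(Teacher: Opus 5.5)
Your proposal is correct and follows the paper's proof essentially step for step. The shared ingredients are: (H2) from the spectral band $s_{\min}(G_h)\ge\gamma_G$ on the $h$-free norm event; the lower band from the deterministic alignment bound $\BRAi{x,G_hH_{h_*}x}\ge\kappa nP$ plus a per-net-point Gaussian tail on the noise cross-term; the upper band from the conditional covariance $\succeq\sigma^2\gamma_G^2 I$ of $P_{x^\perp}G_hY$ on $x^\perp$ together with noncentral-$\chi^2$ domination; and a constant-mesh net extended to all of $\cH$ by the cosine-Lipschitz lemma once (H2) is in place.

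The one point to state more carefully is Step~4. The restricted decoder uses $\cE_n$, so $\cE_n$ must not depend on the true channel. Take it to be the channel-free set $\{(x,y):\norm{x}^2=nP,\ y\in\cE_n^Y,\ t_h(x,y)\in[t_-,t_+]\ \forall h\in\cH\}$, as the paper does. Your $Z$-dependent concentration events depend on $(h_*,\sigma)$, so use them only to bound the probability of that set.
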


\begin{IEEEproof}[Proof sketch]
(i) By (a), $\norm{G_hy}\ge s_{\min}(G_h)\norm{y}\ge\gamma_G\norm{y}$ for
every $h$ at once, so the $h$-free norm event carries (H2); the norm event's
probability follows from noncentral-Gaussian domination of $\sigma^2\chi^2_n$
(\cref{lem:awgn-noncentral-dom}) and a $\chi^2$ upper tail. (ii) Lower band:
by (c) the signal part of the score is bounded \emph{deterministically} on
the shell, $\BRAi{x,G_hH_{h_*}x}\ge\kappa nP$ for every pair $(h,h_*)$, so
only the noise cross-term needs a constant-mesh net and a union bound; upper
band: per net point, the component of $G_hY^n$ orthogonal to $X^n$ is
conditionally Gaussian with covariance $\succeq\gamma_G^2\sigma^2I$ on an
$(n-1)$-dimensional space. In both bands \cref{lem:awgn-cosine} --- valid
since (H2) is established first --- extends from the net to all of $\cH$.
The full verification parallels the interference proof of
Appendix~\ref{app:interf-typical} step for step --- concentration events per
net point, deterministic signal bound, projection for the upper band, net
extension --- and is given in Appendix~\ref{app:isi-typical}.
\end{IEEEproof}

\begin{remark}[What (a)--(c) do and do not allow]
\label{rem:awgn-isi-scope}
Condition (c) holds automatically on the diagonal $h'=h$ under (a): for the
matched filter $\tfrac12(G_hH_h+(G_hH_h)^\top)=H_h^\top H_h
\succeq\gamma_G^2I$; for zero-forcing $G_hH_h=I$; for MMSE the eigenvalues are
$s^2/(s^2+\sigma^2/P)$ over the singular values $s$ of $H_h$, bounded below when
$H_h$ has no spectral null. By continuity, (c) therefore holds for families
$\cH$ of sufficiently small diameter around a nominal filter. It is, however,
a genuine restriction for large-uncertainty families: for strongly mismatched
pairs the cosine can concentrate at a non-positive value --- for the matched
filter, $\BRAi{X^n,H_h^\top H_{h_*}X^n}$ has mean approximately
$nP\BRAi{h,h_*}$, non-positive for anti-correlated tap vectors --- and then no
lower band $t_->0$ exists and (H3) fails as stated. Likewise, (a) excludes
spectral nulls for all the equalizers above. \emph{Outside (a)--(c) we make no
universality claim for the E\&NN family.}
\end{remark}

By \cref{thm:awgn-glrt}, for each fixed equalizer map satisfying (a)--(c) the
continuous GLRT, in normalized form,
\begin{equation}\label{eq:awgn-glrt-isi}
  \hat\imath(y)\;=\;\argmax_i\,\sup_{h\in\cH}\,
  \frac{\BRAi{x_i,\,G_h y}}{\norm{G_hy}}
\end{equation}
attains, for every true $h_*\in\cH$, the best-in-class (E\&NN) random-coding
exponent
\begin{equation*}
  \sup_{h'\in\cH}\Erc(R;\bbW_{h_*},\bbQ_X,m_{h'})
\end{equation*}
subject to the $\delta$-cap of that theorem, with merge penalty
$O(\log n/n)$. The best-in-class exponent dominates the exponent of every
fixed equalizer index, not only the member tuned to the true taps; it is in
general strictly below the matched-ML exponent of $-\norm{y-H_h x}^2$.

\begin{remark}[Matched-ML for ISI: out of scope]
\label{rem:awgn-isi-open}
The decoder~\eqref{eq:awgn-glrt-isi} is universal within a fixed E\&NN family
only. Matched-ML universality for ISI requires the quadratic term
$\norm{H_h x}^2$, which leaves the linear-in-$x$
class~\eqref{eq:awgn-linear-metric}: on the power shell this term is not constant,
so the PEP decision region is an ellipsoid rather than a spherical cap and the
reduction of the PEP to the single cosine~\eqref{eq:awgn-cosine} no longer holds.
Matched-ML universality for the ISI channel --- a strictly stronger statement
requiring a different technical route than the cosine chain of this section ---
is \emph{not addressed by our approach}; we make no matched-ML claim for ISI
here. For the matched-ML \emph{exponent} on ISI, see Huleihel and
Merhav~\cite{huleihel2015universal}, who construct a data-driven equalizer
(tuned to the channel output, not to a fixed parameter) achieving the
matched-ML random-coding exponent; an earlier types-route construction is
Farkas~\cite{farkas2008blind}, who decodes by maximizing the mutual information
of a grid-quantized least-squares channel fit. For the matched-ML
\emph{capacity} (rate level only) on the cyclic-convolution model with unknown
filter, see Lapidoth and Telatar~\cite{lapidoth2000gaussian}, who show that
the bare GLRT $\argmin_{i,h}\norm{y-H_h x_i}^2$ attains any rate below the
matched-ML capacity. All three use techniques distinct from ours; we do not
redevelop them.

The merge itself is metric-agnostic (\cref{prop:uc-continuum-merge} consumes
only a moment bound), so the continuum merge over the matched-ML tap family is
already a well-defined decoder; what is missing is only the discretization
step, which for the quadratic metric would require a shell-prior
\emph{level-smoothness} statement in place of the closed form $F(t)$ available
here for the linear-in-$x$ class --- in substance a counterpart of the
conditional-type calculus of~\cite{huleihel2015universal}, which we record as
an open lemma and do not attempt. Note finally that the change-of-measure
route of~\cite{feder1998universal}, which converts likelihood closeness of
nearby channels directly into error-probability closeness for \emph{any}
decoder, side-steps PEP tails entirely; it is available only because the
metric family there \emph{is} the likelihood family, and has no analogue for
general metric classes, where the PEP route applies.
\end{remark}

%% file: sections/07-conclusion.tex
\section{Conclusion}\label{sec:conc}

Universal decoding, in this paper, was not a new theory but a corollary of the
pairwise-error-probability framework of~\cite{elkayampep1}. Three points
summarize the development.

\emph{The PEP is metric-agnostic, so a family of metrics is a family of spectra.}
Because the PEP and its error spectrum are defined for an arbitrary decoding
metric, decoding well against a whole family of metrics or channels is the
problem of comparing --- and merging --- the corresponding spectra, after the
PEP has canonicalized each metric onto the common probability scale
(\cref{sec:framework}).

\emph{A clipped inverse-PEP merge metric is random-coding universal.} The merge
metric of \cref{sec:univ-metric} pools the per-metric pairwise comparisons into a
single decision rule --- in the Kraft reading of \cref{rem:uc-kraft}, the
normalized-maximum-likelihood envelope of the conditional probability
assignments that the metrics induce. Its random-coding exponent matches
that of the best metric
in the family, simultaneously for every channel in the family
(\cref{thm:univ-rc}), and it is an asymptotic minimax/equalizer rule
(\cref{thm:univ-minimax}) whose regret is the envelope's log-mass $\gamma_n$. The guarantee derandomizes by expurgation to a
single deterministic code over any subexponential channel family
(\cref{thm:uc-rc-to-det}), including for the erasure decoder
(\cref{thm:uc-det-erasure}).

\emph{Classical universal decoders are special cases.} Reduced to the method of
types, the merge construction recovers the maximum-mutual-information decoder,
its tilted variant, and finite-state universal decoders (\cref{sec:dmc}); it
extends to a decoder with an erasure option (\cref{sec:erasure}); and, through a
discretization of separable metric families, it reaches continuous-alphabet
channels --- matched-ML universality up to a typical-set cap for AWGN with
deterministic interference, and, for ISI, the best exponent in a family of
equalize-and-decode rules under explicit assumptions on the equalizers and the
channel spectrum, under the same cap (\cref{sec:awgn}).

Three directions are left open:
\begin{itemize}
\item \emph{Beyond finite/separable families.} The merge construction is
quantified by the size (or covering number) of the family; characterizing the
exact universality penalty for richer families, and the sharpest discretization,
remains open.
\item \emph{Universality and prior optimization together.} The companion
paper~\cite{elkayampep1} made the prior-optimized converse a linear program;
combining prior optimization with the merge metric --- a universal \emph{and}
prior-optimal decoder at finite blocklength --- is not yet understood.
\item \emph{The operational price of universality.} The minimax theorem of
\cref{sec:univ-metric} pins the value of the fixed-output regret game, not
the channel-averaged cost of universality (\cref{rem:uc-minimax-scale}). The
operational analogue of the redundancy--capacity theorem of universal
coding and prediction~\cite{merhav1995strong,merhav1998universal} --- the
exact $\inf_u\sup_{\bbW\in\Cset}$ gap between achieved
and benchmark performance, expected at scale $(\log n)/n$ and bounded above
by the game value $\gamma_n$ --- is open even for a two-channel family.
\end{itemize}

%% file: sections/08-appendix.tex

\appendices

\input{sections/appendix-parts/01-recap}
\input{sections/appendix-parts/02-framework}
\input{sections/appendix-parts/03-universal-metric}
\input{sections/appendix-parts/04-dmc}
\input{sections/appendix-parts/05-erasure}
\input{sections/appendix-parts/06-awgn}

%% file: sections/appendix-parts/01-recap.tex

\section{Proof of the Exponent Reading (\Cref{prop:exponent})}
\label{app:exponent-reading}

\begin{IEEEproof}
Write $p=\PEP{X^n}{Y^n}$. For the lower bound, on the event
$\{-\tfrac1n\log p\le R+\rho\}$, of probability $\Fspec{Q_X^n}{R+\rho}$, the
integrand is at least $\min\{1,e^{-n\rho}\}=e^{-n\rho}$. For the upper bound,
slice: on $\{-\tfrac1n\log p\le R\}$ the integrand is at most $1$ and the
probability is $\Fspec{Q_X^n}{R}$ (the $\rho'=0$ term); on each of the $n^2$
slices $\{-\tfrac1n\log p\in(R+\tfrac kn,\,R+\tfrac{k+1}{n}]\}$,
$k=0,\dots,n^2-1$, the integrand is at most
$e^{-k}=e\cdot e^{-n\frac{k+1}{n}}$ and the probability is at most
$\Fspec{Q_X^n}{R+\tfrac{k+1}{n}}$, so each slice contributes at most
$e\sup_{\rho'\ge0}e^{-n\rho'}\Fspec{Q_X^n}{R+\rho'}$; on the remainder,
$p\le e^{-n(R+n)}$ and the integrand is at most $e^{-n^2}$. The exponent
display follows by optimizing the lower bound over $\rho$ and taking
$-\tfrac1n\log$ of \eqref{eq:recap-sandwich}, noting
$\sup_{\rho'\ge0}e^{-n\rho'}\Fspec{Q_X^n}{R+\rho'}
=e^{-n\inf_{\rho}\{E_1(R+\rho)+\rho\}}$: the condition
$\inf_{\rho}\{E_1(R+\rho)+\rho\}\le n$ makes the $e^{-n^2}$ term
subdominant, and the prefactor contributes the $O(\tfrac{\log n}{n})$.
\end{IEEEproof}

%% file: sections/appendix-parts/02-framework.tex

\section{The Gaussian Shift Bound for the Interference Family}
\label{app:rem-interf-shift}

This appendix verifies the law-closeness hypothesis of \cref{lem:uc-transfer}
for the interference family of \cref{sec:awgn-interference}, as used in the
continuum-upgrade argument of \cref{rem:uc-cd-to-c}. The setting: for
$\theta$ in a compact $\Theta\subset\mR^d$ ($d$ fixed), the channel
$W_\theta^n(\cdot\mid x^n)$ is the law of $hx^n+S(\theta)+Z^n$ with
$Z^n\sim\cN(0,\sigma^2 I_n)$, structured interference
$S(\theta)=\sum_{j=1}^d\theta_j\psi_j$ built from fixed signals with
$\norm{\psi_j}^2=O(n)$, and gain--noise pair $h\in[h_{\min},h_{\max}]$,
$\sigma\in[\sigma_{\min},\sigma_{\max}]$, $\sigma_{\min}>0$. In the transfer
application only $\theta$ is discretized; the representative channel shares
$(h,\sigma)$ with the true one. By Cauchy--Schwarz on the basis expansion,
\begin{equation}\label{eq:app-shift-lip}
  \norm{S(\theta)-S(\theta')}\;\le\;K\sqrt n\,\norm{\theta-\theta'}
  \quad\text{for all }\theta,\theta'\in\Theta,
  \qquad
  K\;\triangleq\;\sup_{n}\Bigl(\tfrac1n\textstyle\sum_{j=1}^d
  \norm{\psi_j}^2\Bigr)^{1/2}\!<\infty .
\end{equation}

\begin{lemma}[Gaussian shift bound]\label{app:lem-interf-shift}
  Fix $n\ge1$, $x^n\in\mR^n$, a shell constant $r_0>\sigma_{\max}$, and
  $\theta,\theta_d\in\Theta$ with $\norm{\theta-\theta_d}\le1/n$. Let
  \[
    B_{x^n}\;\triangleq\;\BRAs{y^n\in\mR^n:\;
    \norm{y^n-hx^n-S(\theta)}>r_0\sqrt n}
  \]
  be the shell-escape set of the $\theta$-law. Then:
  \begin{enumerate}
  \item[(i)] for every $y^n\notin B_{x^n}$,
    \[
      \abs{\log W_\theta^n(y^n\mid x^n)-\log W_{\theta_d}^n(y^n\mid x^n)}
      \;\le\;\frac{2r_0K+K^2}{2\sigma_{\min}^2}\;\triangleq\;a\;=\;O(1);
    \]
  \item[(ii)] $\displaystyle W_\theta^n(B_{x^n}\mid x^n)\;\le\;e^{-nE'(r_0)}$,
    where
    $E'(r_0)\triangleq\tfrac12\bigl(r_0^2/\sigma_{\max}^2-1
    -2\log(r_0/\sigma_{\max})\bigr)>0$, and $E'(r_0)\to\infty$ as $r_0\to\infty$,
    uniformly over the family.
  \end{enumerate}
  Both bounds also hold with the roles of $\theta$ and $\theta_d$ exchanged,
  with $B_{x^n}$ replaced by the shell-escape set of the $\theta_d$-law.
\end{lemma}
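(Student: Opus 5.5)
The plan is to prove both parts directly from the explicit Gaussian densities. No earlier result is needed beyond the Lipschitz bound \eqref{eq:app-shift-lip} on the interference map. Throughout, the two laws share $(h,\sigma)$, so the normalizing constants $(2\pi\sigma^2)^{-n/2}$ cancel in the log-likelihood difference.

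\emph{Part (i).} Set $w\triangleq y^n-hx^n-S(\theta)$ and $\Delta\triangleq S(\theta)-S(\theta_d)$, so that $y^n-hx^n-S(\theta_d)=w+\Delta$. Then
\[
  \log W_\theta^n(y^n\mid x^n)-\log W_{\theta_d}^n(y^n\mid x^n)
  =\frac{\norm{w+\Delta}^2-\norm{w}^2}{2\sigma^2}
  =\frac{2\BRAi{w,\Delta}+\norm{\Delta}^2}{2\sigma^2}.
\]
By \eqref{eq:app-shift-lip} and $\norm{\theta-\theta_d}\le1/n$ we have $\norm{\Delta}\le K/\sqrt n$. Off $B_{x^n}$ we have $\norm{w}\le r_0\sqrt n$, so Cauchy--Schwarz gives $\abs{2\BRAi{w,\Delta}}\le 2r_0K$. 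Also $\norm{\Delta}^2\le K^2/n\le K^2$. Using $\sigma\ge\sigma_{\min}$ then yields the constant $a=(2r_0K+K^2)/(2\sigma_{\min}^2)$. The two $\sqrt n$ factors cancel, which is exactly why a $1/n$-net is the right mesh.

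\emph{Part (ii).} Under $W_\theta^n(\cdot\mid x^n)$ the vector $w$ is exactly $Z^n$, so $\norm{w}^2/\sigma^2\sim\chi^2_n$ regardless of $x^n$. Hence
\[
  W_\theta^n(B_{x^n}\mid x^n)=\PR{\chi^2_n>ns},\qquad s\triangleq r_0^2/\sigma^2 .
\]
The standard Chernoff bound gives $\PR{\chi^2_n>ns}\le\exp\bigl(-\tfrac n2(s-1-\log s)\bigr)$ for $s>1$. The function $s\mapsto s-1-\log s$ is increasing on $(1,\infty)$. Since $\sigma\le\sigma_{\max}<r_0$, we have $s\ge r_0^2/\sigma_{\max}^2>1$, and so the bound is at most $e^{-nE'(r_0)}$ with $E'(r_0)$ as stated. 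Positivity of $E'(r_0)$ follows from $s-1-\log s>0$ for $s\ne1$. Divergence as $r_0\to\infty$ is immediate, since the quadratic term dominates the logarithm. The bound depends on the family only through $\sigma_{\max}$, which gives the claimed uniformity.

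\emph{Exchanged roles.} The argument is symmetric. Replace $w$ by $y^n-hx^n-S(\theta_d)$ and $\Delta$ by $-\Delta$, which has the same norm bound. The shell-escape set is now that of the $\theta_d$-law, under which the centered vector is again exactly $Z^n$. Both (i) and (ii) then go through verbatim.

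I expect no step to be a real obstacle. The one point needing care is the Chernoff step: one must invoke monotonicity of $s-1-\log s$ to replace the true $\sigma$ by $\sigma_{\max}$, which is where the hypothesis $r_0>\sigma_{\max}$ enters.
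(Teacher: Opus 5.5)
Your proof is correct and follows the paper's argument essentially step for step. You cancel the Gaussian normalizations to get the gap $(2\langle w,\Delta\rangle+\norm{\Delta}^2)/(2\sigma^2)$ and bound it by Cauchy--Schwarz with $\norm{\Delta}\le K/\sqrt n$ and $\norm{w}\le r_0\sqrt n$; for (ii) you apply the $\chi^2_n$ Chernoff bound at $r_0^2/\sigma^2\ge r_0^2/\sigma_{\max}^2>1$ via monotonicity of $s-1-\log s$, and you handle the exchanged orientation by flipping the sign of $\Delta$, exactly as the paper does.
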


\begin{IEEEproof}
  The Gaussian normalizations cancel, leaving the log-likelihood gap
  $\bigl(2\BRAi{r,\Delta}+\norm{\Delta}^2\bigr)/(2\sigma^2)$ with
  $r\triangleq y^n-hx^n-S(\theta)$ and
  $\Delta\triangleq S(\theta)-S(\theta_d)$. The mesh and
  \eqref{eq:app-shift-lip} give $\norm{\Delta}\le K/\sqrt n$, while
  $\norm{r}\le r_0\sqrt n$ off $B_{x^n}$; Cauchy--Schwarz then bounds the gap
  by $(2r_0K+K^2/n)/(2\sigma^2)\le a$, proving (i): the $1/n$-mesh shrinks
  the shift to $O(1/\sqrt n)$ against an $O(\sqrt n)$ on-shell residual. For
  (ii), under $W_\theta^n(\cdot\mid x^n)$ the residual is exactly the noise,
  so $\norm{r}^2/\sigma^2\sim\chi^2_n$, and the Chernoff bound
  $\PR{\chi^2_n\ge nu}\le e^{-\frac n2(u-1-\log u)}$ at
  $u=r_0^2/\sigma^2\ge r_0^2/\sigma_{\max}^2>1$ gives the exponent
  $E'(r_0)$, since $u\mapsto u-1-\log u$ is increasing on $[1,\infty)$. Exchanging $\theta\leftrightarrow\theta_d$ flips the sign of
  $\Delta$ and leaves both bounds unchanged; the exchanged residual is again
  exactly $Z^n$ under the $\theta_d$-law.
\end{IEEEproof}

Consequently the hypothesis of \cref{lem:uc-transfer} holds per codeword
with $W^n=W_\theta^n$, $W'^n=W_{\theta_d}^n$: for fixed $\delta>0$ and
$n\ge a/\delta$, $W_\theta^n(y^n\mid x^n)\le e^{n\delta}\,W_{\theta_d}^n
(y^n\mid x^n)$ off $B_{x^n}$, with $E'(r_0)$ made to exceed any target
exponent by the choice of $r_0$; the exchanged orientation --- needed for
the second application in \cref{rem:uc-cd-to-c} --- holds with its own rare
set by the symmetry of the lemma.

%% file: sections/appendix-parts/03-universal-metric.tex

\section{Proof of the Averaged-Regret Lemma}
\label{app:avg-regret}

We prove \cref{lem:uc-avg-regret}. Setting: $y\in\cY^n$ is fixed;
$\Md^n=\{m_1^n,\dots,m_{K_n}^n\}$ is a finite metric family;
$\hat\theta(x,y)=\argmin_{1\le k\le K_n}\PEPm{x}{y}{m_k^n}$ is the
in-hindsight best index (any minimizer; ties broken arbitrarily);
$\cF_n=\{x\in\cX^n:\PEPm{x}{y}{m_{\hat\theta(x,y)}}\ge e^{-n^2}\}$ is the
finite-exponent set of \cref{prop:uc-equalizer}; $\beta\ge0$ is a constant;
and the metric $u$ satisfies
\begin{equation}\label{eq:app-avg-hyp}
  \PEPm{x}{y}{u}\;\le\;e^{-n\beta}\,\PEPm{x}{y}{m_{\hat\theta(x,y)}}
  \qquad\text{for all }x\in\cF_n .
\end{equation}
All expectations are in the fixed-$y^n$/product law: $\tilde X^n\sim Q_X^n$,
drawn independently of $y$. The strategy is to sandwich the
$\cF_n$-restricted averaged tie-as-error PEP of $u$ between a uniformity
lower bound and an upper bound from \eqref{eq:app-avg-hyp}, then solve for
$\beta$.

\begin{IEEEproof}
\emph{Step 1: lower bound by PEP uniformity.} At the fixed $y$,
\cref{prop:pep-uniform} applied to the metric $u$ makes the randomized
(dither-broken) PEP of $u$ uniform on $[0,1]$ under $\tilde X^n\sim Q_X^n$,
so its mean is exactly $\tfrac12$. This holds for \emph{every} metric $u$
--- membership in $\Md^n$ is not required --- and is the uniformity input
the converse needs. The tie-as-error PEP \eqref{eq:uc-pep-tie} dominates the
randomized one pointwise. Restricting to $\cF_n$ via the indicator and using
that the randomized PEP takes values in $[0,1]$ to bound the contribution of
the complement,
\begin{equation}\label{eq:app-avg-lb}
  \Es{\tilde X^n}{\PEPm{\tilde X^n}{y}{u}\,\Ind{\tilde X^n\in\cF_n}}
  \;\ge\;\tfrac12-Q_X^n(\cX^n\setminus\cF_n)
  \;\ge\;\tfrac12-K_ne^{-n^2},
\end{equation}
the last step by \cref{prop:uc-equalizer}
($Q_X^n(\cX^n\setminus\cF_n)\le K_ne^{-n^2}$: uniformity plus a union bound
over the $K_n$ family members).

\emph{Step 2: upper bound from the hypothesis.} On $\cF_n$ we have
$\PEPm{x}{y}{m_{\hat\theta(x,y)}}\le1$, so \eqref{eq:app-avg-hyp} gives
$\PEPm{x}{y}{u}\le e^{-n\beta}$ for every $x\in\cF_n$. Averaging the
indicator-restricted PEP under $\tilde X^n\sim Q_X^n$,
\begin{equation}\label{eq:app-avg-ub}
  \Es{\tilde X^n}{\PEPm{\tilde X^n}{y}{u}\,\Ind{\tilde X^n\in\cF_n}}
  \;\le\;e^{-n\beta}\,Q_X^n(\cF_n)\;\le\;e^{-n\beta}.
\end{equation}

\emph{Step 3: pinch.} Chaining \eqref{eq:app-avg-lb} and
\eqref{eq:app-avg-ub} gives $\tfrac12-K_ne^{-n^2}\le e^{-n\beta}$. Whenever
$\tfrac12-K_ne^{-n^2}>0$ (in particular for all $n$ large under (H-sub),
since then $K_ne^{-n^2}\to0$ super-exponentially), taking $-\tfrac1n\log$
yields
\[
  \beta\;\le\;\frac1n\log\frac{1}{\tfrac12-K_ne^{-n^2}}
  \;=\;\frac{\log2}{n}+\frac1n\log\frac{1}{1-2K_ne^{-n^2}}
  \;=\;\frac{\log2}{n}+o(1/n)\;=\;O(1/n),
\]
which is \eqref{eq:uc-avg-regret}.
\end{IEEEproof}

\section{Proof of the Minimax Theorem}
\label{app:univ-minimax}

We prove \cref{thm:univ-minimax}: the normalizer bracket, achievability, and
the converse (by a two-case dichotomy on $\gamma_n^{\cF_n}$). Setting: $y$
is fixed and $\tilde X^n\sim Q_X^n$ (the fixed-$y^n$/product law);
$\Md^n$ is a finite family with $K_n$ members satisfying (H-sub); $U^n$ is
the clipped merge metric \eqref{eq:uc-U-def}; $\cF_n$ is the finite-exponent
set of \cref{prop:uc-equalizer};
\begin{equation*}
  \gamma_n=\tfrac1n\log\Es{\tilde X^n}{e^{U^n(\tilde X^n,y)}}
  \quad\text{and}\quad
  \gamma_n^{\cF_n}=\tfrac1n\log\Es{\tilde X^n}{e^{U^n(\tilde X^n,y)}\Ind{\tilde X^n\in\cF_n}}
\end{equation*}
are the two normalizers; and $r_{y,n}$ is the regret of \cref{def:uc-regret}.

\begin{IEEEproof}
\emph{Achievability and the normalizer bracket.} Achievability is
\cref{prop:uc-equalizer}: on $\cF_n$ the clip is inactive, so
$e^{-U^n(x,y)}=\PEPm{x}{y}{m_{\hat\theta(x,y)}}$ exactly, and Markov's
inequality on $e^{U^n}$ gives $r_{y,n}(U^n,x)\le\gamma_n$ for every
$x\in\cF_n$.

For the bracket: $\gamma_n^{\cF_n}\le\gamma_n$ since
$\Ind{\tilde X^n\in\cF_n}\le1$, and $\gamma_n\ge0$ since $e^{U^n}\ge1$. For
the upper endpoint, the expectation display in the proof of
\cref{lem:uc-pep-bound-u} gives
$\Es{\tilde X^n}{e^{U^n(\tilde X^n,y)}}\le K_n(1+n^2)$, hence
$\gamma_n\le\tfrac1n\log K_n+O(\log n/n)$.

\emph{Converse setup.} Fix a constant $\epsilon>0$ and suppose, towards a
contradiction, that some metric $u_n^\star$ satisfies
$r_{y,n}(u_n^\star,x)\le\gamma_n^{\cF_n}-\epsilon$ for all $x\in\cF_n$.
Unwinding the regret (\cref{def:uc-regret}),
\begin{equation}\label{eq:app-mm-hyp}
  \PEPm{x}{y}{u_n^\star}\;\le\;
  e^{n(\gamma_n^{\cF_n}-\epsilon)}\,\PEPm{x}{y}{m_{\hat\theta(x,y)}}
  \qquad\text{for all }x\in\cF_n .
\end{equation}

\emph{Case A: $\gamma_n^{\cF_n}<\epsilon$.} Set
$\beta\triangleq\epsilon-\gamma_n^{\cF_n}>0$. Then \eqref{eq:app-mm-hyp}
reads
$\PEPm{x}{y}{u_n^\star}\le e^{-n\beta}\PEPm{x}{y}{m_{\hat\theta(x,y)}}$ for
all $x\in\cF_n$ --- exactly the hypothesis of \cref{lem:uc-avg-regret} ---
so
\[
  \epsilon-\gamma_n^{\cF_n}\;=\;\beta\;\le\;
  \frac1n\log\frac{1}{\tfrac12-K_ne^{-n^2}}\;=\;O(1/n),
\]
i.e.\ $\gamma_n^{\cF_n}\ge\epsilon-O(1/n)$. But the normalizer bracket above
gives $\gamma_n^{\cF_n}\le\gamma_n\le\tfrac1n\log K_n+O(\log n/n)=o(1)$
under (H-sub), which is incompatible with
$\gamma_n^{\cF_n}\ge\epsilon-O(1/n)$ for the fixed constant $\epsilon>0$
once $n$ is large. Hence, for all $n$ large, Case A yields a contradiction.

\emph{Case B: $\gamma_n^{\cF_n}\ge\epsilon$.} Now
$e^{-n(\gamma_n^{\cF_n}-\epsilon)}\le1$. On $\cF_n$ the clip is inactive, so
$e^{-U^n(x,y)}=\PEPm{x}{y}{m_{\hat\theta(x,y)}}$ and \eqref{eq:app-mm-hyp}
becomes
$\PEPm{x}{y}{u_n^\star}\le e^{n(\gamma_n^{\cF_n}-\epsilon)}\,e^{-U^n(x,y)}$.
Reciprocating and pairing with the clip level $e^{n^2}$,
\begin{equation}\label{eq:app-mm-recip}
  \min\bigl\{1/\PEPm{x}{y}{u_n^\star},\,e^{n^2}\bigr\}
  \;\ge\;e^{-n(\gamma_n^{\cF_n}-\epsilon)}\,e^{U^n(x,y)}
  \qquad\text{for }x\in\cF_n ;
\end{equation}
the clip is preserved on the right because
$e^{-n(\gamma_n^{\cF_n}-\epsilon)}\le1$ and $U^n\le n^2$, so the right-hand
side never exceeds $e^{n^2}$. Averaging \eqref{eq:app-mm-recip} under
$\tilde X^n\sim Q_X^n$, restricted to $\cF_n$ (the unrestricted left-hand
average is at least the restricted one, since the integrand is
non-negative), and using the definition of $\gamma_n^{\cF_n}$,
\begin{align*}
  \Es{\tilde X^n}{\min\bigl\{1/\PEPm{\tilde X^n}{y}{u_n^\star},\,e^{n^2}\bigr\}}
  &\ge e^{-n(\gamma_n^{\cF_n}-\epsilon)}\,
    \Es{\tilde X^n}{e^{U^n(\tilde X^n,y)}\,\Ind{\tilde X^n\in\cF_n}}\\
  &=e^{-n(\gamma_n^{\cF_n}-\epsilon)}\cdot e^{n\gamma_n^{\cF_n}}
   \;=\;e^{n\epsilon}.
\end{align*}
But \cref{lem:uc-clipped-inv-pep} --- whose statement holds for any metric
at fixed $y$, family member or not --- bounds the left side by $1+n^2$ at
clip level $e^{n^2}$. Hence $e^{n\epsilon}\le1+n^2$, i.e.\
$\epsilon\le O(\log n/n)$ --- a contradiction for the fixed constant
$\epsilon>0$ at all $n$ large.

Cases A and B exhaust the dichotomy and both exclude the converse hypothesis
for all $n$ large, completing the proof.
\end{IEEEproof}

\section{The Fine-Scale Minimax Converse}
\label{app:minimax-scale}

We derive the fine-scale bound recorded in \cref{rem:uc-minimax-scale}:
\emph{for every $\delta>0$ and all $n$ large, every metric $u$ satisfies}
\[
  \max_{x\in\cF_n}r_{y,n}(u,x)\;\ge\;\gamma_n^{\cF_n}\,-\,(2+\delta)\,
  \frac{\log n}{n}.
\]
The setting and notation are those of Appendix~\ref{app:univ-minimax}.

\begin{IEEEproof}
Write $\rho_n$ for the left-hand side. If $\rho_n\ge\gamma_n^{\cF_n}$ there
is nothing to prove. If $0\le\rho_n<\gamma_n^{\cF_n}$, set
$\epsilon_n\triangleq\gamma_n^{\cF_n}-\rho_n\in(0,\gamma_n^{\cF_n}]$ and run
Case~B of Appendix~\ref{app:univ-minimax} verbatim with the vanishing margin
$\epsilon_n$ in place of the constant $\epsilon$ --- the clip is preserved in
\eqref{eq:app-mm-recip} because $e^{-n(\gamma_n^{\cF_n}-\epsilon_n)}\le1$
and $U^n\le n^2$ --- so the averaging display gives
$e^{n\epsilon_n}\le1+n^2$, i.e.\
$\epsilon_n\le\tfrac1n\log(1+n^2)\le(2+\delta)\tfrac{\log n}{n}$ for all $n$
large. If $\rho_n<0$, the same reciprocation with margin $0$ gives
$\min\{1/\PEPm{x}{y}{u},e^{n^2}\}\ge e^{U^n(x,y)}$ on $\cF_n$; averaging
restricted to $\cF_n$ and applying \cref{lem:uc-clipped-inv-pep} to $u$
yields $e^{n\gamma_n^{\cF_n}}\le1+n^2$, while \cref{lem:uc-avg-regret} with
$\beta=-\rho_n$ gives $\rho_n\ge-\tfrac{\log2}{n}-o(1/n)$; combining,
$\rho_n\ge\gamma_n^{\cF_n}-(2+\delta)\tfrac{\log n}{n}$ for all $n$ large,
since $\delta\log n\ge\log2+o(1)$ eventually.
\end{IEEEproof}

%% file: sections/appendix-parts/04-dmc.tex

\section{Type-Mass Proofs for the DMC Reductions}
\label{app:dmc-typemass}

This appendix collects the type-counting proofs of
\cref{lem:uc-typemass-dom}, \cref{cor:uc-mmi-universal}, and
\cref{thm:uc-tilted}.

\begin{IEEEproof}[Proof of \cref{lem:uc-typemass-dom}]
  Fix $(x^n,y^n)$ and write $T=T_{x^ny^n}$,
  $T_Y=T_{y^n}$. For a joint type $\tilde T$ with marginals $(T_X^\star,T_Y)$
  --- only such types receive prior mass at the fixed $y^n$ --- let
  $P_n(\tilde T)\triangleq Q_X^n\{\tilde x^n:T_{\tilde x^ny^n}=\tilde T\}
   =|\{\tilde x^n\in\cT(T_X^\star):T_{\tilde x^ny^n}=\tilde T\}|/|\cT(T_X^\star)|$.
  Conditional-type and type-class
  counting~\cite[Lems.~2.3, 2.5]{csiszar1981information} give, whenever the
  conditional class is nonempty,
  $(n+1)^{-|\cX||\cY|}e^{nH_{\tilde T}(X\mid Y)}
   \le|\{\tilde x^n:T_{\tilde x^ny^n}=\tilde T\}|\le e^{nH_{\tilde T}(X\mid Y)}$
  and $(n+1)^{-|\cX|}e^{nH(T_X^\star)}\le|\cT(T_X^\star)|\le e^{nH(T_X^\star)}$;
  since $H(T_X^\star)-H_{\tilde T}(X\mid Y)=\Ihat(\tilde T)$ for
  $\tilde T_X=T_X^\star$, this yields the sandwich
  \begin{equation}\label{eq:uc-mmi-typemass}
    (n+1)^{-|\cX||\cY|}\,e^{-n\Ihat(\tilde T)}
    \;\le\;P_n(\tilde T)\;\le\;(n+1)^{|\cX|}\,e^{-n\Ihat(\tilde T)}.
  \end{equation}

  \emph{(a) Tie lower bound.} Every $\tilde x^n$ with $T_{\tilde x^ny^n}=T$
  satisfies $m^n(\tilde x^n,y^n)=m^n(x^n,y^n)$ and is charged in full under the
  tie-as-error rule \eqref{eq:uc-pep-tie}; the class is nonempty ($x^n$ itself
  realizes it), so
  $\PEPm{x^n}{y^n}{m^n}\ge P_n(T)\ge(n+1)^{-|\cX||\cY|}e^{-n\Ihat(T)}$.

  \emph{(b) Matching upper bound via the member $T'=T$.} Write
  $g_T(a,b)\triangleq\log\tfrac{T(a,b)}{T_X^\star(a)T_Y(b)}$, with the
  convention $\log0=-\infty$. Since
  $m_{W_T}(\tilde x^n,y^n)=n\langle T_{\tilde x^ny^n},\log W_T\rangle$ with
  $W_T(b\mid a)=T(a,b)/T_X^\star(a)$, and $m_{W_T}(x^n,y^n)$ is finite
  ($T(a,b)>0$ implies $W_T(b\mid a)>0$), a competitor of joint type $\tilde T$
  enters the PEP event $\{m_{W_T}(\tilde x^n,y^n)\ge m_{W_T}(x^n,y^n)\}$ only
  if $\langle\tilde T,\log W_T\rangle\ge\langle T,\log W_T\rangle$; in
  particular $\tilde T$ must vanish outside the support of $T$ (a cell with
  $\tilde T(a,b)>0=T(a,b)$ forces the metric to $-\infty$). Using
  $\log W_T(b\mid a)=g_T(a,b)+\log T_Y(b)$ and the shared $\cY$-marginal
  $\tilde T_Y=T_Y$, the event forces
  $\langle\tilde T,g_T\rangle\ge\langle T,g_T\rangle=\Ihat(T)$, and the
  splitting identity ($\tilde T\ll T$, shared marginals $(T_X^\star,T_Y)$)
  \[
    \Ihat(\tilde T)\;=\;D\bigl(\tilde T\,\big\|\,T_X^\star\otimes T_Y\bigr)
    \;=\;\langle\tilde T,g_T\rangle+D(\tilde T\,\|\,T)\;\ge\;\Ihat(T)
  \]
  holds on the event. Summing the right half of \eqref{eq:uc-mmi-typemass}
  over the at most $(n+1)^{|\cX||\cY|}$ joint types in the event,
  $\PEPm{x^n}{y^n}{m_{W_T}}\le(n+1)^{|\cX||\cY|+|\cX|}e^{-n\Ihat(T)}$.
  Chaining (b) with (a) gives \eqref{eq:uc-mmi-domination}.
\end{IEEEproof}

\begin{IEEEproof}[Proof of \cref{cor:uc-mmi-universal}]
  The proof is self-contained: it uses only the type-counting domination of
  \cref{lem:uc-typemass-dom}, the merge bound of
  \cref{lem:uc-pep-bound-u}, and the exponent computation of \cref{thm:univ-rc}
  --- no property of the MMI decoder, and no step of the classical
  Csisz\'ar--K\"orner analysis, is imported.

  \emph{Step 1: merge bound and exponent computation.} Substituting
  \eqref{eq:uc-mmi-domination} of \cref{lem:uc-typemass-dom} with $m^n=m_W$
  into \eqref{eq:uc-pep-u-min} of
  \cref{lem:uc-pep-bound-u} gives, at every
  $(x^n,y^n)\in\cT(T_X^\star)\times\cY^n$,
  $\PEPm{x^n}{y^n}{U^n}\le K_n(1+n^2)(n+1)^{2|\cX||\cY|+|\cX|}
   \max\{\PEPm{x^n}{y^n}{m_W},e^{-n^2}\}$, with $K_n=O(n^{|\cX||\cY|-1})$ from
  \cref{lem:uc-dmc-polytypes}. The channel-joint law $Q_X^n\cdot W^n$ is
  supported on $\cT(T_X^\star)\times\cY^n$, so repeating the proof of
  \cref{thm:univ-rc} with the enlarged --- still subexponential --- prefactor
  yields $\Erc(R;\bbW,Q_X^n,U^n)\ge\Erc(R;\bbW,Q_X^n,m_W)$ for every
  $W\in\cP(\cY\mid\cX)$ and every rate $R$. Neither $\Md^n$ nor $U^n$ depends
  on $W$, so a single decoder serves every DMC simultaneously.

  \emph{Step 2: transfer to the MMI decoder.} \Cref{thm:uc-mmi} identifies
  $U^n$ with the MMI metric up to an additive shift $\epsilon_n=o(n)$ of a
  joint-type-based metric, and the shift is exponent-neutral, as follows.
  The PEP event of $U^n$ at $(x^n,y^n)$ is contained
  in $\{n\Ihat(T_{\tilde x^ny^n})\ge n\Ihat(T_{x^ny^n})-2\epsilon_n\}$.
  Summing over the at most $(n+1)^{|\cX||\cY|}$ joint types in this event,
  each carrying class mass $e^{-n\Ihat+o(n)}$ by \eqref{eq:uc-mmi-typemass},
  gives $\PEPm{x^n}{y^n}{U^n}\le e^{-n\Ihat(T_{x^ny^n})+2\epsilon_n+o(n)}$,
  while the tie mass at $T_{x^ny^n}$ gives the matching lower bound
  $e^{-n\Ihat(T_{x^ny^n})-o(n)}$ for both metrics ($U^n$ is itself
  joint-type-based, being a function of the members' type-based PEPs). The two decoders' PEPs
  therefore agree within $e^{\pm o(n)}$ at every $(x^n,y^n)$, they share every
  random-coding exponent, and the MMI decoder inherits the guarantee of
  Step~1.

  For calibration only: under uniform-on-type-class input the matched-ML
  random-coding exponent equals the empirical-mutual-information exponent of
  Csisz\'ar--K\"orner~\cite[Thm.~10.2]{csiszar1981information}, so the common
  value achieved above is the classical MMI exponent; the proof neither uses
  nor re-derives that identity --- the domination
  \eqref{eq:uc-mmi-domination} replaces it.
\end{IEEEproof}

\begin{IEEEproof}[Proof of \cref{thm:uc-tilted}]
  Fix $(x^n,y^n)$ and write $T=T_{x^ny^n}$, $T_Y=T_{y^n}$. For a joint type
  $\tilde T\in\cT_n(\cX\times\cY)$ with $\cY$-marginal $T_Y$, let
  $P_n(\tilde T)\triangleq Q_X^{\otimes n}\{\tilde x^n:T_{\tilde x^ny^n}
  =\tilde T\}$ be the prior mass of its conditional type class. The exact
  per-sequence mass
  $Q_X^{\otimes n}(\tilde x^n)=e^{-n[H(\tilde T_X)+D(\tilde T_X\|Q_X)]}$,
  conditional-type counting~\cite[Lem.~2.5]{csiszar1981information}, and
  identity \eqref{eq:uc-tilt-identity} give the sandwich
  \begin{equation}\label{eq:uc-typemass}
    (n+1)^{-|\cX||\cY|}\,e^{-nD(\tilde T\|Q_X\otimes T_Y)}
    \;\le\;P_n(\tilde T)\;\le\;e^{-nD(\tilde T\|Q_X\otimes T_Y)},
  \end{equation}
  the memoryless counterpart of \eqref{eq:uc-mmi-typemass}. With this
  sandwich in place, the two halves (a)--(b) of the proof of
  \cref{lem:uc-typemass-dom} apply verbatim with
  $g_T(a,b)\triangleq\log\tfrac{T(a,b)}{Q_X(a)T_Y(b)}$, so that
  $\tilde m_T(\tilde x^n,y^n)=n\langle T_{\tilde x^ny^n},g_T\rangle$ and
  $\langle T,g_T\rangle=D(T\|Q_X\otimes T_Y)$: the tie lower bound gives
  $\PEPm{x^n}{y^n}{m^n}\ge P_n(T)\ge(n+1)^{-|\cX||\cY|}
   e^{-nD(T\|Q_X\otimes T_Y)}$ for every joint-type-based $m^n$, and the
  member-$T'=T$ upper bound --- support containment, then the splitting
  identity
  $D(\tilde T\|Q_X\otimes T_Y)=\langle\tilde T,g_T\rangle+D(\tilde T\|T)
   \ge\langle T,g_T\rangle$ on the PEP event --- gives
  $\PEPm{x^n}{y^n}{\tilde m_T}\le(n+1)^{|\cX||\cY|}
   e^{-nD(T\|Q_X\otimes T_Y)}$. Since
  $D(T\|Q_X\otimes T_Y)=\Ihat(T)+D(T_X\|Q_X)\le\log|\cX|
   +\log(1/\min_aQ_X(a))$ is bounded uniformly in $(x^n,y^n)$, the $n^2$ clip
  is inactive for all $n$ large, and (i) follows by
  \eqref{eq:uc-tilt-identity}; chaining the two displays gives (ii).
\end{IEEEproof}

\section{The FSM Ranking Count: Proof of \texorpdfstring{\cref{lem:uc-fsm-polycount}}{the FSM Counting Lemma}}
\label{app:fsm-polycount}

We prove \cref{lem:uc-fsm-polycount}. Recall from \cref{subsec:uc-dmc-fsc}
that an FSM metric evaluates as
$m_{(g,q,s_0)}(x^n,y^n)=\langle q,\,n\,T^g_{x^ny^n}\rangle$, where
$T^g_{x^ny^n}\in\mR^{D}$ is the empirical distribution of the triples
$(s_{i-1},x_i,y_i)$ along the run of $g$ from $s_0$ and
$D=|\cS_n||\cX||\cY|$; $\PEPm{x^n}{y^n}{\cdot}$ is the tie-as-error PEP
\eqref{eq:uc-pep-tie}.

\begin{IEEEproof}
Fix the next-state map $g$ and initial state $s_0$; there are at most
$|\cS_n|^{D}$ maps $g$ and $\le|\cS_n|$ initial states. Given $(g,s_0)$, the
metric is a \emph{linear} functional of the triple type $T^g_{x^ny^n}$,
parametrized by $q\in\mR^{D}$, so the tie-as-error PEP depends on $q$ only
through the \emph{ranking} (total preorder, ties included) that $q$ induces
on the $N\le(n+1)^{D}$ achievable triple types --- i.e.\ through the sign
pattern of $\langle q,T-T'\rangle$ over the
$H\le\binom{N}{2}\le(n+1)^{2D}$ difference hyperplanes in $\mR^{D}$. A
ranking is thus a \emph{face} of the hyperplane arrangement, not merely one
of its full-dimensional regions, so the elementary region bound
$\sum_{i=0}^{D}\binom{H}{i}$ of Buck~\cite{buck1943partition} (see
also~\cite{zaslavsky1975facing})
cannot be quoted directly; we convert it into a face count flat by flat.
Every face lies on a unique flat of the arrangement (the intersection of the
hyperplanes on which its sign vector vanishes; a hyperplane containing that
flat necessarily belongs to the vanishing set) and is contained in a single
region of the arrangement that the remaining hyperplanes induce on the flat,
with distinct faces on the same flat landing in distinct regions. Every flat
is an intersection of at most $D$ linearly independent members, so there are
at most $\sum_{k=0}^{D}\binom{H}{k}$ flats, and each flat --- of dimension at
most $D$, cut by at most $H$ induced hyperplanes --- carries at most
$\sum_{i=0}^{D}\binom{H}{i}$ regions by the region bound applied within it.
Hence, for $H\ge1$ (the case $H=0$ having a single ranking), the number of
faces, and so of distinct rankings, is at most
\[
  \Bigl[\sum_{i=0}^{D}\binom{H}{i}\Bigr]^{2}
  \;\le\;\bigl[(D+1)H^{D}\bigr]^{2}
  \;\le\;(n+1)^{4D^{2}+2D},
\]
using $H\le(n+1)^{2D}$ and $(D+1)^{2}\le(n+1)^{2D}$; the tie factor thus
costs at most a subexponential slack ($e^{O(n^{2\alpha}\log n)}$ for
$|\cS_n|=n^\alpha$) over the naive region count and does not move the
$\alpha<\tfrac12$ threshold below. Choosing one representative $q$ per face
gives a metric whose PEP \emph{equals} that of every metric in its face at
every $(x^n,y^n)$ --- the pointwise-domination clause --- and multiplying
over the $\le|\cS_n|^{D}$ next-state maps and $\le|\cS_n|$ initial states,
$|\Md^n|\le|\cS_n|^{D+1}(n+1)^{4D^2+2D}$. Summing rates, for
$|\cS_n|=n^\alpha$,
\[
  \tfrac1n\log|\Md^n|
  \;\le\;\tfrac{(D+1)\log|\cS_n|}{n}+\tfrac{(4D^2+2D)\log(n+1)}{n}
  \;=\;O\bigl(n^{2\alpha-1}\log n\bigr)\;\to\;0
  \qquad\text{for }\alpha<\tfrac12.
\]
Counting instead one representative per \emph{joint type} (the cruder
$(n+1)^{D}$ bound) would not in general yield the pointwise-domination
clause, since metrics sharing a per-coordinate type can rank whole sequences
differently; the sharper $\alpha<1$
of~\cite[Sec.~V.C]{elkayam2014universal} uses a dedicated construction.
\end{IEEEproof}

%% file: sections/appendix-parts/05-erasure.tex

\section{Derandomization of the Universal Erasure Decoder}
\label{app:det-erasure}

This appendix proves \cref{thm:uc-det-erasure}. The hypotheses are: the
channel family $\Cset$ satisfies (H-sub) (\cref{def:uc-hsub}), and the metric
sequence $\bbU$ satisfies (H-univ($R'$)) for every
$R'\in[R,R+\gamma_{\max}]$, for a fixed $\gamma_{\max}\ge0$. The decoder
$\cD_n^\gamma$ is the PEP-based erasure rule of \cref{def:uc-pep-erasure}
with metric $U^n$ and tie-as-error scores
$-\log\PEPm{x_i^n}{y^n}{U^n}$ (the argmin tie, if any, broken by lowest
index --- a choice that does not depend on $\gamma$). For a channel
$W^n\in\Cset^n$ and a rate parameter $\rho>0$, write
\begin{equation}\label{eq:app-det-B}
  B_n^{W}(\rho)\;\triangleq\;
  \E{\min\BRAs{1,\,e^{n\rho}\,\PEPm{X^n}{Y^n}{U^n}}},
  \qquad (X^n,Y^n)\sim Q_X^n\cdot W^n,
\end{equation}
the per-blocklength functional whose exponent is
$\Erc(\rho;\bbW,\bbQ_X,\bbU)$ by \cref{prop:uc-erc-pep}.

\emph{Standing per-$n$ bounds.} For every $W^n$, every $\gamma\ge0$, and the
i.i.d.\ rate-$R$ ensemble,
\begin{equation}\label{eq:app-det-rc-bounds}
  \E{P_{\mathrm{era}}(C^n,W^n,\cD_n^\gamma)}\le B_n^{W}(R+\gamma),
  \qquad
  \E{P_{\mathrm{und}}(C^n,W^n,\cD_n^\gamma)}\le e^{-n\gamma}B_n^{W}(R).
\end{equation}
These are the bounds of \cref{lem:uc-era-oneshot} transferred to the
tie-as-error scores via \cref{lem:uc-tie-dither}, exactly as in the proof of
\cref{thm:uc-erasure-via-rc}: the competitor tail survives as ``$\le$'' and
the transmitted-word spectrum is by definition the tie-as-error spectrum
entering \eqref{eq:app-det-B}.

\begin{IEEEproof}[Proof of \cref{thm:uc-det-erasure}]
\emph{Grid and expurgation.} Fix $n$ and let
$G_n\triangleq\bigl(\{0,\tfrac1n,\tfrac2n,\dots\}\cap[0,\gamma_{\max}]\bigr)
\cup\{\gamma_{\max}\}$, so that every $\gamma\in[0,\gamma_{\max}]$ has grid
neighbors $\gamma^-\le\gamma\le\gamma^+$ with $\gamma^+-\gamma^-\le1/n$ and
$|G_n|\le\gamma_{\max}n+2$. Index the $N_n\triangleq2\,|\Cset^n|\,|G_n|$
constraints by a channel, a grid margin, and an event
$\bullet\in\{\mathrm{era},\mathrm{und}\}$; the decoder is built from $\bbU$
alone, so $\Mset$ indexes no constraint. Markov's inequality per constraint
(at level $2N_n$ times the mean \eqref{eq:app-det-rc-bounds}) and a union
bound leave at least half the codebooks satisfying
\begin{equation}\label{eq:app-det-grid}
  P_{\mathrm{era}}(C^n,W^n,\cD_n^{\gamma_i})\le2N_n\,B_n^{W}(R+\gamma_i),
  \qquad
  P_{\mathrm{und}}(C^n,W^n,\cD_n^{\gamma_i})\le2N_n\,e^{-n\gamma_i}B_n^{W}(R)
\end{equation}
for every $W^n\in\Cset^n$ and $\gamma_i\in G_n$ simultaneously; pick one.
The penalty $2N_n$ is subexponential by (H-sub).

\emph{Monotonicity and the continuum.} Pathwise, raising the margin only
strengthens both erase clauses of \eqref{eq:uc-erase-cond} (the confidence
bar $n(R+\gamma)$ and the separation bar $n\gamma$ both rise), and the best
index does not depend on the margin; hence, for the fixed code,
$P_{\mathrm{era}}(\gamma)$ is non-decreasing and $P_{\mathrm{und}}(\gamma)$
non-increasing in $\gamma$. Sandwiching $\gamma$ between its grid neighbors
($P_{\mathrm{era}}(\gamma)\le P_{\mathrm{era}}(\gamma^+)$,
$P_{\mathrm{und}}(\gamma)\le P_{\mathrm{und}}(\gamma^-)$) and using
$\min\{1,e^{n\rho'}p\}\le e^{n(\rho'-\rho)}\min\{1,e^{n\rho}p\}$ for
$\rho'\ge\rho$ to trade $\gamma^\pm$ for $\gamma$,
\begin{equation*}
  P_{\mathrm{era}}(C^n,W^n,\cD_n^{\gamma})\le2eN_n\,B_n^{W}(R+\gamma),
  \qquad
  P_{\mathrm{und}}(C^n,W^n,\cD_n^{\gamma})\le2eN_n\,e^{-n\gamma}B_n^{W}(R),
\end{equation*}
one factor $e$ per block, exponent-neutral.

\emph{Exponent passage.} Assemble the codebooks over $n$ into $\bbC$; taking
$-\tfrac1n\log$ and the $\liminf$, the subexponential factors vanish, and
\cref{prop:uc-erc-pep} with $\Ehat(e^{-n\gamma}a_n)=\gamma+\Ehat(a_n)$ gives
$\Eera\ge\Erc(R+\gamma;\bbU)$ and $\Eund\ge\gamma+\Erc(R;\bbU)$ for every
$\gamma\in[0,\gamma_{\max}]$ and $\bbW\in\Cset$ simultaneously. Only then
invoke (H-univ($R'$)) at $R'=R+\gamma$ and $R'=R$ to pass from $\bbU$ to
every $\bbm\in\Mset$, yielding
\eqref{eq:uc-det-era}--\eqref{eq:uc-det-und}.
\end{IEEEproof}

%% file: sections/appendix-parts/06-awgn.tex

\section{The Angular-Correlation Tail on the Shell}
\label{app:shell-f}

This appendix proves the closed form~\eqref{eq:awgn-f-form} used throughout
\cref{sec:awgn}, in its general spherically-symmetric form.

\begin{lemma}[The pairwise error integral]
\label{lem:awgn-shell-f}
Let $y\in\mR^n$ be fixed and nonzero, and let $X^n\in\mR^n$ be a random
vector whose law is spherically symmetric with $\PR{X^n=0}=0$. Let
$\hat\rho\triangleq\BRAi{X^n,y}/(\norm{X^n}\norm{y})$ be the (random)
empirical correlation. Then for $t\in(0,1)$,
\begin{align}
  \PR{\hat\rho\ge t}
  &\;=\;\frac{1}{\pi}\,(1-t^2)^{(n-1)/2}
    \int_0^{\pi/2}\!\bigl(1+t^2\tan^2\phi\bigr)^{(1-n)/2}\,d\phi
    \label{eq:app-shell-f-form}\\
  &\;=\;\frac{1}{\pi}\,(1-t^2)^{(n-1)/2}
    \int_0^{\infty}\!\frac{1}{1+x^2}\,\bigl(1+t^2x^2\bigr)^{(1-n)/2}\,dx.
    \notag
\end{align}
\end{lemma}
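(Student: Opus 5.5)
The plan is to reduce to the uniform law on the unit sphere, realize that law through a standard Gaussian vector, and then evaluate the resulting Gaussian tail through Craig's integral representation. The $\chi^2$ moment generating function then produces the integrand in closed form.

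\emph{Step 1 (reduction to the sphere).} Since $\hat\rho$ depends on $X^n$ only through its direction, and spherical symmetry with $\PR{X^n=0}=0$ makes $X^n/\norm{X^n}$ uniform on $\mS^{n-1}$, we may replace $X^n$ by $G/\norm{G}$ with $G\sim\cN(0,I_n)$. Rotational invariance then lets us take $y/\norm{y}=e_1$. Hence $\hat\rho=G_1/\norm{G}$, and the event $\{\hat\rho\ge t\}$ with $t\in(0,1)$ is the event $\{G_1\ge0,\ G_1^2(1-t^2)\ge t^2 S\}$, where $S\triangleq\sum_{i\ge2}G_i^2\sim\chi^2_{n-1}$ is independent of $G_1$. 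Writing $a\triangleq t/\sqrt{1-t^2}$ and $\Phi^c$ for the standard normal upper tail, conditioning on $S$ gives
\[
\PR{\hat\rho\ge t}=\E{\Phi^c(a\sqrt S)}.
\]

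\emph{Step 2 (Craig's formula).} I will use Craig's representation
\[
\Phi^c(z)=\tfrac1\pi\int_0^{\pi/2}\exp\bigl(-z^2/(2\sin^2\phi)\bigr)\,d\phi,\qquad z\ge0.
\]
It can be proved directly by writing $\Phi^c(z)$ as the mass of a bivariate standard Gaussian on a half-plane at distance $z$ from the origin, passing to polar coordinates, and reading off the radial tail along each ray. I expect this to be the main obstacle, in the sense that it is the only non-mechanical ingredient; everything else is bookkeeping. Because the integrand is nonnegative, Tonelli lets us exchange $\E{\cdot}$ and $\int d\phi$. The $\chi^2_{n-1}$ moment generating function $\E{e^{-sS}}=(1+2s)^{-(n-1)/2}$, evaluated at $s=a^2/(2\sin^2\phi)$, then gives
\[
\PR{\hat\rho\ge t}=\tfrac1\pi\int_0^{\pi/2}\bigl(1+a^2/\sin^2\phi\bigr)^{-\beta}d\phi,\qquad \beta=(n-1)/2.
\]

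\emph{Step 3 (algebra and substitutions).} Clearing denominators,
\[
\bigl(1+a^2/\sin^2\phi\bigr)^{-\beta}=(1-t^2)^{\beta}\bigl(1-t^2+t^2/\sin^2\phi\bigr)^{-\beta}=(1-t^2)^\beta\bigl(1+t^2\cot^2\phi\bigr)^{-\beta}.
\]
The reflection $\phi\mapsto\pi/2-\phi$ turns $\cot$ into $\tan$, which gives the first line of \eqref{eq:app-shell-f-form}. The substitution $x=\tan\phi$, $d\phi=dx/(1+x^2)$, gives the second line. Finally, specializing to the uniform shell prior recovers \eqref{eq:awgn-f-form} and \eqref{eq:awgn-pep-via-f}. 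The PEP event $\{\BRAi{\tilde X^n,v(y)}\ge\BRAi{x,v(y)}\}$ is exactly $\{\hat\rho\ge t(x,y)\}$ with $u=v(y)$, because $\norm{\tilde X^n}=\norm{x}$ on the shell.
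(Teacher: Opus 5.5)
Your proposal is correct and is essentially the paper's proof, which follows Lomnitz--Feder: realize the direction by a standard Gaussian, rotate $y$ to $e_1$, write the probability as an expected Gaussian tail at $\sqrt{t^2/(1-t^2)}\,\norm{X_2^n}$, apply Craig's formula with the $\chi^2_{n-1}$ moment generating function under the $\phi$-integral, and finish with $\phi\mapsto\pi/2-\phi$ and $x=\tan\phi$. The only differences are presentational: you justify the exchange of expectation and integral by Tonelli and sketch a proof of Craig's formula, where the paper simply cites it.
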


The uniform-shell input $Q_X^n=\Unif{\sqrt{nP}\,\mS^{n-1}}$ satisfies the
hypotheses, and the law of $\hat\rho$ is the same for every nonzero direction
$y$; \eqref{eq:app-shell-f-form} is therefore the function
$F(t)=\PR{\hat\rho\ge t}$ of~\eqref{eq:awgn-f-form}. The identity
\eqref{eq:awgn-pep-via-f} follows: for a metric $m(x,y)=\BRAi{x,v(y)}$ with
$v(y)\neq0$ and $\tilde X^n\sim Q_X^n$, the event
$\BRAi{\tilde X^n,v(y)}\ge\BRAi{x,v(y)}$ coincides, after dividing by the
shell-constant $\norm{\tilde X^n}\,\norm{v(y)}=\norm{x}\,\norm{v(y)}$, with
$\{\hat\rho\ge t(x,y)\}$ for the lemma applied with the fixed nonzero vector
$v(y)$, so $\PEP{x}{y}_m=F(t(x,y))$.

\begin{IEEEproof}
Following \cite[Lemma~4]{lomnitz2011communication}: spherical symmetry lets
us take $X^n\sim\cN(0,I_n)$ and rotate $y$ to $(1,0,\dots,0)$, so with
$X_2^n\triangleq(X_2,\dots,X_n)$,
\[
  \PR{\hat\rho\ge t}
  \;=\;\tfrac12\,\PR{X_1^2\ge\tfrac{t^2}{1-t^2}\norm{X_2^n}^2}
  \;=\;\Es{X_2^n}{Q\!\BRA{\sqrt{\tfrac{t^2}{1-t^2}\norm{X_2^n}^2}}}.
\]
Craig's formula~\cite{craig1991new}
$Q(x)=\frac1\pi\int_0^{\pi/2}e^{-x^2/(2\sin^2\phi)}\,d\phi$ and
the Gaussian integral
$\Es{X_2^n}{e^{-\frac12c\norm{X_2^n}^2}}=(1+c)^{-(n-1)/2}$, applied under
the $\phi$-integral at
$c=t^2/\bigl((1-t^2)\sin^2\phi\bigr)$, with the identity
$1+c=(1+t^2\cot^2\phi)/(1-t^2)$ and the substitution
$\phi\mapsto\pi/2-\phi$, give the first form
of~\eqref{eq:app-shell-f-form}; $x=\tan\phi$ gives the second.
\end{IEEEproof}

\section{Proof of \texorpdfstring{\cref{lem:awgn-cosine}}{the Cosine-Lipschitz Lemma}}
\label{app:cosine-lip}

We prove \cref{lem:awgn-cosine}. Throughout, $y\in\cE_n^Y$ is fixed --- (H1)
and (H2) are assumed only there --- and we suppress it from $v_\theta$. Write
the cosine as an inner product of unit vectors:
\[
  t_\theta(x,y)\;=\;\BRAi{\hat x,\hat v_\theta},
  \qquad
  \hat x\triangleq\frac{x}{\norm{x}},\quad
  \hat v_\theta\triangleq\frac{v_\theta}{\norm{v_\theta}}.
\]
Under (H2), $\norm{v_\theta}\ge c_0\sqrt n>0$ for $y\in\cE_n^Y$, so
$\hat v_\theta$ is well-defined; $\norm{x}>0$ by hypothesis.

\begin{IEEEproof}
Cauchy--Schwarz with $\norm{\hat x}=1$ gives
$\abs{t_\theta(x,y)-t_{\theta'}(x,y)}\le
\norm{\hat v_\theta-\hat v_{\theta'}}$, and the (asymmetric) split
\begin{equation}\label{eq:app-cosine-split}
  \hat v_\theta-\hat v_{\theta'}
  \;=\;\frac{v_\theta-v_{\theta'}}{\norm{v_\theta}}
  \;+\;v_{\theta'}\cdot
  \frac{\norm{v_{\theta'}}-\norm{v_\theta}}{\norm{v_\theta}\,\norm{v_{\theta'}}},
\end{equation}
with the reverse triangle inequality in the second term (whose factor
$\norm{v_{\theta'}}$ cancels between numerator and denominator), gives
$\norm{\hat v_\theta-\hat v_{\theta'}}\le
2\norm{v_\theta-v_{\theta'}}/\norm{v_\theta}$: only $\norm{v_\theta}$
survives as a lower-bound requirement, which is why (H2) is needed only as
a lower bound (cf.\ \cref{rem:awgn-onlylower}). Bounding the numerator by
(H1) and the denominator by (H2) --- both available since $y\in\cE_n^Y$ ---
the two $\sqrt n$ factors cancel exactly:
\[
  \norm{\hat v_\theta-\hat v_{\theta'}}
  \;\le\;\frac{2\cdot K\sqrt n\,\norm{\theta-\theta'}}{c_0\sqrt n}
  \;=\;\frac{2K}{c_0}\,\norm{\theta-\theta'},
\]
which is~\eqref{eq:awgn-t-lip} with $L_t=2K/c_0$.
\end{IEEEproof}

\section{Proof of \texorpdfstring{\cref{thm:awgn-glrt}}{the Continuous-GLRT Theorem}, Steps 1--3}
\label{app:glrt-steps}

We supply Steps~1--3 of the proof of \cref{thm:awgn-glrt}. They retrace the
merge proof of \cref{thm:univ-rc}, with the grid family replaced by the
continuum $\Theta$ and the clipped-inverse-PEP bound replaced by a
combination of the discretization theorem (\cref{thm:awgn-disc}) and the grid
version of the same bound; throughout, the typical-set-restricted form
$\bar m^n_\cE$ of~\eqref{eq:awgn-barm-cE} is the object analyzed. The three
steps establish the PEP bound~\eqref{eq:awgn-step3-pep} on $\cE_n$ that
Step~4 --- given in the main text, where it produces the
$\min\{\cdot,c-R,\delta\}$ cap --- consumes. We abbreviate
$p_{e,\theta}(x,y)\triangleq\PEP{x}{y}_{m_\theta}$.

\begin{IEEEproof}[Proof of Steps 1--3]
\emph{Step 1: uniform expectation bound on $e^{\bar m^n_\cE}$.} We bound
$\E{e^{\bar m^n_\cE(\tilde X^n,y)}}$ uniformly in $y$, where
$\tilde X^n\sim Q_X^n$. On $\{(\tilde X^n,y)\in\cE_n\}$ the discretization
theorem (eq.~\eqref{eq:awgn-pep-equiv}, with $\eta=1/n$, mesh-admissible
for all $n$ large) gives
\[
  \inf_{\theta\in\Theta}p_{e,\theta}(\tilde x,y)
  \;\ge\;e^{-Cn\eta}\min_{\theta_d\in\Theta_d^n}p_{e,\theta_d}(\tilde x,y),
\]
whence
\[
  e^{\bar m^n_\cE(\tilde x,y)}\;=\;e^{\bar m^n(\tilde x,y)}
  \;=\;\min\!\BRAs{1/\inf_\theta p_{e,\theta}(\tilde x,y),\;e^{cn}}
  \;\le\;e^{Cn\eta}\,e^{U^n_{\mathrm{grid}}(\tilde x,y)},
\]
where
$U^n_{\mathrm{grid}}(x,y)\triangleq
\min\{-\log\min_{\theta_d\in\Theta_d^n}p_{e,\theta_d}(x,y),\;cn\}$
is the grid-merge metric clipped at the same level $cn$. On
$\{(\tilde X^n,y)\notin\cE_n\}$ the typical-set restriction gives
$e^{\bar m^n_\cE(\tilde x,y)}=e^{-cn}$, which contributes at most $e^{-cn}$
to the expectation. Therefore
\[
  \E{e^{\bar m^n_\cE(\tilde X^n,y)}}
  \;\le\;e^{Cn\eta}\,\E{e^{U^n_{\mathrm{grid}}(\tilde X^n,y)}}+e^{-cn}.
\]
Since
\[
  e^{U^n_{\mathrm{grid}}}
  \;=\;\max_{\theta_d\in\Theta_d^n}\min\!\BRAs{p_{e,\theta_d}^{-1},\,e^{cn}}
  \;\le\;\sum_{\theta_d\in\Theta_d^n}\min\!\BRAs{p_{e,\theta_d}^{-1},\,e^{cn}},
\]
\cref{lem:uc-clipped-inv-pep} with clip level $\alpha_n=e^{cn}$ bounds each
summand's expectation by $1+cn$, whence
$\E{e^{U^n_{\mathrm{grid}}(\tilde X^n,y)}}\le\abs{\Theta_d^n}(1+cn)
=O(n^{d+1})$. With $\eta=1/n$ the prefactor $e^{Cn\eta}=e^C$ is constant and
the residual $e^{-cn}$ exponentially small, so
\[
  \E{e^{\bar m^n_\cE(\tilde X^n,y)}}\;\le\;B_n\;\triangleq\;
  e^C\cdot O\bigl(n^{d+1}\bigr)
\]
\emph{uniformly in $y$}, with $B_n$ subexponential in $n$. No channel-joint
statement is invoked.

\emph{Step 2: pointwise bound on $e^{-\bar m^n_\cE(x,y)}$ for legitimate
pairs.} Fix an arbitrary $\theta'\in\Theta$. For $(x,y)\in\cE_n$ the
typical-set restriction is vacuous and, since the infimum over $\Theta$ is
bounded above by its value at the fixed $\theta'$,
\[
  e^{-\bar m^n_\cE(x,y)}\;=\;e^{-\bar m^n(x,y)}
  \;=\;\max\!\BRAs{\inf_{\theta\in\Theta}p_{e,\theta}(x,y),\;e^{-cn}}
  \;\le\;\max\!\BRAs{p_{e,\theta'}(x,y),\;e^{-cn}}.
\]
The choice $\theta'=\theta_*$ recovers the matched-member bound; the freedom
to take an \emph{arbitrary} fixed $\theta'$ is what delivers the
best-in-class ($\sup_{\theta'}$) form of the theorem, since Steps~3--4 carry
$\theta'$ unchanged and the supremum is taken only at the end, over the
resulting family of lower bounds. For $(x,y)\notin\cE_n$ we have
$e^{-\bar m^n_\cE(x,y)}=e^{cn}$, but this branch is split out separately in
Step~4.

\emph{Step 3: PEP bound by Markov, on $\cE_n$.} For a legitimate pair
$(x,y)\in\cE_n$ and the fixed $\theta'\in\Theta$ of Step~2,
\begin{align*}
  \PEP{x}{y}_{\bar m^n_\cE}
  &\;=\;\PRs{\tilde X^n}{e^{\bar m^n_\cE(\tilde X^n,y)}
    \ge e^{\bar m^n_\cE(x,y)}}\\
  &\;\le\;\E{e^{\bar m^n_\cE(\tilde X^n,y)}}\,e^{-\bar m^n_\cE(x,y)}
  \;\le\;B_n\cdot\max\!\BRAs{p_{e,\theta'}(x,y),\;e^{-cn}},
\end{align*}
which is the bound~\eqref{eq:awgn-step3-pep} consumed by Step~4 in the main
text.
\end{IEEEproof}

\section{Proof of \texorpdfstring{\cref{lem:awgn-interf}}{the Interference Typicality Lemma}}
\label{app:interf-typical}

We prove \cref{lem:awgn-interf}. Throughout, the true channel is
$Y^n=hX^n+S(\theta_*)+Z^n$ with $X^n\sim Q_X^n=\Unif{\sqrt{nP}\,\mS^{n-1}}$
and $Z^n\sim\cN(0,\sigma^2I_n)$ independent, where $(h,\sigma,\theta_*)$ is
an arbitrary point of the SNR-constrained family of
\cref{sec:awgn-interference}; every constant below depends only on the family
constants listed in the lemma, never on $(h,\sigma,\theta_*)$ or on $n$.
Write $V\triangleq\mathrm{span}\{\psi_1,\dots,\psi_d\}$,
$d'\triangleq\dim V\le d$, $P_{V^\perp}$ for the orthogonal projection onto
$V^\perp$, and recall $v_\theta(y)=y-S(\theta)$ with $S(\theta)\in V$ and
$\norm{S(\theta)}\le S_{\max}\sqrt n$ for all $\theta\in\Theta$;
$t_\theta(x,y)$ is the cosine~\eqref{eq:awgn-cosine} and $K$ the (H1)
constant of the family:
$\norm{v_\theta(y)-v_{\theta'}(y)}=\norm{S(\theta')-S(\theta)}
\le K\sqrt n\,\norm{\theta-\theta'}$ for \emph{every} $y$, by the basis
expansion.

We use two elementary tail estimates. First, the $\chi^2$ Chernoff bounds:
for $W\sim\chi^2_m$,
\begin{equation}\label{eq:app-chi2}
  \PR{W\le m/2}\;\le\;\bigl(\tfrac12e^{1/2}\bigr)^{m/2}\;\le\;e^{-m/12},
  \qquad
  \PR{W\ge2m}\;\le\;\bigl(2e^{-1}\bigr)^{m/2}\;\le\;e^{-m/8},
\end{equation}
obtained by optimizing the moment generating function
$\E{e^{\lambda W}}=(1-2\lambda)^{-m/2}$ over $\lambda<0$ (resp.\
$0<\lambda<1/2$). Second, a noncentrality-domination lemma, which the
E\&NN verification of Appendix~\ref{app:isi-typical} also invokes.

\begin{lemma}[Noncentral norms dominate central ones]
\label{lem:awgn-noncentral-dom}
Let $\xi\sim\cN(0,\Sigma)$ on an $m$-dimensional Euclidean space with
$\Sigma\succeq s^2I$, and let $\mu$ be a fixed vector. Then
$\norm{\mu+\xi}^2$ stochastically dominates $s^2\chi^2_m$: for every
$r\ge0$, $\PR{\norm{\mu+\xi}^2\le r}\le\PR{s^2\chi^2_m\le r}$.
\end{lemma}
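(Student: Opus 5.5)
We have to show that $\norm{\mu+\xi}^2$ stochastically dominates $s^2\chi^2_m$ when $\xi\sim\cN(0,\Sigma)$ with $\Sigma\succeq s^2I$. The plan has two reductions. The first removes the excess covariance, and the second removes the mean.

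\emph{Step 1 (split the covariance).} Since $\Sigma-s^2I\succeq0$, we may write $\xi\overset{d}{=}s\zeta+\xi'$. Here $\zeta\sim\cN(0,I_m)$ and $\xi'\sim\cN(0,\Sigma-s^2I)$ are independent. Conditioning on $\xi'$ gives
\[
  \PR{\norm{\mu+\xi}^2\le r}=\Es{\xi'}{\PR{\norm{\mu'+s\zeta}^2\le r\mid\xi'}},
  \qquad \mu'\triangleq\mu+\xi'.
\]
It therefore suffices to prove the statement with $\Sigma=s^2I$ and an arbitrary fixed mean $\mu'$, uniformly in $\mu'$. After scaling by $s$, the claim is that for $\zeta\sim\cN(0,I_m)$ and any fixed $a$ and $\rho\ge0$,
\[
  \PR{\norm{a+\zeta}\le\rho}\le\PR{\norm{\zeta}\le\rho}.
\]

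\emph{Step 2 (Anderson's inequality).} The ball $B_\rho=\{z:\norm z\le\rho\}$ is convex and symmetric about the origin. The standard Gaussian density is symmetric and unimodal, with log-concave superlevel sets. Anderson's theorem then gives directly that $\PR{\zeta\in B_\rho-a}\le\PR{\zeta\in B_\rho}$. If we want to avoid citing it, the same fact follows elementarily. Rotate so that $a=\abs{a}e_1$; the noncentral $\chi^2$ with $k=\abs{a}^2$ is then the law of $(\abs a+\zeta_1)^2+\chi^2_{m-1}$, with the two parts independent. Conditioning on the $\chi^2_{m-1}$ part, it remains to show the one-dimensional fact $\PR{\abs{\abs a+\zeta_1}\le u}\le\PR{\abs{\zeta_1}\le u}$. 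This holds because the integral of the standard normal density over an interval of fixed length $2u$ is maximized when the interval is centred at $0$, by symmetry and unimodality of the density.

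Combining the two steps, the conditional bound holds for every $\xi'$. Averaging over $\xi'$ then yields $\PR{\norm{\mu+\xi}^2\le r}\le\PR{s^2\chi^2_m\le r}$.

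The only substantive point is Step 2, the monotonicity in the noncentrality. I expect it to go through cleanly via the one-dimensional interval argument, so there is no real obstacle. One minor point must be stated: if $\Sigma$ is singular beyond $s^2I$, the decomposition in Step 1 still applies, because we only need $\Sigma-s^2I\succeq0$ and not strict positivity.
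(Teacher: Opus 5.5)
Your proof is correct, but it is organized differently from the paper's.

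The paper diagonalizes $\Sigma$, so that the coordinates $\xi_i\sim\cN(0,\lambda_i)$ with $\lambda_i\ge s^2$ are independent. It then proves by a reflection argument that each $(\mu_i+\xi_i)^2$ stochastically dominates $\xi_i^2$, and adds these $m$ one-dimensional dominations using independence. Only at the end does it use the pointwise bound $\sum_i\lambda_ig_i^2\ge s^2\sum_ig_i^2$.

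You perform the two reductions in the opposite order and with different devices:
\begin{itemize}
\item You split off the excess covariance $\Sigma-s^2I$ as an independent Gaussian and absorb it into a random mean, which makes the noise isotropic.
\item Isotropy then lets you rotate the mean onto a single axis. So you need only one one-dimensional comparison, applied conditionally on the $\chi^2_{m-1}$ part. It is the same centered-interval fact that the paper's reflection establishes.
\end{itemize}

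What each route buys:
\begin{itemize}
\item The paper's route stays in the eigenbasis of $\Sigma$, where $\mu$ is generic. It therefore needs the comparison in every coordinate plus the fact that independent dominations add.
\item Your route replaces that summation step by a single conditioning/mixing step.
\item Your Anderson variant, combined with the covariance split, yields a stronger statement: $\PR{\mu+\xi\in B}\le\PR{s\zeta\in B}$ with $\zeta\sim\cN(0,I_m)$, for every symmetric convex $B$, not only for Euclidean balls.
\end{itemize}

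Two small points. Scaling by $s$ presumes $s>0$; for $s=0$ the right-hand side equals $1$ and the claim is trivial. Also, Anderson's theorem needs a symmetric density with convex superlevel sets (quasi-concavity). That is presumably what you meant by ``log-concave superlevel sets,'' which is not a meaningful notion as written.
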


\begin{IEEEproof}
Work in an orthonormal eigenbasis of $\Sigma$, so the coordinates
$\xi_i\sim\cN(0,\lambda_i)$ are independent with $\lambda_i\ge s^2$, and
$\norm{\mu+\xi}^2=\sum_i(\mu_i+\xi_i)^2$. It suffices to show that
$(\mu_i+\xi_i)^2$ stochastically dominates $\xi_i^2$ for each $i$; then, by
independence, $\norm{\mu+\xi}^2$ dominates
$\sum_i\xi_i^2=\sum_i\lambda_ig_i^2\ge s^2\sum_ig_i^2$ with $g_i$ i.i.d.\
standard normal, which is the claim. Fix a coordinate, drop the index, let
$f$ denote the centered Gaussian density of $\xi$, and take $\mu\ge0$ without
loss of generality (symmetry). For $a>0$,
\begin{align*}
  \PR{(\mu+\xi)^2>a}-\PR{\xi^2>a}
  &\;=\;\int_{\sqrt a-\mu}^{\sqrt a}f(u)\,du
    -\int_{-\sqrt a-\mu}^{-\sqrt a}f(u)\,du\\
  &\;=\;\int_{\sqrt a-\mu}^{\sqrt a}\bigl[f(u)-f(2\sqrt a-u)\bigr]\,du,
\end{align*}
where the second step uses the symmetry $f(u)=f(-u)$ and the substitution
$u\mapsto2\sqrt a-u$. For $u\le\sqrt a$ one has
$\abs{u}\le\abs{2\sqrt a-u}$, hence $f(u)\ge f(2\sqrt a-u)$ since $f$ is
symmetric and unimodal, and the integrand is non-negative.
\end{IEEEproof}

The mechanism is a reflection: shifting a symmetric unimodal density by
$\mu$ moves mass away from the origin, so each coordinate of the shifted
vector dominates its centered counterpart, and independence sums the
dominations. Operationally: noise plus any deterministic offset cannot
concentrate near zero more than the noise alone --- the fact the (H2)
typicality events consume.

\begin{IEEEproof}[Proof of \cref{lem:awgn-interf}]
\emph{Part (i): (H2) by projection.} Since $S(\theta)\in V$ for every
$\theta\in\Theta$, we have $P_{V^\perp}S(\theta)=0$, hence for every
$y\in\mR^n$ and every $\theta$,
\begin{equation}\label{eq:app-interf-proj}
  \norm{v_\theta(y)}\;=\;\norm{y-S(\theta)}
  \;\ge\;\norm{P_{V^\perp}(y-S(\theta))}\;=\;\norm{P_{V^\perp}y}.
\end{equation}
The right-hand side is $\theta$-free, so on
$\cE_n^Y=\{\norm{P_{V^\perp}y}\ge(\sigma_{\min}/2)\sqrt n\}$ the bound
$\norm{v_\theta(y)}\ge(\sigma_{\min}/2)\sqrt n$ holds \emph{simultaneously
for all} $\theta\in\Theta$: (H2) holds with $c_0=\sigma_{\min}/2$. For the
probability of $\cE_n^Y$ under the channel: $P_{V^\perp}S(\theta_*)=0$ as
well, so $P_{V^\perp}Y^n=h\,P_{V^\perp}X^n+P_{V^\perp}Z^n$. Conditionally on
$X^n=x$, this is a Gaussian vector on the $m$-dimensional space $V^\perp$,
$m=n-d'\ge n-d$, with mean $h\,P_{V^\perp}x$ and covariance $\sigma^2I$
(restricted to $V^\perp$). By \cref{lem:awgn-noncentral-dom},
$\norm{P_{V^\perp}Y^n}^2$ conditionally dominates $\sigma^2\chi^2_m$, so
by~\eqref{eq:app-chi2},
$\PR{\norm{P_{V^\perp}Y^n}^2\le\sigma^2m/2\mid X^n=x}\le e^{-m/12}$
uniformly in $x$; averaging over $X^n$ removes the conditioning. For
$n\ge2d$ we have $\sigma^2m/2\ge\sigma_{\min}^2n/4$, whence
\begin{equation}\label{eq:app-interf-h2-prob}
  \PR{Y^n\notin\cE_n^Y}\;\le\;e^{-(n-d)/12}.
\end{equation}

\emph{Part (ii): (H3) by one more projection (upper band) and an $\eta$-net
(lower band).} Define the constant
$C_v\triangleq h_{\max}\sqrt P+2S_{\max}+\sqrt2\,\sigma_{\max}$ and the band
values
\begin{equation}\label{eq:app-interf-bands}
  \beta\;\triangleq\;\frac{\sigma_{\min}^2}{4C_v^2},\qquad
  t_+\;\triangleq\;\sqrt{1-\beta},\qquad
  \zeta\;\triangleq\;\frac{h_{\min}\sqrt P}{4C_v},\qquad
  t_-\;\triangleq\;\zeta,
\end{equation}
and note $t_-^2+\beta\le1/8<1$ (using
$C_v^2\ge h_{\max}^2P+2\sigma_{\max}^2$), so $0<t_-<t_+<1$ as required. Set
$\epsilon\triangleq h_{\min}P/4$. Let $L_t=2K/c_0=4K/\sigma_{\min}$ be the
cosine-Lipschitz constant of \cref{lem:awgn-cosine}, whose hypotheses (H1)
(global, by the basis expansion) and (H2) (part~(i), already established ---
the order matters, cf.\ \cref{rem:awgn-typical}) are in force; set the
\emph{constant} mesh $\eta\triangleq\zeta/L_t$ and let
$\Theta_\eta=\{\theta_1,\dots,\theta_N\}\subset\Theta$ be an $\eta$-net,
$N\le(D\sqrt d/\eta)^d=O(1)$. (If $K=0$ then $v_\theta$ does not depend on
$\theta$ and the extension step below is vacuous.)

Define the typical event \emph{intrinsically}, as the maximal set with the
(H2)/(H3) properties:
\begin{equation}\label{eq:app-interf-cEn}
  \cE_n\;\triangleq\;\BRAs{(x,y):\;\norm{x}^2=nP,\;\;y\in\cE_n^Y,\;\;
  t_\theta(x,y)\in[t_-,t_+]\text{ for all }\theta\in\Theta}.
\end{equation}
Then $\cE_n\subset\cX^n\times\cE_n^Y$ and the band statement of (H3) holds
on $\cE_n$ \emph{by construction}, for all $\theta$ simultaneously; the
entire content is the probability bound
$\PR{(X^n,Y^n)\notin\cE_n}\le e^{-n\delta}$, which we prove by exhibiting
concentration events whose intersection forces $(X^n,Y^n)\in\cE_n$. (The
events below depend on the true $(h,\sigma,\theta_*)$; the \emph{set}
$\cE_n$ does not.) Write $\Delta(\theta)\triangleq S(\theta_*)-S(\theta)\in
V$, so that $v_\theta(Y^n)=hX^n+\Delta(\theta)+Z^n$ and
$\norm{\Delta(\theta)}\le2S_{\max}\sqrt n$; abbreviate
$\Delta_i\triangleq\Delta(\theta_i)$. Let
$W_x\triangleq(V+\mathrm{span}\{x\})^\perp$, of dimension at least $n-d-1$.
Consider
\begin{align*}
  \cA_0&\;\triangleq\;\BRAs{\norm{Z^n}^2\le2n\sigma^2},&
  \cA_1&\;\triangleq\;\BRAs{\abs{\BRAi{X^n,Z^n}}\le\epsilon n},\\
  \cA_2&\;\triangleq\;\BRAs{\norm{P_{W_{X^n}}Z^n}^2\ge\sigma_{\min}^2n/4},&
  \cA_3&\;\triangleq\;\textstyle\bigcap_{i=1}^N
    \BRAs{\abs{\BRAi{X^n,\Delta_i}}\le\epsilon n},
\end{align*}
together with $\cA_4\triangleq\{Y^n\in\cE_n^Y\}$. (In fact
$\cA_2\subset\cA_4$: $W_{X^n}\subset V^\perp$ and $P_{W_{X^n}}$ annihilates
both the signal and the interference, so
$\norm{P_{V^\perp}Y^n}\ge\norm{P_{W_{X^n}}Z^n}$; we keep $\cA_4$ explicit
for the E\&NN parallel of Appendix~\ref{app:isi-typical}.)

\emph{Tails.} $\PR{\cA_0^c}\le e^{-n/8}$ by~\eqref{eq:app-chi2}.
Conditionally on $X^n=x$ (with $\norm{x}=\sqrt{nP}$),
$\BRAi{x,Z^n}\sim\cN(0,nP\sigma^2)$, so
$\PR{\cA_1^c}\le2e^{-\epsilon^2n/(2P\sigma_{\max}^2)}$. Conditionally on
$X^n=x$, $P_{W_x}Z^n$ is a centered Gaussian with covariance $\sigma^2I$ on
$W_x$, of dimension $m'\ge n-d-1$, so by~\eqref{eq:app-chi2} and
$\sigma^2m'/2\ge\sigma_{\min}^2n/4$ for $n\ge2(d+1)$,
$\PR{\cA_2^c}\le e^{-(n-d-1)/12}$, uniformly in $x$. For $\cA_3$: for each
$i$ with $\Delta_i\ne0$, by rotational invariance
$\BRAi{X^n,\Delta_i}/(\sqrt{nP}\,\norm{\Delta_i})$ is distributed as the
angular correlation $\hat\rho$ of a fixed direction, whose closed-form tail
(\cref{lem:awgn-shell-f}, with $I_n(t)\le\pi/2$) gives
$\PR{\abs{\hat\rho}\ge t}=2F(t)\le(1-t^2)^{(n-1)/2}\le e^{-(n-1)t^2/2}$;
with $t=\epsilon n/(\sqrt{nP}\,\norm{\Delta_i})\ge\epsilon/(2\sqrt
PS_{\max})$ (for $t\ge1$ the probability vanishes and the bound is
trivial) this gives
$\PR{\cA_3^c}\le N\,e^{-(n-1)\epsilon^2/(8PS_{\max}^2)}$. Finally
$\PR{\cA_4^c}\le e^{-(n-d)/12}$ by~\eqref{eq:app-interf-h2-prob}. All
exponents are uniform over the family, and $N=O(1)$, so there are $\delta>0$
and $n_0$, depending only on the family constants, with
\[
  \PR{(\cA_0\cap\cA_1\cap\cA_2\cap\cA_3\cap\cA_4)^c}\;\le\;e^{-n\delta}
  \qquad\text{for all }n\ge n_0.
\]

\emph{On the intersection, the band holds for all $\theta$.} Fix a
realization in $\cA_0\cap\cdots\cap\cA_4$ and write $(x,y)$ for
$(X^n,Y^n)$.

\emph{Denominator, all $\theta$ at once:} by the triangle inequality and
$\cA_0$,
\begin{equation}\label{eq:app-interf-denom}
  \norm{v_\theta(y)}\;\le\;h\norm{x}+\norm{\Delta(\theta)}+\norm{Z^n}
  \;\le\;\bigl(h_{\max}\sqrt P+2S_{\max}+\sqrt2\,\sigma_{\max}\bigr)\sqrt n
  \;=\;C_v\sqrt n.
\end{equation}

\emph{Upper band, all $\theta$ at once (projection again):} $W_x\perp x$,
$W_x\perp V$, so $P_{W_x}v_\theta(y)=P_{W_x}Z^n$ for every $\theta$, and
since $W_x\subset x^\perp$,
\[
  1-t_\theta(x,y)^2\;=\;\frac{\norm{P_{x^\perp}v_\theta(y)}^2}
  {\norm{v_\theta(y)}^2}
  \;\ge\;\frac{\norm{P_{W_x}Z^n}^2}{\norm{v_\theta(y)}^2}
  \;\ge\;\frac{\sigma_{\min}^2n/4}{C_v^2\,n}\;=\;\beta
\]
by $\cA_2$ and~\eqref{eq:app-interf-denom}. Hence
$t_\theta(x,y)\le\sqrt{1-\beta}=t_+$ for all $\theta\in\Theta$, with no net.

\emph{Lower band at the net points:} for each $i$,
\[
  \BRAi{x,v_{\theta_i}(y)}\;=\;h\,nP+\BRAi{x,\Delta_i}+\BRAi{x,Z^n}
  \;\ge\;n\bigl(h_{\min}P-2\epsilon\bigr)\;=\;n\,h_{\min}P/2
\]
by $\cA_1$, $\cA_3$ and the choice $\epsilon=h_{\min}P/4$, so
with~\eqref{eq:app-interf-denom},
\[
  t_{\theta_i}(x,y)
  \;=\;\frac{\BRAi{x,v_{\theta_i}(y)}}{\sqrt{nP}\,\norm{v_{\theta_i}(y)}}
  \;\ge\;\frac{n\,h_{\min}P/2}{\sqrt{nP}\cdot C_v\sqrt n}
  \;=\;\frac{h_{\min}\sqrt P}{2C_v}\;=\;2\zeta.
\]

\emph{Extension from the net to all $\theta$:} for $\theta\in\Theta$ pick
$\theta_i$ with $\norm{\theta-\theta_i}\le\eta$. Since $y\in\cE_n^Y$ (event
$\cA_4$) and $x\ne0$, \cref{lem:awgn-cosine} applies and gives
$t_\theta(x,y)\ge t_{\theta_i}(x,y)-L_t\eta\ge2\zeta-\zeta=\zeta=t_-$.

Combining the three displays: $t_\theta(x,y)\in[t_-,t_+]$ for every
$\theta\in\Theta$, and $y\in\cE_n^Y$, so $(x,y)\in\cE_n$
by~\eqref{eq:app-interf-cEn}. This proves
$\PR{(X^n,Y^n)\notin\cE_n}\le e^{-n\delta}$ for $n\ge n_0$, completing
part~(ii) and the lemma.
\end{IEEEproof}

\section{Proof of \texorpdfstring{\cref{lem:awgn-isi}}{the E\&NN Typicality Lemma}}
\label{app:isi-typical}

The proof is that of Appendix~\ref{app:interf-typical} with three
substitutions, which we state; every step not listed is verbatim as there,
with $(\theta,\Theta,v_\theta(y)=y-S(\theta))$ replaced by
$(h,\cH,v_h(y)=G_hy)$ and the cosine
$t_h(x,y)=\BRAi{x,G_hy}/(\norm{x}\,\norm{G_hy})$. The true channel is
$Y^n=H_{h_*}X^n+Z^n$; all constants depend only on the family constants
of hypotheses (a)--(c).

\begin{IEEEproof}[Proof sketch: the three substitutions]
\emph{(H2) by the spectral bound in place of the projection.} Hypothesis
(a) gives $\gamma_G\norm{y}\le\norm{G_hy}\le\Gamma_G\norm{y}$ for every $h$
and $y$, so on the $h$-free event
$\cE_n^Y=\{(\sigma_{\min}/\sqrt2)\sqrt n\le\norm{y}\le C_Y\sqrt n\}$,
$C_Y\triangleq\Gamma_H\sqrt P+\sqrt2\,\sigma_{\max}$, (H2) holds
simultaneously for all $h$ with $c_0=\gamma_G\sigma_{\min}/\sqrt2$, and
the denominators are bracketed by
$\gamma_G(\sigma_{\min}/\sqrt2)\sqrt n\le\norm{G_hy}\le\Gamma_GC_Y\sqrt n$.
The probability bound
$\PR{Y^n\notin\cE_n^Y}\le e^{-n/12}+e^{-n/8}$ follows from
\cref{lem:awgn-noncentral-dom} (the lower cut) and the $\chi^2$
tails~\eqref{eq:app-chi2} on
$\cA_0=\{\norm{Z^n}^2\le2n\sigma^2\}$ (the upper cut).

\emph{The lower band by alignment in place of the mean-offset events.}
Hypothesis (c) bounds the signal part of the score
\emph{deterministically} on the shell:
$\BRAi{x,G_{h_i}H_{h_*}x}
 =x^\top\tfrac12\bigl(G_{h_i}H_{h_*}+(G_{h_i}H_{h_*})^\top\bigr)x
 \ge\kappa nP$. With the net event
$\cB_1=\bigcap_i\{\abs{\BRAi{X^n,G_{h_i}Z^n}}\le\tfrac14\kappa nP\}$
(Gaussian tail, variance $\le\sigma^2\Gamma_G^2nP$ per point), the score at
each net point is at least $\tfrac34\kappa nP$, giving
$t_{h_i}\ge3\kappa\sqrt P/(4\Gamma_GC_Y)=2\zeta$ and the band value
$t_-=\zeta$.

\emph{The upper band by per-point covariance in place of the common
projection.} The conditional covariance of $P_{x^\perp}G_{h_i}Y^n$ on
$x^\perp$ is bounded below by $\sigma^2\gamma_G^2$ (hypothesis (a) on unit
vectors of $x^\perp$), so \cref{lem:awgn-noncentral-dom}
and~\eqref{eq:app-chi2} give
$\norm{P_{x^\perp}G_{h_i}y}^2\ge\gamma_G^2\sigma_{\min}^2n/4$ off an event
of probability $Ne^{-(n-1)/12}$, whence
$1-t_{h_i}^2\ge\beta\triangleq\gamma_G^2\sigma_{\min}^2/(4\Gamma_G^2C_Y^2)$
and $t_+\triangleq\tfrac12(1+\sqrt{1-\beta})$ (and $0<t_-<t_+<1$:
$\kappa\le\Gamma_G\Gamma_H$ gives $\zeta\le3/8<\tfrac12\le t_+$).

The intrinsic definition of $\cE_n$, the constant mesh
$\eta=L_t^{-1}\min\{\zeta,\tfrac12(1-\sqrt{1-\beta})\}$ with
$L_t=2K/c_0$ (hypothesis (b) supplies (H1) on $\cE_n^Y$; (H2) was
established first, so \cref{lem:awgn-cosine} applies --- no circularity),
the $O(1)$ net size, and the extension off the net are exactly as in
Appendix~\ref{app:interf-typical}.
\end{IEEEproof}